\documentclass[11pt]{article}
\usepackage[T1]{fontenc}
\usepackage[utf8]{inputenc}
\usepackage{lmodern}
\usepackage[margin=1.05in]{geometry}
\usepackage{amsmath,amssymb,amsthm,mathtools}
\usepackage{microtype,booktabs,tabularx,array}
\usepackage{aliascnt}
\usepackage{enumitem}
\usepackage{xcolor}
\usepackage[unicode,colorlinks=true,linkcolor=blue!45!black,citecolor=blue!45!black,urlcolor=blue!45!black]{hyperref}
\usepackage[nameinlink,noabbrev,capitalise]{cleveref}
\setlist{itemsep=2pt,topsep=4pt}
\numberwithin{equation}{section}
\newtheorem{theorem}{Theorem}[section]
\newtheorem*{inversionrestated}{Lemma~\ref{lem:inversion}}
\newtheorem*{operatorboundrestated}{Corollary~\ref{cor:main-envelope}}
\newtheorem*{stateboundrestated}{Corollary~\ref{cor:state-envelope}}
\newaliascnt{lemma}{theorem}
\newtheorem{lemma}[lemma]{Lemma}
\aliascntresetthe{lemma}
\crefname{lemma}{Lemma}{Lemmas}
\Crefname{lemma}{Lemma}{Lemmas}
\newaliascnt{proposition}{theorem}
\newtheorem{proposition}[proposition]{Proposition}
\aliascntresetthe{proposition}
\crefname{proposition}{Proposition}{Propositions}
\Crefname{proposition}{Proposition}{Propositions}
\newaliascnt{corollary}{theorem}
\newtheorem{corollary}[corollary]{Corollary}
\aliascntresetthe{corollary}
\crefname{corollary}{Corollary}{Corollaries}
\Crefname{corollary}{Corollary}{Corollaries}
\theoremstyle{definition}
\newaliascnt{definition}{theorem}
\newtheorem{definition}[definition]{Definition}
\aliascntresetthe{definition}
\crefname{definition}{Definition}{Definitions}
\Crefname{definition}{Definition}{Definitions}
\newaliascnt{assumption}{theorem}

\aliascntresetthe{assumption}
\crefname{assumption}{Assumption}{Assumptions}
\Crefname{assumption}{Assumption}{Assumptions}
\theoremstyle{remark}
\newaliascnt{remark}{theorem}
\newtheorem{remark}[remark]{Remark}
\aliascntresetthe{remark}
\crefname{remark}{Remark}{Remarks}
\Crefname{remark}{Remark}{Remarks}
\newcommand{\C}{\mathbb C}
\newcommand{\F}{\mathbb F}
\newcommand{\E}{\mathbb E}
\newcommand{\HH}{\mathcal H}
\newcommand{\Stab}{\operatorname{Stab}}
\newcommand{\Cliff}{\operatorname{Cliff}}
\newcommand{\Pauli}{\mathcal P}
\newcommand{\PG}{\overline{\mathcal P}}
\newcommand{\Zgrp}{\mathcal Z}
\newcommand{\LC}{\Lambda_{\mathrm C}}
\newcommand{\LS}{\Lambda_{\mathrm S}}
\newcommand{\LM}{\Lambda_{\mathrm M}}
\newcommand{\rC}{\chi_{\mathrm C}}
\newcommand{\FS}{F_{\mathrm S}}
\newcommand{\Bnd}{\mathcal B}
\newcommand{\Snd}{\mathcal S}
\newcommand{\QAND}{Q}
\newcommand{\tr}{\operatorname{Tr}}
\newcommand{\rank}{\operatorname{rank}}
\newcommand{\sr}{\operatorname{sr}}
\newcommand{\SR}{\operatorname{SR}}
\newcommand{\supp}{\operatorname{supp}}
\newcommand{\Span}{\operatorname{span}}
\newcommand{\poly}{\operatorname{poly}}
\newcommand{\negl}{\operatorname{negl}}
\newcommand{\AND}{\operatorname{AND}}
\newcommand{\NOR}{\operatorname{NOR}}

\newcommand{\eps}{\varepsilon}
\newcommand{\ket}[1]{\lvert #1\rangle}
\newcommand{\bra}[1]{\langle #1\rvert}
\newcommand{\ip}[2]{\langle #1,#2\rangle}
\newcommand{\norm}[1]{\lVert #1\rVert}
\newcommand{\floor}[1]{\left\lfloor #1\right\rfloor}
\newcommand{\ceil}[1]{\left\lceil #1\right\rceil}
\newcommand{\J}[1]{\lvert #1\rangle\!\rangle}
\newcommand{\Dtr}{D_{\mathrm{tr}}}
\DeclareMathOperator{\diag}{diag}
\hypersetup{pdftitle={Improved bounds on stabilizer extent and Clifford rank},pdfauthor={Pulkit Sinha and Benjamin Lovitz}}
\title{Improved bounds on stabilizer extent and Clifford rank}
\author{%
Pulkit Sinha\textsuperscript{1,2} \qquad Benjamin Lovitz\textsuperscript{3}\\[0.6em]
\small\textsuperscript{1}David R. Cheriton School of Computer Science, University of Waterloo\\
\small\textsuperscript{2}Institute for Quantum Computing, University of Waterloo\\
\small\textsuperscript{3}Concordia University
}

\begin{document}
\maketitle
\begin{abstract}
We prove that every pure state of stabilizer rank at most $k$ has stabilizer
extent at most $2^{O(\sqrt{k\log(k+1)})}$, and establish the analogous
bound for the squared Clifford coefficient norm of Clifford-rank-$k$ operators.
This implies stabilizer fidelity at least $2^{-O(\sqrt{k\log(k+1)})}$, resolving the quantitative conjecture of \cite{MT25}, and proves an $\Omega(n^2/\log n)$ lower bound for the approximate stabilizer rank of tensor powers of any non-stabilizer qubit state. The latter result generalizes the best-known lower bound for tensor powers of $T$-states \cite{MT24} to arbitrary non-stabilizer qubit states, including magic states.

As a consequence of the Clifford rank--norm inequality, we obtain an $\Omega(n^2/\log n)$ lower bound for exact representations of $n$-bit AND function by quadratic phases, improving the
previous best-known linear bound. Further consequences rule out pseudorandom state
and unitary ensembles with approximate stabilizer and Clifford rank $O((\log n)^2/\log\log n)$, respectively, a $\log n$ improvement over prior work \cite{KS26}. We also obtain tomography algorithms for states of stabilizer rank at most $k$, with $\poly(n)2^{O(\sqrt{k\log(k+1)})}$ time and copy complexity, a nearly square-root improvement in the exponent over the best-known algorithm.
\end{abstract}

% \clearpage
% \setcounter{tocdepth}{1}
% \tableofcontents
% \clearpage

\section{Introduction}\label{sec:intro}

The Gottesman--Knill theorem identifies a large class of quantum
computations that can be simulated efficiently on a classical computer:
Clifford circuits acting on stabilizer states, with Pauli measurements
\cite{Got97,Got98,AG04}. The reason is that a stabilizer state has a short
classical description that remains easy to update throughout the
computation. This remains true even when the state is highly entangled.
For example, a Bell pair can be specified by its two Pauli stabilizers,
$X\otimes X$ and $Z\otimes Z$, rather than by listing its amplitudes.
More generally, $n$ independent commuting Pauli constraints specify an
$n$-qubit stabilizer state. The description grows polynomially with $n$,
even though the vector has $2^n$ amplitudes.
To understand the cost of simulating more general quantum computations,
it is therefore natural to ask how far this description can be extended.

One approach is to write the input as a linear combination of stabilizer
states and follow the evolution of each term. Clifford gates preserve
the stabilizer property, and Pauli measurement outcomes send each term
to another stabilizer state, up to a scalar, or to zero. The number of
terms thus measures the size of the description that the simulator must
maintain. This idea underlies simulation methods based on stabilizer
rank \cite{BSS16,BG16}. It is particularly useful when the non-Clifford
resources of a computation are supplied as magic states: the rest of
the computation can then be handled within the stabilizer formalism.

The usefulness of a decomposition depends on the freedom to choose its
terms. Expanding in the computational basis always gives stabilizer
states, but it can miss the structure that makes simulation efficient.
For instance, $\ket+^{\otimes n}$ has nonzero amplitude on every
computational-basis vector, yet is itself a single stabilizer state.
Allowing all stabilizer states as building blocks can therefore compress
a description enormously. Finding the best such representation is a
mathematical problem in its own right.

\subsection{Stabilizer rank and extent}
For a normalized state $\ket\psi$, its \emph{stabilizer rank}
$\chi(\ket\psi)$ is the smallest number of terms in a decomposition
\[
 \ket\psi=\sum_{j=1}^k c_j\ket{s_j},
\]
where the $c_j$ are complex coefficients and the $\ket{s_j}$ are normalized stabilizer states.
% For tensor powers of a magic state,
% this allows a decomposition to exploit structure shared across different
% copies. A rank lower bound must rule out every such possibility.

A second measure, the \emph{stabilizer extent}, asks for small coefficients:
\[
 \xi(\ket\psi)=\min_{\ket\psi=\sum_j c_j\ket{s_j}}
                    \left(\sum_j|c_j|\right)^2.
\]
% The minimizations need not select the same decomposition. Rank favors
% few terms, while extent allows additional terms if they reduce the total
% coefficient size. 
Extent controls sparsification: a decomposition with
small coefficient sum can be sampled to produce a short approximate
decomposition \cite{BBC19}. Thus both quantities describe useful ways
of extending stabilizer simulation beyond the Gottesman--Knill setting.
More quantitatively, if $\chi_\delta(\ket\psi)$ denotes the minimum
stabilizer rank of a vector within Euclidean distance $\delta$ of
$\ket\psi$, sparsification gives
\[
 \chi_\delta(\ket\psi)\le1+\frac{\xi(\ket\psi)}{\delta^2}
 \qquad (\delta>0)
\]
\cite{BBC19}. A bound on extent therefore gives an upper bound on the
number of terms needed for approximation, even when the decomposition
attaining the extent uses many terms. The converse direction is the
one relevant to lower bounds: can a state with large extent nevertheless
have a short exact stabilizer decomposition?

Extent is closely related to the largest overlap with a stabilizer state.
For a normalized state $\ket\psi$, define its \emph{stabilizer fidelity} by
\[
 \FS(\ket\psi)=\max_{\ket s\text{ stabilizer}}|\ip{s}{\psi}|^2.
\]
The triangle inequality applied to a stabilizer decomposition gives
$\xi(\ket\psi)\ge1/\FS(\ket\psi)$ \cite{BBC19}. Thus a state with
small overlap with every stabilizer state must have large extent.

Extent is also known exactly for important families of states.
It is multiplicative for tensor
products whose individual factors have at most three qubits
\cite{BBC19}. For the familiar magic state
$\ket T=(\ket0+e^{i\pi/4}\ket1)/\sqrt2$, these results give
\[
 \xi(\ket T^{\otimes n})=\cos^{-2n}(\pi/8).
\]
%Such formulas give a precise measure of the coefficient cost for
%important families of states.
Multiplicativity means that allowing entangled stabilizer states across the copies cannot improve the optimal coefficient sum obtained from single-copy decompositions. We note that multiplicativity can fail for general states \cite{HMMVG21}. 

Even though the extent of $\ket T^{\otimes n}$ is known exactly, the above
formula does not determine how many stabilizer terms are needed to
represent it. The corresponding rank questions are much harder because
of cancellation. For an
orthonormal expansion of a unit vector, the squared coefficients sum
to one, and Cauchy--Schwarz bounds their absolute sum by $\sqrt k$.
Stabilizer states in a decomposition need not be orthogonal. Large
coefficients may cancel, leaving a normalized state even though the
coefficient sum is large. A lower bound on extent therefore becomes a
rank lower bound only if we can control how much this cancellation helps.

% It is also important to distinguish a property of a particular
% representation from a property of the state. A supplied decomposition
% may have a large coefficient sum even if another decomposition of the
% same state has a small one. To bound extent, it is enough to exhibit
% one good decomposition. We can change the stabilizer states, use more
% terms, and take advantage of structure that was not visible in the
% original representation. This distinction makes it possible to seek
% stronger extent bounds than estimates on the original coefficients.

This leads to the question studied here: how strongly does a bound on
stabilizer rank constrain extent?
% independently of the number of qubits?
% The independence from the number of qubits is important. Expanding any
% normalized state in the computational basis gives the elementary bound
% $\xi(\ket\psi)\le2^n$, but this says little about a state represented by
% only a few stabilizer terms. We seek a bound that remains useful as the
% ambient Hilbert space grows, provided the number of terms stays small.
% Such a result would let us turn the known extent of a family of states
% into a lower bound on its rank, without analyzing all possible
% decompositions of each state separately.
Mehraban--Tahmasbi \cite{MT25} showed that bounded rank forces positive
stabilizer fidelity independent of the ambient dimension, and conjectured
that $\FS(\ket\psi)\ge2^{-k}/C$ for normalized states of stabilizer rank
at most $k$, where $C>0$ is a universal constant. Kalra--Sinha \cite{KS26}
used Barnes--Wall lattices to control the coefficients in any supplied
linearly independent stabilizer decomposition. Their result implies
$\xi(\ket\psi)\le2^{O(k\log(k+1))}$ when $\chi(\ket\psi)\le k$.
We obtain the stronger bound $\xi(\ket\psi)\le2^{O(\sqrt{k\log(k+1)})}$
%by allowing the decomposition itself to change.
by optimizing the decomposition.
This implies stabilizer
fidelity at least $2^{-O(\sqrt{k\log(k+1)})}$, resolving the quantitative
conjecture of Mehraban--Tahmasbi \cite{MT25}, and moreover, improving their conjectured scaling by a square-root factor in the exponent.

\subsection{Main results}
\begin{theorem}[Stabilizer rank--extent bound]\label{thm:intro-state}
There is a universal constant $c>0$ such that, for every $n\ge1$, every
$k\ge1$, and every vector $\ket v$ on $n$ qubits with $\chi(\ket v)\le k$,
\begin{equation}\label{eq:intro-state-main}
 \xi(\ket v)\le2^{c\sqrt{k\log_2(k+1)}}\norm{\ket v}^2.
\end{equation}
\end{theorem}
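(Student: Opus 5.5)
The plan is to prove \eqref{eq:intro-state-main} by choosing a better decomposition of $\ket v$, not by controlling the coefficients of the given one. Write $\ket v=\sum_{j=1}^k c_j\ket{s_j}$ with the $\ket{s_j}$ distinct normalized stabilizer states. I will use two standard facts throughout. First, extent is subadditive in the form $\sqrt{\xi(\ket a+\ket b)}\le\sqrt{\xi(\ket a)}+\sqrt{\xi(\ket b)}$. Second, overlaps of stabilizer states are quantized: $|\ip{s_i}{s_j}|\in\{0\}\cup\{2^{-d/2}:d\ge0\}$, where $d$ counts the stabilizer generators in which the two states differ. Fix a threshold $t\approx\sqrt{k\log_2(k+1)}$, to be tuned at the end. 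Call $\ket{s_i},\ket{s_j}$ \emph{close} if $0<d(s_i,s_j)\le t$, and \emph{far} otherwise (orthogonal, or $d>t$).

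\textbf{Step 1 (well-conditioned case).} Suppose all pairs are far. Then the Gram matrix $G$ satisfies $\norm{G-I}\le k\,2^{-t/2}\le 1/2$ once $t\ge 2\log_2(2k)$, by Gershgorin. Hence $\sum_j|c_j|^2\le 2\norm{\ket v}^2$, and Cauchy--Schwarz gives $\xi(\ket v)\le(\sum_j|c_j|)^2\le 2k\norm{\ket v}^2$.

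\textbf{Step 2 (absorbing close pairs into small codes).} If $\ket{s_i},\ket{s_j}$ are close, they share a stabilizer subgroup of codimension at most $t$. So both lie in a common stabilizer code $\mathcal C$ of dimension $2^{d}\le 2^t$. This code has an orthonormal basis of stabilizer states. Hence any vector $\ket w\in\mathcal C$ satisfies $\xi(\ket w)\le 2^{d}\norm{\ket w}^2$, by Cauchy--Schwarz in that basis.

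The plan is to group the $\ket{s_j}$ into clusters by this closeness relation and replace each cluster by the code it generates. I then rerun Step 1 on the resulting mutually far pieces, using a block version of the Gram argument. For that I need the projectors onto codes from different clusters to be almost orthogonal. This should follow from the same overlap quantization applied to the stabilizer states spanning each code, provided the codes remain pairwise far. With those ingredients, $\ket v=\sum_\alpha\ket{w_\alpha}$ with $\ket{w_\alpha}\in\mathcal C_\alpha$, $\sum_\alpha\norm{\ket{w_\alpha}}^2\lesssim\norm{\ket v}^2$, and each $\xi(\ket{w_\alpha})\le\dim\mathcal C_\alpha\,\norm{\ket{w_\alpha}}^2$, one gets
\[
\xi(\ket v)\le k\max_\alpha\dim\mathcal C_\alpha\cdot O(\norm{\ket v}^2).
\]

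\textbf{Step 3 (main obstacle: controlling code dimension along a cluster).} Naively, a chain of $m$ pairwise-close states spans a code of dimension up to $2^{mt}$, which is far too large. The key lemma I would aim for is a dichotomy. Either the cluster's stabilizer groups share a common subgroup of codimension $O(\sqrt{k\log k})$, or the cluster contains some state at distance greater than $t$ from the rest. In the second case I split the cluster and recurse. The intended mechanism is that each ``new'' direction added to the code halves the overlaps with earlier states, so after $r$ new directions those overlaps are at most $2^{-r/2}$.

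This yields a recurrence of the form $f(k)\le\max\bigl(2^{O(t)},\,2^{O(k\log k/t)}\bigr)\cdot\poly(k)$, where $f(k)$ denotes the worst extent-to-norm ratio over rank-$k$ vectors. Balancing the two terms at $t=\sqrt{k\log k}$ gives the claimed $2^{O(\sqrt{k\log(k+1)})}$.

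If this clustering lemma resists, I would fall back to a recursion driven by Pauli projections. Pick a Pauli $P$ separating the cluster and write $\ket v=\Pi_+\ket v+\Pi_-\ket v$. Each $\Pi_\pm\ket{s_j}$ is $0$, $\ket{s_j}$, or $2^{-1/2}$ times a stabilizer state. I would then use the $\sqrt{\xi}$-subadditivity together with a potential function of the form $\log\dim(\text{code})+\text{(number of clusters)}$.
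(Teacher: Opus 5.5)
\textbf{There is a genuine gap: Step 3, which you rightly flag as the main obstacle, does not go through.} Steps 1--2 are close in spirit to the paper's terminal-cluster estimate, but the dichotomy you aim for in Step 3 is false.

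On $k$ qubits take $\ket{s_j}=\ket{+}^{\otimes(j-1)}\ket{0}^{\otimes(k-j+1)}$, so $|\ip{s_i}{s_j}|=2^{-|i-j|/2}$. Every state is at distance one from a neighbour. Hence no state is far from the rest, and no splitting of the cluster produces mutually far pieces. Yet the only Paulis stabilizing all the $\ket{s_j}$ are $I$ and $Z_k$. So the smallest code containing the cluster has dimension $2^{k-1}$, i.e.\ codimension $k-1$, far above $O(\sqrt{k\log k})$. Your recurrence $f(k)\le\max(2^{O(t)},2^{O(k\log k/t)})\poly(k)$ is therefore asserted without any step that actually lowers the number of terms.

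Step 2 has a separate flaw: pairwise farness of the original states does not make the cluster codes nearly orthogonal. On $2t$ qubits, the states $\ket{0}^{\otimes 2t},\ket{+}^{\otimes t}\ket{0}^{\otimes t},\ket{+}^{\otimes 2t}$ form one cluster whose code is all of $\HH_{2t}$. Meanwhile $\ket{-}^{\otimes 2t}$ is orthogonal to the last two and at distance $2t$ from the first, so it forms a separate cluster whose code lies inside the first.

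The paper avoids this differently. It merges clusters whenever their code projectors satisfy $\norm{P_iP_j}_{\mathrm{op}}\ge(2K)^{-1}$, bounds the merged dimension by $d_1d_2/\eta^2$ (\cref{lem:merge}), and stops as soon as a cluster reaches $r$ terms. Then either a code of dimension at most $D=(2k)^{4r}$ contains $r$ of the terms, or the terminal clusters are genuinely nearly orthogonal and $\LS(\ket v)^2\le D\norm{\ket v}^2$.

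\textbf{The missing idea is block elimination by averaged Pauli projections, used in the first alternative.} Suppose a code with projector $P$ and signed stabilizer group $H$ contains $r$ of the $m$ terms of $\ket v$, and put $Q_g=(I-g)/2$. Then $\ket v=P\ket v+2\E_{g\in H}Q_g\ket v$ and $\E_g\norm{Q_g\ket v}^2=\tfrac12\norm{(I-P)\ket v}^2$. Each $Q_g$ annihilates all $r$ selected terms at once and sends every other term to a multiple of a stabilizer state or to zero. So $Q_g\ket v$ has at most $m-r$ terms. Let $b_{m-r}$ be the bound for such vectors, and use the code basis for $P\ket v$. One gets $\LS(\ket v)^2\le(D+2b_{m-r})\norm{\ket v}^2$. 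Iterating over $k/r$ rounds gives $(2k)^{4r}2^{k/r+1}$, and $r\approx\sqrt{k/\log k}$ yields the theorem.

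Your fallback does not substitute for this:
\begin{itemize}
\item A single split $\Pi_\pm$ costs a factor two in $\xi$, yet need not lower the rank of $\Pi_+\ket v$.
\item Splitting successively along generators $g_1,\dots,g_\ell$ of $H$ does give orthogonal lower-rank pieces. But recombining them costs a factor of order $\ell$, which can grow with $n$.
\item A potential involving $\log\dim$ of the cluster code is already of order $k$ on the chain above. Charging a factor two per unit then gives only $2^{O(k)}$.
\end{itemize}
It is the uniform average over the whole group $H$ that makes removing an entire block cost only a dimension-free factor of two.
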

In particular, every unit vector in the span of $k$
stabilizer states has a decomposition of small total coefficient size.
The stabilizer states in this new decomposition may differ from those
spanning the original space. This freedom is essential to our proof:
we replace groups of terms by better decompositions inside stabilizer codes.

The same question can be asked for operators. Let $\rC(A)$ be the
minimum number of Clifford unitaries whose linear combination equals
$A$, and define the Clifford coefficient norm by
\[
 \LC(A)=\min_{A=\sum_j c_j C_j}\sum_j|c_j|,
 \qquad C_j\text{ Clifford unitaries}.
\]
Here we retain the unsquared norm, so $\LC(A)^2$ is the analogue of
state extent. We write $\norm A_F=\sqrt{\tr(A^\dagger A)}$ for the
Frobenius norm.
\begin{theorem}[Clifford rank--coefficient norm bound]\label{thm:intro-main}
There is a universal constant $c>0$ such that, for every $n\ge1$, every
$k\ge1$, and every operator $A$ on $n$ qubits with $\rC(A)\le k$,
\begin{equation}\label{eq:intro-main}
 \LC(A)^2\le2^{c\sqrt{k\log_2(k+1)}}2^{-n}\norm A_F^2.
\end{equation}
\end{theorem}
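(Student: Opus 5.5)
Our plan is to reduce \cref{thm:intro-main} to the state version \cref{thm:intro-state} via the Choi--Jamio{\l}kowski isomorphism. We then check that the coefficient norms transfer with the right normalization.

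\emph{Step 1: the Choi map.} For an operator $A$ on $n$ qubits, set $\J{A}=(A\otimes I)\J{\Phi}$, where $\J{\Phi}=2^{-n/2}\sum_x\ket{x}\ket{x}$ is the normalized maximally entangled state on $2n$ qubits. Then $\norm{\J A}^2=2^{-n}\norm A_F^2$. For a Clifford unitary $C$, the vector $\J C$ is a stabilizer state, since $\J\Phi$ is stabilized by $P\otimes \overline P$ and conjugation by $C\otimes I$ preserves this. Hence a decomposition $A=\sum_{j\le k}c_jC_j$ gives $\chi(\J A)\le k$. \Cref{thm:intro-state} then yields a decomposition
\[
\J A=\sum_i d_i\ket{s_i},\qquad \Bigl(\sum_i|d_i|\Bigr)^2\le 2^{c\sqrt{k\log_2(k+1)}}\,2^{-n}\norm A_F^2,
\]
with $\ket{s_i}$ stabilizer states on $2n$ qubits.

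\emph{Step 2: converting back to Cliffords.} This is the main obstacle. A general $2n$-qubit stabilizer state is the Choi state of a stabilizer \emph{operator}, such as a projector times a Clifford or a partial isometry, and need not be a Clifford unitary. We therefore need a lemma of the form: for every stabilizer state $\ket s$ on $2n$ qubits, the operator $S$ with $\J S=\ket s$ satisfies $\LC(S)\le 1$. Our first attempt is to write $\ket s$ as a Clifford $(C_1\otimes C_2)$ applied to a canonical form: Bell pairs on $r$ pairs, times product computational states on the remaining qubits of each side. By the transpose trick, $S=C_1 S_0 C_2^{T}$, where $S_0$ is, up to Clifford relabeling, $2^{-(n-r)/2}\cdot 2^{n/2}\cdot 2^{-n/2}\,(I_{2^r}\otimes \ket0\!\bra0^{\otimes(n-r)})$. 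Each rank-one factor satisfies $\ket0\!\bra0=(I+Z)/2$, so it has Clifford norm $1$. Tracking the normalization, we have $\norm{\ket s}=1$, which gives $\norm{S}_F=2^{n/2}$, while $S_0$ is a scaled projector of rank $2^r$. Its expansion as an average of $2^{n-r}$ Pauli $Z$-strings gives $\LC(S_0)=2^{n/2}2^{-r/2}\cdot 2^{-(n-r)}\cdot 2^{n-r}$ times the projector scale. We expect the factors to be checked to give $\LC(S)\le 1$, or at least $\LC(S)\le 2^{O(1)}$ uniformly. If a dimension-dependent factor such as $2^{(n-r)/2}$ survives, we would instead use the Clifford-twirl structure more carefully, or restrict Step 1 to the subspace of Choi states of Clifford-span operators. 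That subspace is the span of Choi states of unitaries, and the codes used in the proof of \cref{thm:intro-state} could be chosen invariant under the symmetry $\J{P}\mapsto$ Pauli on one side.

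\emph{Step 3: concluding.} Granting the lemma, we apply the inverse Choi map, which is linear, to the decomposition from Step 1. This gives $A=\sum_i d_i S_i$, so
\[
\LC(A)\le\sum_i|d_i|\,\LC(S_i)\le\sum_i|d_i|.
\]
Squaring gives \eqref{eq:intro-main}. The normalization $2^{-n}\norm A_F^2$ appears exactly as $\norm{\J A}^2$, which is consistent with the statement and is a sanity check on the reduction. We expect the delicate point to be Step 2, namely ensuring that non-unitary stabilizer Choi operators cost at most constant Clifford norm.
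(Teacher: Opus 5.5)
\textbf{There is a genuine gap: the Step 2 lemma is false.} This is exactly the obstruction the paper's proof is built to avoid.

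Let $\ket s$ be a stabilizer state on $2n$ qubits with entanglement deficit $a$ across the $n:n$ cut. Up to local Cliffords, $\ket s=\ket{\Phi_2}^{\otimes(n-a)}\otimes\ket{00}^{\otimes a}$. Each unentangled pair $\ket{00}$ is the Choi vector of $\sqrt2\,\ket0\!\bra0$, not of $\ket0\!\bra0$. So, up to Cliffords on each side, $S$ is $2^{a/2}$ times a projector of rank $2^{n-a}$. Your prefactor $2^{-(n-r)/2}$ has the wrong sign in the exponent; it is inconsistent with your own $\norm{S}_F=2^{n/2}$.

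Since $\LC\ge\norm{\cdot}_{\mathrm{op}}$, and the projector is an average of $Z$-strings, $\LC(S)=2^{a/2}$ exactly. For example, $\ket{0^{2n}}$ is the Choi vector of $2^{n/2}\ket{0^n}\!\bra{0^n}$, whose Clifford norm is $2^{n/2}$. So the dimension-dependent factor you worried about does survive.

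\cref{thm:intro-state} says nothing about the deficits of the stabilizer states it produces. Its proof introduces code bases and postselected states of arbitrary entanglement. The black-box reduction therefore only gives the bound with an extra factor as large as $2^n$, which is not dimension-free.

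Your fallbacks do not repair this. The span of Choi vectors of operators in the span of Cliffords is all of $\HH_{2n}$, because the Pauli Choi states already form an orthonormal basis. Restricting to that subspace is therefore vacuous. Choosing codes \emph{invariant under a symmetry} is not an argument as stated.

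\textbf{What is missing} is control of the entanglement of every state introduced during the recursion. The paper reruns the clustering and block-elimination argument while tracking the deficit, using three facts:
\begin{itemize}
\item A code of dimension $d$ containing a stabilizer state of deficit at most $a$ has an orthonormal basis of Pauli translates of that state, all with the same deficit. Hence every $\ket v$ in the code satisfies $\LM(\ket v)^2\le2^ad\norm{\ket v}^2$ (\cref{lem:code-basis}).
\item Each postselection $(I-g)/2$ raises the deficit by at most one (\cref{lem:deficit-step}, via \cite[Lemma~3.4]{DJJMP26}).
\item The induction on the number of terms is carried out simultaneously for all deficits $a$, because elimination shortens the decomposition but may raise the deficit.
\end{itemize}
Starting from the deficit-zero Choi vector of $A$, only $\floor{k/r}$ eliminations occur. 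The recurrence becomes $b_t=D+4b_{t-r}$ instead of $D+2b_{t-r}$. The choice $r\approx\sqrt{k/\log k}$ still yields the exponent $O(\sqrt{k\log k})$.
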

The factor $2^{-n}\norm A_F^2$ is the squared norm of the image of $A$
under the normalized Choi map; it equals one when $A$ is unitary. The theorem also
applies to nonunitary operators, which will be important for the application to Boolean
functions below. We prove explicit versions of these results that do not rely on unknown constants in
\cref{thm:state-finite-main,thm:finite-main}.

\subsection{Consequences}
\paragraph{Fidelity and magic-state rank.}
A small coefficient norm forces a state to have appreciable overlap
with at least one stabilizer state. Indeed, taking the inner product
of a normalized $\ket\psi$ with its own decomposition gives
$1\le\sqrt{\FS(\ket\psi)}\sum_j|c_j|$. \cref{thm:intro-state} therefore implies
\[
 \chi(\ket\psi)\le k
 \quad\Longrightarrow\quad
 \FS(\ket\psi)\ge2^{-O(\sqrt{k\log(k+1)})}.
\]
More generally,
if $V$ is spanned by $k$ stabilizer states, then
\[
 \norm{P_V\ket\phi}^2
 \le2^{O(\sqrt{k\log(k+1)})}\FS(\ket\phi).
\]
The latter result controls the approximate stabilizer rank, as a state
close to $V$ must have a large projection onto $V$. For every fixed non-stabilizer one-qubit pure state $\ket\phi$, stabilizer
fidelity of $\ket\phi^{\otimes n}$ decays exponentially in $n$
\cite{BBC19}. Putting these together, we obtain the following:
\begin{corollary}
Let $\ket \phi \in \mathbb{C}^2$ be any non-stabilizer pure state. Then
\[
 \chi_\delta(\ket\phi^{\otimes n})
 =\Omega_{\phi,\delta}(n^2/\log n)
 \qquad\text{for every fixed }0\le\delta<1.
\]
\end{corollary}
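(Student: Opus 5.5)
The argument combines the projection inequality stated just before the corollary with the exponential decay of stabilizer fidelity for tensor powers.

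Fix $0\le\delta<1$. Let $k=\chi_\delta(\ket\phi^{\otimes n})$, and pick $\ket w$ with $\chi(\ket w)\le k$ and $\norm{\ket w-\ket\phi^{\otimes n}}\le\delta$. Write $\ket w=\sum_{j=1}^k c_j\ket{s_j}$ and let $V=\Span\{\ket{s_j}\}$, a space spanned by $k$ stabilizer states. Since $\ket w\in V$, we have
\[
 \norm{P_V\ket\phi^{\otimes n}}\ge|\ip{w/\norm{w}}{\phi^{\otimes n}}|.
\]
Moreover, $\norm{\ket w-\ket\phi^{\otimes n}}\le\delta$ gives $\norm{\ket\phi^{\otimes n}-P_V\ket\phi^{\otimes n}}\le\delta$, because $P_V$ is the closest-point projection onto $V$. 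Pythagoras then yields
\[
 \norm{P_V\ket\phi^{\otimes n}}^2\ge1-\delta^2>0,
\]
a constant depending only on $\delta$. The case $\delta=0$ is included.

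Next apply the projection inequality, which follows from \cref{thm:intro-state}. The derivation is as follows. Take the unit vector $\ket u=P_V\ket\phi/\norm{P_V\ket\phi}$. Since $\ket u\in V$, we have $\chi(\ket u)\le k$, so $\ket u=\sum_j a_j\ket{t_j}$ with $\sum_j|a_j|\le2^{O(\sqrt{k\log(k+1)})}$. Then
\[
 \norm{P_V\ket\phi}=\ip{u}{\phi}\le\sum_j|a_j||\ip{t_j}{\phi}|\le2^{O(\sqrt{k\log(k+1)})}\sqrt{\FS(\ket\phi)}.
\]
Applied to $\ket\phi^{\otimes n}$, this gives
\[
 1-\delta^2\le2^{c\sqrt{k\log_2(k+1)}}\,\FS(\ket\phi^{\otimes n}).
\]

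Now invoke the cited fact from \cite{BBC19}: for a fixed non-stabilizer one-qubit state, $\FS(\ket\phi^{\otimes n})\le\beta^n$ for some $\beta=\beta(\phi)<1$. This holds because single-qubit stabilizer fidelity is multiplicative up to three qubits and, in general, is at most the product bound obtained for product states. I would simply cite this as done in the text. Combining the two bounds,
\[
 c\sqrt{k\log_2(k+1)}\ge n\log_2(1/\beta)-\log_2\frac{1}{1-\delta^2}=\Omega_{\phi,\delta}(n).
\]
Hence $k\log(k+1)=\Omega(n^2)$ for large $n$.

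Finally, invert: if $k\le\epsilon n^2/\log n$ for a small $\epsilon$, then $k\log(k+1)\le\epsilon n^2(2\log n+O(1))/\log n=O(\epsilon n^2)$, a contradiction. Therefore $k=\Omega_{\phi,\delta}(n^2/\log n)$.

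The only real content is the projection inequality, that is, the reduction to \cref{thm:intro-state}, together with the external fidelity decay. The remaining steps are elementary, and I expect no genuine obstacle beyond correctly justifying the exponential decay of $\FS(\ket\phi^{\otimes n})$ for arbitrary non-stabilizer $\ket\phi$, which I take from \cite{BBC19}.
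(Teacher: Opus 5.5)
Your proposal is correct and follows essentially the paper's proof: the robust fidelity bound $\FS(\ket\phi^{\otimes n})\ge(1-\delta^2)/\Snd(k)$ (projection onto $V$, Pythagoras, and the overlap bound), combined with $\FS(\ket\phi^{\otimes n})=\FS(\ket\phi)^n$ and a logarithmic inversion. The one imprecision is your parenthetical justification of the decay: multiplicativity does not hold ``in general,'' but you do not need it, since the three-qubit multiplicativity of \cite{BBC19} applies directly to one-qubit factors and gives $\beta=\FS(\ket\phi)$, which is strictly less than $1$ because there are only six one-qubit stabilizer states.
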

This result is proven in \cref{cor:magic-rank}. Lower bounds on stabilizer rank have been developed through several
approaches \cite{PSV22,LS22,Labib22}. In particular,
Mehraban--Tahmasbi \cite{MT24} already proved a nearly quadratic
constant-error lower bound for $T$-state tensor powers. Our result generalizes this to arbitrary one-qubit non-stabilizer states. We remark that our proof technique is quite different from that of~\cite{MT24}, which was based on Clifford+T unitary synthesis.
% It also gives nearly quadratic rank
% when the squared target overlap is exponentially small, provided it is
% exponentially larger than the stabilizer fidelity
% (\cref{cor:tiny-overlap}).

\paragraph{Quadratic phases and AND.}
\cref{thm:intro-main} has a consequence for a classical representation
problem. How many quadratic phases $(-1)^{q(x)}$, with $q$ a Boolean
polynomial of degree at most two, are needed to express $\AND_n$ as a
complex linear combination? Write $R_2(f)$ for the minimum number
needed to express a function $f$, and $\QAND(n)=R_2(\AND_n)$.
We prove the following:
\begin{corollary}
It holds that
\begin{equation}\label{eq:intro-and}
 \QAND(n)=\Omega(n^2/\log n).
\end{equation}
\end{corollary}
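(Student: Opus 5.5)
We derive $\QAND(n)=\Omega(n^2/\log n)$ from \cref{thm:intro-main}, by turning a short quadratic-phase representation of $\AND_n$ into a low-Clifford-rank diagonal operator whose Clifford coefficient norm is provably large.

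\textbf{Encoding.} Suppose $\AND_n(x)=\sum_{j=1}^k a_j(-1)^{q_j(x)}$ with $\deg q_j\le2$. Set $D_q=\sum_x(-1)^{q(x)}\ket x\bra x$. Each $D_{q_j}$ is a diagonal Clifford unitary (a product of $Z$, $CZ$ and global phases). Hence the projector $A=\ket{1^n}\bra{1^n}=\sum_j a_jD_{q_j}$ satisfies $\rC(A)\le k$. Since $\norm A_F^2=1$, \cref{thm:intro-main} gives
\[
\LC(A)^2\le2^{c\sqrt{k\log_2(k+1)}}2^{-n}.
\]

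\textbf{Lower bound on $\LC(A)$.} We need $\LC(A)\ge2^{-n/2}\cdot2^{\Omega(n)}$; the naive duality bound gives only $\LC(A)\ge\norm{A}_{\mathrm{op}}=1/1$ relative to unitaries, which is too weak. Let $A=\sum_j c_jC_j$ be any Clifford decomposition. Taking the normalized Hilbert--Schmidt inner product with $A$ gives
\[
2^{-n}=2^{-n}\tr(A^\dagger A)\le\sum_j|c_j|\,2^{-n}|\tr(A^\dagger C_j)|=\sum_j|c_j|\,2^{-n}|\bra{1^n}C_j\ket{1^n}|,
\]
which only yields $\LC(A)\ge2^{-n}$. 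That is weaker than what we need by $2^{n/2}$.

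The right comparison is through Choi states. The normalized Choi vector of $A$ is $2^{-n/2}\ket{1^n}\ket{1^n}$, a product stabilizer state, so this route fails as well. The operator $A$ itself is therefore not the right target.

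Instead we use a non-stabilizer resource built from AND. Apply $A$ inside the operator $U=I-2A$, the multi-controlled $Z$ gate $C^{n-1}Z$. From the representation of $A$ we get $\rC(U)\le k+1$, and $U$ is unitary, so $2^{-n}\norm{U}_F^2=1$. \cref{thm:intro-main} then gives $\LC(U)^2\le2^{c\sqrt{(k+1)\log_2(k+2)}}$.

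For the lower bound, $U\ket{+}^{\otimes n}$ is the hypergraph state with a single $n$-edge. Its stabilizer fidelity is $1-\Theta(2^{-n})$, which is useless. Use instead $m=\lfloor n/3\rfloor$ disjoint CCZ gates on blocks of three qubits, applied via AND on each block.

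\textbf{Reduction to CCZ.} Restricting variables reduces $\AND_n$ to $\AND_3$ on one block, but we need $\AND_3$ on each block simultaneously. The product $\prod_b\AND_3(x_b)$ is an AND of $3m$ bits, so the identity $\prod_b(1-2\AND_3(x_b))$ only expands into $2^m$ terms.

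The better reduction is to pick one CCZ-like gate with exponentially small fidelity. Take $U$ acting on $\ket{\phi}=H^{\otimes n}\ket{1^n}$-type inputs via Clifford conjugations, which preserve $\rC$. The state $\ket\psi=U\ket{+}^{\otimes n}$ has $\xi\ge1/\FS$. Known results on $C^{n-1}Z$ synthesis (its $T$-count, equivalently its extent or Pauli-spectrum norm) give $\LC(C^{n-1}Z)^2\ge2^{\Omega(n)}$. This follows from a dual witness: the Pauli $\ell_1$ norm (stabilizer norm) of the Choi state of $C^{n-1}Z$ is $2^{\Omega(n)}$ times that of any Clifford.

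\textbf{Conclusion and main obstacle.} Combining the two bounds gives $\sqrt{k\log k}=\Omega(n)$, hence $k=\Omega(n^2/\log n)$. The step we expect to be the main obstacle is establishing the dual witness, namely $\LC(C^{n-1}Z)\ge2^{\Omega(n)}$. We would prove it via a multiplicative stabilizer-norm functional on Choi states that is bounded by $1$ on Cliffords and is exponentially large for $C^{n-1}Z$, computed directly from the Walsh spectrum of $\AND_n$.
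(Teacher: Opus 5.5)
\textbf{There is a genuine gap.} Your encoding step is correct, and it is exactly where the paper starts. The operator $A=\ket{1^n}\bra{1^n}=\sum_j a_jD_{q_j}$ has $\rC(A)\le k$ and $\norm{A}_F=1$, so \cref{thm:intro-main} gives $\LC(A)^2\le 2^{c\sqrt{k\log_2(k+1)}}2^{-n}$. The gap comes next, where you throw away the bound that finishes the proof.

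Every Clifford has operator norm one, so the triangle inequality gives $\LC(A)\ge\norm{A}_{\mathrm{op}}=1$; this is \eqref{eq:op-vs-atomic}. Since $1=2^{-n/2}\cdot2^{n/2}$, this is precisely the $2^{-n/2}\cdot2^{\Omega(n)}$ you said you need. It is not "too weak". Combining the two bounds gives $1\le2^{c\sqrt{k\log_2(k+1)}-n}$, hence $k\log_2(k+1)\ge n^2/c^2$. Logarithmic inversion (\cref{lem:inversion}) then yields $k=\Omega(n^2/\log n)$. The paper's proof is exactly this, phrased through stable rank and support uncertainty: $\norm{D_f}_{\mathrm{op}}^2\le\LC(D_f)^2\le\Bnd(k)2^{-n}\norm{D_f}_F^2$, and $\AND_n$ has support one.

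Your Hilbert--Schmidt computation also has an arithmetic slip. From $2^{-n}\le 2^{-n}\sum_j|c_j|\,|\bra{1^n}C_j\ket{1^n}|\le2^{-n}\sum_j|c_j|$ you get $\sum_j|c_j|\ge1$, not $2^{-n}$, because the factor $2^{-n}$ appears on both sides. So your own calculation already proved what was needed.

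\textbf{The replacement argument cannot work, because its key claim is false.} You have $\ket{1^n}\bra{1^n}=\prod_i(I-Z_i)/2=2^{-n}\sum_{S\subseteq[n]}(-1)^{|S|}Z^S$. Therefore $C^{n-1}Z=I-2^{1-n}\sum_{S}(-1)^{|S|}Z^S$, which is a Pauli (hence Clifford) decomposition with coefficient sum $3-2^{2-n}$. So $\LC(C^{n-1}Z)\le3$ for every $n$.

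No witness can give $\LC(C^{n-1}Z)\ge2^{\Omega(n)}$, whether it comes from $T$-count, a stabilizer-norm functional, or anything else. Combining \cref{thm:intro-main} with any valid lower bound on $\LC$ of this unitary yields nothing. In any case, a nonlinear magic monotone such as the Pauli-$\ell_1$ norm of the Choi density matrix does not lower-bound a linear coefficient norm. Only a dual linear functional $W$ can do that, via $\LC(U)\ge|\tr(W^\dagger U)|/\max_C|\tr(W^\dagger C)|$, and here any such $W$ gives at most $3$.

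The CCZ-block discussion likewise never produces a usable bound. Everything after the encoding step should be replaced by the one-line operator-norm bound you rejected.
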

This improves Lovett's linear lower bound, presented in \cite[Appendix~A]{Williams18}.
% The connection is that each quadratic phase is the diagonal of a
% Clifford unitary.
The connection is that the diagonal matrix of a quadratic phase is a
Clifford unitary.Thus a short quadratic-phase representation of $f$
gives a short Clifford representation of $\diag(f(x))$. Every Clifford
unitary has operator norm one, so the coefficient norm must be at least
$\max_x|f(x)|$. \cref{thm:intro-main} then prevents a nonzero sum of $k$ phases
from being too concentrated: its support has size at least
$2^{n-O(\sqrt{k\log(k+1)})}$. Since AND is supported at a single input,
\eqref{eq:intro-and} follows.
% We emphasize that this result concerns exact representation: the zero function already approximates AND with normalized squared $L_2$ error $2^{-n}$. This uses the operator associated with the
% function; the normalized amplitude vector of AND is itself a stabilizer
% state. 

\paragraph{Pseudorandomness.} The fidelity bound also makes low rank detectable. If
$k=O((\log n)^2/\log\log n)$, a state within any fixed Euclidean distance
below one of a stabilizer-rank-$k$ vector has inverse-polynomial stabilizer fidelity.
The Bell-difference test of \cite{GIKL23} then distinguishes it from Haar
random using polynomially many copies. Consequently, a pseudorandom state
ensemble can contain such states only with negligible probability
\begin{corollary}\label{thm:prs-intro}
For every fixed $C>0$ and $0\le\delta<1$, a pseudorandom pure-state ensemble $\{\ket{\psi_{n,s}}\}_s$ on $n$ qubits must satisfy
\begin{equation*}
 \Pr_s\left[
 \chi_\delta(\ket{\psi_{n,s}})
 \le C\frac{(\log_2n)^2}{\log_2\log_2n}
 \right]=\negl(n).
\end{equation*}
\end{corollary}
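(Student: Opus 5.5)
We argue by contradiction against the definition of a pseudorandom state ensemble: from a nonnegligible fraction of low-rank members we build an efficient distinguisher from Haar random states. Fix $C>0$ and $0\le\delta<1$, and set $k_n=\floor{C(\log_2 n)^2/\log_2\log_2 n}$. Suppose that for infinitely many $n$ the probability that $\chi_\delta(\ket{\psi_{n,s}})\le k_n$ is at least $1/p(n)$ for some polynomial $p$. The distinguisher is the Bell-difference test of \cite{GIKL23}, repeated polynomially many times. Its acceptance probability on $\ket\psi^{\otimes 4}$ is $\frac12+\frac12\sum_P 2^{-n}\langle\psi|P|\psi\rangle^4$ (up to the paper's normalization). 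For any state this is at least of order $\FS(\ket\psi)^{O(1)}$ above the baseline. For Haar-random states it is exponentially close to the baseline $1/2$.

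The first step is to lower bound $\FS(\ket{\psi_{n,s}})$ whenever $\chi_\delta(\ket{\psi_{n,s}})\le k_n$. By definition there is a vector $\ket w$ with $\chi(\ket w)\le k_n$ and $\norm{\ket w-\ket\psi}\le\delta$. Let $V$ be the span of the $k_n$ stabilizer states in the decomposition of $\ket w$. Then
\[
\norm{P_V\ket\psi}\ge|\langle \hat w|\psi\rangle|\ge 1-\delta^2/2-\ldots,
\]
where $\hat w=\ket w/\norm{\ket w}$. More carefully, $\norm{P_V\ket\psi}^2\ge 1-\delta^2$, since the distance from $\ket\psi$ to $V$ is at most $\delta$. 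The projection bound stated after \cref{thm:intro-state}, $\norm{P_V\ket\phi}^2\le 2^{O(\sqrt{k\log(k+1)})}\FS(\ket\phi)$, then gives
\[
\FS(\ket\psi)\ge(1-\delta^2)\,2^{-c'\sqrt{k_n\log_2(k_n+1)}}.
\]
With $k_n=C(\log n)^2/\log\log n$ we have $\log k_n=\Theta(\log\log n)$, so $\sqrt{k_n\log k_n}=O(\sqrt C\log n)$. Hence $\FS(\ket\psi)\ge 1/\poly(n)$, with the degree depending on $C$ and $\delta$.

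The second step converts fidelity into a detectable statistic. If $\ket s$ is a stabilizer state with $|\langle s|\psi\rangle|^2=F$, then by the known relation in \cite{GIKL23} (or via stabilizer Rényi entropy bounds) the Pauli spectrum satisfies $2^{-n}\sum_P\langle\psi|P|\psi\rangle^4\ge F^{O(1)}$. One route: the $2^n$ stabilizers $P$ of $\ket s$ give $\sum_{P\in\Stab(s)}2^{-n}\langle\psi|P|\psi\rangle^2=F$. Then Cauchy--Schwarz over these $2^n$ terms gives $\sum 2^{-n}\langle\psi|P|\psi\rangle^4\ge F^2$. The acceptance bias is therefore at least $1/\poly(n)$. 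For Haar-random states, the expected value of this quantity is $O(2^{-n})$, and it concentrates.

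Repeating the test $\poly(n)$ times and thresholding at half the bias distinguishes, with advantage $\Omega(1/p(n))$, $\poly(n)$ copies of $\ket{\psi_{n,s}}$ from $\poly(n)$ copies of a Haar state. This contradicts pseudorandomness, so the probability must be negligible. The main obstacle is tracking constants uniformly, so that the fidelity really is inverse-polynomial with fixed degree. This requires applying \cref{thm:state-finite-main} with explicit constants. It also requires checking that the Bell-difference bias bound holds in the form needed; the Cauchy--Schwarz step above is the first thing to try. Everything else is standard hybrid bookkeeping.
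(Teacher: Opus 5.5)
Your argument is essentially the paper's proof. On the low-rank event, \cref{thm:span} forces $\FS\ge(1-\delta^2)/\Snd(k_n)\ge n^{-O_{C,\delta}(1)}$; the explicit constants you wanted are already in \cref{cor:state-envelope}, so no extra bookkeeping is needed. The paper then imports the amplified Bell-difference test of \cite{GIKL23} as the black box \cref{prop:distinguisher}, which gives both completeness and pointwise soundness outside a Haar set of measure $\exp(-2^{\Omega(n)})$, and derives the contradiction with pseudorandomness exactly as you describe.

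Your optional re-derivation of completeness has a slip. The identity $2^{-n}\sum_{P\in G_s}\langle\psi|P|\psi\rangle^2=F$ is false: take $\ket s=\ket0$ and $\ket\psi=\ket1$, where the left side is $1$ but $F=0$. The correct identity is $F=|\ip{s}{\psi}|^2=2^{-n}\sum_{g\in H_s}\langle\psi|g|\psi\rangle$, with first powers and signs. Your stated acceptance formula is also off: the standard six-copy test accepts with probability $\tfrac12\bigl(1+2^{-n}\sum_P\langle\psi|P|\psi\rangle^6\bigr)$, a sixth rather than fourth moment. Jensen applied to the corrected identity then gives a bias of at least $F^6/2$. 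For soundness, Markov's inequality on this nonnegative bias, whose Haar mean is $O(2^{-n})$, suffices, so no concentration argument is needed.
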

%$\negl(n)$ The expression is used for sufficiently large $n$.
See \cref{thm:prs}. This raises the rank threshold from $O(\log n/\log\log n)$
obtained in \cite{KS26}. An analogous bound on the approximate Clifford rank of pseudorandom unitaries can be proven as a consequence of \cref{thm:intro-main}, see~\cref{thm:pru}.
% Testing a random column of a unitary gives the
% same exclusion for approximate Clifford rank, with Frobenius error
% at most $\delta\sqrt{2^n}$ for any fixed $\delta<1$.

\paragraph{Learning.}
Substituting~\cref{thm:intro-state} into the bounded-extent results
of \cite{ADGG26} gives the following upper bound on the time and sample complexity of low stabilizer rank tomography.
% gives state tomography under an exact stabilizer-rank-$k$ promise
% with copy and time complexity
% \begin{equation}\label{eq:intro-tomo}
%  \poly(n,1/\eps,\log(1/\beta))\,
%  2^{O(\sqrt{k\log(k+1)})},
% \end{equation}
% where $\eps$ is trace-distance error and $\beta$ is failure probability.

\begin{corollary}\label{thm:state-tomography-intro}
Given copies of an unknown normalized $n$-qubit pure state $\ket{\psi}$ with
$\chi(\ket{\psi})\le k$, a classical description of a pure state within trace
distance $\eps$ can be produced with failure probability at most $\beta$
with sample and time complexity
\begin{equation*}
 \poly(n,1/\eps,\log(1/\beta))\,
 2^{O(\sqrt{k\log(2k)})}.
\end{equation*}
\end{corollary}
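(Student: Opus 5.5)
The corollary follows by plugging \cref{thm:intro-state} into the bounded-extent tomography result of \cite{ADGG26}. I take that result to say roughly the following: given copies of an unknown $n$-qubit pure state $\ket\psi$ with $\xi(\ket\psi)\le\Xi$, one can output a classical description of a pure state within trace distance $\eps$, with failure probability at most $\beta$. The copy and time complexity is $\poly(n,1/\eps,\log(1/\beta))\cdot\poly(\Xi)$; the exact polynomial dependence on $\Xi$ is irrelevant as long as it is at most $\Xi^{O(1)}$. The whole argument therefore reduces to bounding $\Xi$ from the promise $\chi(\ket\psi)\le k$.

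\textbf{Step 1.} Since $\ket\psi$ is normalized and $\chi(\ket\psi)\le k$, \cref{thm:intro-state} gives
\[
 \xi(\ket\psi)\le 2^{c\sqrt{k\log_2(k+1)}}.
\]
Set $\Xi=2^{c\sqrt{k\log_2(k+1)}}$. This quantity depends only on $k$, so it is known to the algorithm in advance.

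\textbf{Step 2.} Run the \cite{ADGG26} algorithm with this promise parameter $\Xi$. Its cost is $\poly(n,1/\eps,\log(1/\beta))\,\Xi^{O(1)}$. Because $\Xi^{O(1)}=2^{O(\sqrt{k\log(k+1)})}$, this is the claimed bound. The statement writes $\log(2k)$ rather than $\log(k+1)$, but for $k\ge1$ we have $k+1\le 2k\le 2(k+1)$, so the two exponents agree up to constants.

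The only real obstacle is the interface with \cite{ADGG26}. I would check two things there: whether their algorithm needs an upper bound on extent as input, and whether it tolerates the promise $\xi\le\Xi$ rather than an exact value.

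If their result is stated with stabilizer fidelity instead, I would substitute $\FS(\ket\psi)\ge 1/\xi(\ket\psi)\ge 2^{-O(\sqrt{k\log(k+1)})}$, which follows from the discussion after \cref{thm:intro-main}.

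If their complexity is quasi-polynomial in $\Xi$, the exponent would change, so I would verify that it is polynomial. If the algorithm instead needs a decomposition with small coefficients, that decomposition exists by \cref{thm:intro-state}, and the algorithm only needs to know it exists. For unknown $k$, a standard doubling search over $k$ combined with a verification step adds only a constant or logarithmic overhead.
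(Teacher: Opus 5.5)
Your proposal is correct and matches the paper's proof: the paper takes $R=\sqrt{\Snd(k)}$ from \cref{cor:state-envelope} as the coefficient-norm bound in the imported bounded-extent tomography theorem (\cref{thm:AD-import}), and uses that its complexity is polynomial in $R$ with $\log R=O(\sqrt{k\log(2k)})$. The interface question you flagged resolves as you hoped: the imported parameter is the unsquared $\ell_1$ norm $\LS=\sqrt{\xi}$, the dependence is polynomial, and $k$ is part of the promise, so the fidelity, decomposition and doubling-search contingencies are not needed.
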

This improves on the previous $k^k$ time complexity obtainable from the stabilizer extent bound proven in~\cite{KS26}. See~\cref{thm:state-tomography}. We note that the unitary and
Hamiltonian results in \cite[Section~8.2.5]{ADGG26}, using the framework
of \cite{DJJMP26}, can be used to prove upper bounds on the time and copy complexity of learning unitaries and Hamiltonians of bounded Clifford rank, see~\cref{sec:clifford-tomography,sec:hamilonian-tomography}.
% give the same dependence on rank under their respective
% access models; 

% Hamiltonian learning also depends polynomially on a known
% operator-norm bound.

\subsection{Main techniques}

The key idea is randomized projection onto stabilizer codespaces. Starting
from a sum of $k$ stabilizer states,
\[\ket \psi=\sum_{i=1}c_i\ket{s_i}\]
we seek stabilizer codespace projections that remove many
terms at once while leaving the others as stabilizer states. That is, we aim to find stabilizer codespaces $\mathcal C$ in the ambient space such that a large number of $\ket{s_i}$ are orthogonal to $\mathcal C$. 

At a high level, we do this because it becomes easier to track the extent of the two components of $\ket \psi$: $P_{\mathcal C} \ket \psi$ and $(I-P_{\mathcal C})\ket{\psi}$. It is clear from the choice of $\mathcal C$ that $P_{\mathcal C} \ket{\psi}$ has stabilizer rank smaller than $k$, as many stabilizer terms inside get annihilated, and others get projected to stabilizer states. Dealing with the other part is non-trivial and we discuss our approach in the coming paragraphs.

%If we can
%also reconstruct the original vector $\ket{\psi}$ from these simpler projected vectors $P_{\mathcal C}\ket {\psi}$, where $P_{\mathcal C}$ stands for the orthogonal projector on to $\mathcal C$, as small coefficient linear combinations, say something like

%\[\ket \psi= \beta \ket {\psi'}+\sum \alpha_{\mathcal C} P_{\mathcal \mathcal C} \ket {\psi} \], all coefficients $\beta, \alpha_{\mathcal C}$ being small,
 %we obtain a bound on extent by via recursion through $\ket{\psi'}$ and $\mathcal {\ket{\psi$

%The choice of projections is guided by a code containing a group of terms
%from the decomposition.
% \paragraph{Reducing rank by randomized projection.}
Suppose a low dimension (say $d$) stabilizer codespace $\mathcal C_0$ contains many terms of the
original decomposition, say from $\ket{s_1}$ to $\ket {s_r}$. The component $P_{C_0}\ket{\psi}$ clearly lies in the codespace $\mathcal C_0$. Since any stabilizer codespace admits an orthonormal basis of stabilizer states, we can rewrite $P_{C_0} \ket{\psi}$ in that basis, which will give us a small coefficient stabilizer decomposition for it. In particular, we get that $\xi(P_{\mathcal C_0} \ket{\psi})\leq d \norm{P_{\mathcal C_0} \ket{\psi}}^2$.

To handle the component outside, we would like to remove all the selected terms
while preserving the stabilizer form of the others. We achieve this by choosing a random Pauli constraint satisfied by the
codespace and projecting onto the opposite eigenspace. That is, for any Pauli operator $Q$ such that $\mathcal C_0$ lies inside the the $+1$ eigenspace of $Q$, we look at the the vector $(I-Q) \ket{\psi}$. This removes every selected stabilizer term $\ket {s_1}\dots \ket {s_r}$, while each remaining term $(I-Q)\ket{s_j}$ for $j>r$ is still a stabilizer state up to a scalar, or becomes zero.
Averaging these
unnormalized projected vectors recovers the component outside the
codespace, up to a fixed factor, i.e, $\mathbb E[(I-Q)\ket{\psi}]\propto (I-P_{\mathcal C_0})\ket {\psi}$, where the average is over the choice of $Q$. Their average squared norm is also
controlled by the norm of that component, which keeps the coefficient
cost of this reconstruction small. We can therefore apply induction to
the shorter decompositions and combine them with the component inside
the code to bound the extent of the original vector.

\paragraph{Finding a useful codespace.}
The remaining question is how to find a small codespace containing many
terms of the decomposition. We show that two stabilizer codespaces $\mathcal C_1,\mathcal C_2$ with
large overlap can be merged into a larger stabilizer codespace $\mathcal C\supset C_1,C_2$, whose dimension $\dim(\mathcal C)$ is controlled by their individual dimensions $\dim(C_i)$ and overlap. Here, by overlap we mean $\norm{P_{\mathcal C_1}P_{\mathcal C_2}}_{\infty}$, which also equals the maximum possible inner product between vectors in $\mathcal C_1$ and $\mathcal C_2$. Starting with the
individual stabilizer states $\ket{s_i}$ in the decomposition of $\ket{\psi}$, which form a 1 dimensional stabilizer codespace,
we apply this observation inductively to
group the terms into clusters. Either a low-dimensional codespace
contains many of the terms, or the terms split into clusters whose
containing codespaces are nearly orthogonal. In the first case, the codespace supplies the randomized projections
that reduce the number of terms. In the second case, near orthogonality
limits cancellation between clusters, and a stabilizer basis within
each cluster's codespace gives a decomposition with controlled
coefficient norm. Thus clustering identifies either a useful projection
or a decomposition whose extent we can already bound.

\paragraph{From states to Clifford operators.}
Under the normalized Choi map, Clifford unitaries correspond to
maximally entangled stabilizer states across an $n:n$ bipartition.
An arbitrary stabilizer decomposition of a Choi vector can contain
states that are not maximally entangled, so our state extent theorem
alone does not give the Clifford coefficient-norm bound. Dutt et al.
\cite{DJJMP26} obtain their rank-to-norm estimate directly from the
coefficient bound of \cite{KS26}, which controls the coefficients in the
original Clifford Choi decomposition. Our proof must also control the
new states it introduces.

We do this by tracking the \emph{entanglement deficit}, the number of
ebits missing from maximal entanglement. A stabilizer state with deficit
$a$ can be expanded into maximally entangled stabilizer states with
coefficient sum at most $2^{a/2}$. This follows from the bipartite
stabilizer normal form \cite{FCYBC04}: replace each unentangled pair using
$\ket{00}=(\ket{\Phi^+}+\ket{\Phi^-})/\sqrt2$.
A code containing a stabilizer state of deficit $a$ also has a stabilizer
basis with that same deficit, obtained by Pauli translations.

The crucial control under projection is supplied by
\cite[Lemma~3.4]{DJJMP26}: a Pauli measurement increases the deficit
by at most one. We apply this fact at each elimination step and induct
simultaneously over all deficits. The extra cost changes the recurrence
from $D+2b$ to $D+4b$, preserving the square-root exponent. Thus the
clustering argument extends to Clifford decompositions even though the
intermediate projections leave the maximally entangled class.
An independent reduction in \cref{sec:app-reduction} goes in the other
direction: any operator rank--norm bound yields a state rank--extent
bound with at most $2k-1$ Clifford terms, by choosing a common Clifford
basis and lifting the state amplitudes to a diagonal operator.

\subsection{Organization}
\Cref{sec:state-main,sec:operator} prove the stabilizer and Clifford
rank--norm inequalities. \Crefrange{sec:uncertainty}{sec:tomography}
develop their consequences for quadratic-phase representations,
approximate stabilizer rank, pseudorandomness, and learning. The appendices
contain the state-to-operator reduction and supporting technical details.

\paragraph{Note added:} While preparing this manuscript, we became aware of the simultaneous work~\cite{Dutt2026}, which proves the upper bound
$
 \xi(\ket v)\le2^{O(k)}\norm{\ket v}^2
$
for any vector of stabilizer rank at most $k$. Our~\cref{thm:intro-state} is stronger than this result by a square root factor in the exponent, up to a log factor. As a result, the time complexity of our learning algorithm in~\cref{thm:state-tomography-intro} is also stronger by an almost-square-root factor in the exponent.

\section*{Acknowledgments}
PS acknowledges the support of the Natural Sciences and Engineering Research Council of Canada (NSERC)(RGPIN-2023-03731, ALLRP-578455-2022). Nous remercions
le Conseil de recherches en sciences naturelles et en génie du Canada (CRSNG) de son soutien
(RGPIN-2023-03731,ALLRP-578455-2022). PS is also supported by a Cheriton Graduate Scholarship. Research at IQC is supported by the Government of Canada through
Innovation, Science and Economic Development Canada, and by the Province of Ontario through the Ministry of Research, Innovation and Science. BL acknowledges support from NSERC Discovery Grant number RGPIN-2026-05413.
\paragraph{AI disclosure.}

GPT-6 Astra and GPT 5.6-Sol were used extensively in this work, including to generate mathematical arguments, and to draft and revise the manuscript. The authors have verified all the generated claims and bear the full responsibility for the correctness of both the technical arguments as well as attributions.

\section{Background and conventions}\label{sec:background}
All vector spaces are finite-dimensional. Inner products are conjugate-linear
in the first argument. Logarithms are base two unless written $\ln$.
Constants implicit in $O(\cdot)$ and $\Omega(\cdot)$ are universal unless a
subscript specifies otherwise. We use $\HH_n=(\C^2)^{\otimes n}$ and
$N=2^n$. States called normalized are unit vectors. A vector appearing in
a decomposition or approximation need not be normalized.

\subsection{Pauli groups, stabilizers, and codes}
\begin{definition}[Phase-free Pauli group]\label{def:phase-free-pauli}
Let $\Pauli_n$ be the $n$-qubit Pauli group, including the scalar phases
$\{\pm1,\pm i\}$, and let
\[
 \PG_n=\Pauli_n/\{\pm I,\pm iI\}
\]
be its phase-free quotient. We write $\overline P$ for the class of $P$.
Each class $p$ has a unique representative in $\{I,X,Y,Z\}^{\otimes n}$,
denoted $W_p$.
The group $\PG_n$ has $4^n$ elements, its multiplication is abelian, and
every element has square equal to the identity $I^{\otimes n}$.  The order of any subgroup of $\PG_n$ is a power of two; its number
of independent generators is the base-two logarithm of its order. This group can also be seen as isomorphic to the group $\mathbb Z_2^{2n}$.
We denote the identity class by $e=\overline{I^{\otimes n}}$.

\end{definition}

\begin{lemma}[Pauli trace and transpose identities]\label{lem:pauli-identities}
\[
 \tr(W_pW_q)=\begin{cases}N,&p=q,\\0,&p\ne q,\end{cases}\qquad
 W_p^T=\tau_pW_p,\quad \tau_p\in\{\pm1\}.
\]
\end{lemma}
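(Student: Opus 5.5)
We reduce both identities to single-qubit facts, using that $W_p=\sigma_1\otimes\cdots\otimes\sigma_n$ with each $\sigma_j\in\{I,X,Y,Z\}$. Trace and transpose both respect tensor products, so it suffices to know the $2\times 2$ cases and then assemble them factor by factor.

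For the trace identity, write $W_p=\bigotimes_j\sigma_j$ and $W_q=\bigotimes_j\sigma'_j$. Then
\[
 \tr(W_pW_q)=\prod_{j=1}^n\tr(\sigma_j\sigma'_j).
\]
Each factor is checked directly. If $\sigma_j=\sigma'_j$, the product is $\sigma_j^2=I$, which has trace $2$. If $\sigma_j\ne\sigma'_j$, the product is either a nonidentity Pauli or such a Pauli times a phase in $\{\pm i\}$, and every nonidentity single-qubit Pauli is traceless. When $p=q$ all $n$ factors are equal, so the product is $2^n=N$. When $p\ne q$ the representatives $W_p$ and $W_q$ are distinct, because each class has a unique representative in $\{I,X,Y,Z\}^{\otimes n}$. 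Hence some factor has $\sigma_j\ne\sigma'_j$, and the whole product vanishes.

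For the transpose identity, use $(A\otimes B)^T=A^T\otimes B^T$. We have $I^T=I$, $X^T=X$, $Z^T=Z$ (these are real symmetric) and $Y^T=-Y$, since $Y$ is imaginary antisymmetric. Therefore
\[
 W_p^T=(-1)^{\#\{j:\sigma_j=Y\}}\,W_p,
\]
which gives $\tau_p\in\{\pm1\}$. Nothing here is a real obstacle. The only point needing care is that the trace statement concerns the canonical representatives $W_p$ rather than arbitrary phased Paulis; uniqueness of those representatives is exactly what guarantees a mismatched tensor factor whenever $p\ne q$.
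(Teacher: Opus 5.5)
Your proof is correct. The paper states this lemma without proof, as a standard fact of the stabilizer formalism, and your factor-by-factor verification is exactly the standard argument it implicitly relies on: trace and transpose are multiplicative over tensor products, together with the single-qubit checks $\tr(\sigma\sigma')=2\delta_{\sigma\sigma'}$ and $Y^T=-Y$.
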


Throughout, a \emph{commuting} subgroup of $\PG_n$ means one whose
Pauli representatives commute with each other.

\begin{definition}[Clifford unitary]\label{def:clifford-unitary}
A Clifford unitary normalizes $\Pauli_n$.
\end{definition}

\begin{lemma}[Clifford conjugation]\label{lem:clifford-conjugation}
We use two standard facts about
Clifford conjugation: it acts transitively on the nonidentity elements of
$\PG_n$, and it sends any independent commuting collection to distinct
single-qubit $Z$ operators. For a compatible signed stabilizer group, the
signs can also be made positive. These facts and their circuit realization
are part of the standard stabilizer formalism
\cite{Got97,Got98,DDM03,AG04}.
\end{lemma}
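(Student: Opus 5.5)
The lemma collects three standard facts about Clifford conjugation. I would prove each from the symplectic picture of $\PG_n\cong\F_2^{2n}$: a phase-free Pauli $W_p$ with $p=(a|b)$, and the commutation form $\omega(p,q)=a\cdot b'+a'\cdot b \pmod 2$. Throughout I would use the standard fact from \cite{Got97,DDM03,AG04} that every symplectic automorphism of $(\F_2^{2n},\omega)$ is induced by conjugation by a Clifford unitary, built as a circuit of $H$, $S$ and CNOT. The plan is therefore to reduce each claim to a statement about symplectic linear algebra, and then lift the resulting symplectic map to a Clifford.

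\textbf{Independent commuting collections.} An independent commuting collection $p_1,\dots,p_m$ spans an isotropic subspace of dimension $m$. By the symplectic Gram--Schmidt procedure (the symplectic Witt theorem), I extend it to a symplectic basis $p_1,\dots,p_n,q_1,\dots,q_n$ with
\[
 \omega(p_i,p_j)=\omega(q_i,q_j)=0,\qquad \omega(p_i,q_j)=\delta_{ij}.
\]
The linear map sending $p_i\mapsto z_i$ and $q_i\mapsto x_i$ is symplectic, where $z_i,x_i$ denote the classes of $Z_i,X_i$. Lifting it to a Clifford $C$ gives $CW_{p_i}C^\dagger=\pm Z_i$ up to phase.

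\textbf{Transitivity.} Any nonzero $p$ is itself an independent commuting collection with $m=1$. By the previous step it can be mapped to $z_1$, so any two nonidentity classes are related by a Clifford, namely the composition of one such map with the inverse of the other.

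\textbf{Signs.} For a compatible signed stabilizer group (one not containing $-I$), the representatives become $\epsilon_iZ_i$ with $\epsilon_i\in\{\pm1\}$ after conjugation. Conjugating further by $X_i$ for each $i$ with $\epsilon_i=-1$ flips exactly those signs and leaves the other $Z_j$ unchanged. Since $X_i$ is Clifford, the composite is Clifford. Products of the generators inherit positive signs, and the whole group is consistent because $-I\notin$ group.

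\textbf{Main obstacle.} The only substantive ingredient is the surjectivity of Cliffords onto $\mathrm{Sp}(2n,\F_2)$. I would cite it from \cite{Got98,DDM03}, or sketch it via the Aaronson--Gottesman canonical-form reduction \cite{AG04}, which row-reduces any symplectic tableau to the identity using $H$, $S$ and CNOT.
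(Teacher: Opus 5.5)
Your proposal is correct. The paper gives no proof of its own and simply cites the standard stabilizer formalism; your argument is the standard proof behind those citations: extend the isotropic span to a symplectic basis, lift the symplectic map through the surjection of the Clifford group onto $\mathrm{Sp}(2n,\F_2)$, and fix signs by conjugating with the $X_i$. Your construction also gives the ordered version, with $p_i\mapsto z_i$ in any prescribed order such as code generators first. With $m=n$ it further gives transitivity on maximal commuting subgroups, and these are the forms the paper later relies on in \cref{lem:code-basis,lem:random-basis}.
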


\begin{definition}[Stabilizer code]\label{def:stabilizer-code}
A stabilizer code $\mathcal C\subseteq\HH_n$ is the common $+1$
eigenspace of an abelian group $H_{\mathcal C}$ of signed Hermitian Paulis
not containing $-I$. Its canonical stabilizer group consists of all signed
Paulis acting as the identity on the code. Its projector and dimension are
\begin{equation}\label{eq:code-projector}
 P_{\mathcal C}=\frac1{|H_{\mathcal C}|}\sum_{g\in H_{\mathcal C}}g,
 \qquad
 \dim\mathcal C=\frac{2^n}{|H_{\mathcal C}|}.
\end{equation}
A one-dimensional code is a stabilizer ray. We write $\Stab_n$ for the set consisting of one
normalized representative of each such ray.
\end{definition}

\begin{definition}[Signed and phase-free stabilizer groups]\label{def:stabilizer-groups}
For a normalized stabilizer state $\ket s$, distinguish the following groups.
\begin{enumerate}[label=(\roman*)]
\item The \emph{signed stabilizer group} consists of the Pauli operators
that fix $\ket s$, with their signs retained:
\[
 H_s=\{P\in\Pauli_n:P\ket s=\ket s\}.
\]

\item The \emph{phase-free stabilizer group} is obtained by forgetting
the scalar phases of the elements of $H_s$:
\[
 G_s=\{\overline P:P\in H_s\}\le\PG_n.
\]
Here $\overline P$ is the phase-free class of $P$, as in
\cref{def:phase-free-pauli}.
\end{enumerate}
\end{definition}

\begin{lemma}[Maximal commuting Pauli groups]\label{lem:maximal-commuting}
The quotient map is injective on $H_s$, and
$|H_s|=|G_s|=2^n$. The group $G_s$ is maximal commuting:
any Pauli commuting with all its representatives has its class in $G_s$.
Conversely, any maximal commuting subgroup of
$\PG_n$ has order $2^n$ and is Clifford-conjugate to
\[
 \Zgrp_n=\{\overline{Z^x}:x\in\F_2^n\}
          =\langle\overline Z_1,\ldots,\overline Z_n\rangle.
\]
\end{lemma}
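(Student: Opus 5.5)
Here is the plan; it works entirely with the binary symplectic representation of $\PG_n\cong\F_2^{2n}$. Two Paulis commute exactly when the symplectic form of their images vanishes, so commuting subgroups are the isotropic subspaces. I will prove the three claims in order: injectivity together with $|H_s|=2^n$, then maximality of $G_s$, and finally the converse.

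\emph{Injectivity and order.} Suppose $P,\lambda P\in H_s$ with $\lambda\in\{\pm1,\pm i\}$. Then $\lambda\ket s=\lambda P\ket s=\ket s$, so $\lambda=1$, and the quotient map is injective on $H_s$. By \cref{def:stabilizer-code}, $\ket s$ spans a one-dimensional code with some stabilizer group $H_{\mathcal C}$ satisfying $2^n/|H_{\mathcal C}|=1$, so $|H_{\mathcal C}|=2^n$. The inclusion $H_{\mathcal C}\subseteq H_s$ is immediate. For the reverse inclusion, take $P\in H_s$. Every element of $H_s$ fixes $\ket s$, and two fixing Paulis that anticommuted would give $PQ\ket s=-QP\ket s$, i.e.\ $\ket s=-\ket s$; so $H_s$ is an abelian group of Hermitian Paulis not containing $-I$. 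Suppose $P\notin H_{\mathcal C}$. If $P$ commuted with all of $H_{\mathcal C}$, then adjoining $P$ would produce a code of dimension $1/2$, which is impossible. If instead $P$ anticommuted with some $g\in H_{\mathcal C}$, this would contradict the fact that both fix $\ket s$. Hence $H_s=H_{\mathcal C}$ and $|G_s|=|H_s|=2^n$.

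\emph{Maximality.} The image $G_s$ is an isotropic subspace of $\F_2^{2n}$ of dimension $n$. Any isotropic subspace satisfies $V\subseteq V^\perp$ with $\dim V^\perp=2n-\dim V$, so its dimension is at most $n$. If a Pauli commutes with all representatives of $G_s$, its class lies in $G_s^\perp$, which has dimension $2n-n=n$. Since $G_s\subseteq G_s^\perp$ and the dimensions agree, $G_s^\perp=G_s$, and the class lies in $G_s$.

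\emph{Converse.} Let $M$ be a maximal commuting subgroup. By the same dimension count, $\dim M\le n$. If $\dim M<n$, then $M^\perp\supsetneq M$. Choose any $q\in M^\perp\setminus M$; because every element pairs trivially with itself, $\langle M,q\rangle$ is still isotropic, contradicting maximality. Hence $|M|=2^n$. Take independent generators $m_1,\dots,m_n$ of $M$. By \cref{lem:clifford-conjugation}, some Clifford unitary conjugates this independent commuting collection to $Z_1,\dots,Z_n$, so $M$ is Clifford-conjugate to $\Zgrp_n$.

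Every step here is routine linear algebra over $\F_2$. The only points needing care are, first, the identification of $H_s$ with the code's stabilizer group in the injectivity step, and second, the nondegeneracy of the symplectic form, which is what gives $\dim V^\perp=2n-\dim V$.
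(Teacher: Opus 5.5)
Your proof is correct. The paper gives no proof of this lemma: it is recorded in the background section as a standard fact of the stabilizer formalism, next to \cref{lem:clifford-conjugation}. So there is no paper argument to compare against, and your symplectic argument over $\F_2^{2n}$ is the standard route.

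Two steps you assert without justification are easy to fill.
\begin{itemize}
\item Every $P=\lambda W_p\in H_s$ is Hermitian. Indeed $P^2=\lambda^2 I$ must also fix $\ket s$, which forces $\lambda=\pm1$.
\item Adjoining a commuting $P\notin H_{\mathcal C}$ with $P^2=I$ gives $H_{\mathcal C}\sqcup PH_{\mathcal C}$. This group has order exactly $2^{n+1}$, and it still excludes $-I$ because all its elements fix $\ket s$. That is what yields the impossible dimension $1/2$.
\end{itemize}

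The statement uses the self-centralizing sense of ``maximal commuting'' for $G_s$. Your step ``choose $q\in G^\perp\setminus G$'' (with $G$ the maximal commuting subgroup in question) contradicts maximality in that sense as well as in the inclusion sense, so the converse holds under either reading. Finally, the conjugacy to $\Zgrp_n$ is carried entirely by \cref{lem:clifford-conjugation}. The substantive content of your converse is therefore the dimension count $|G|=2^n$, which you establish cleanly from nondegeneracy of the symplectic form.
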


\begin{remark}[Stabilizer signs]\label{rem:stabilizer-signs}
The phase-free group alone does not determine the state: the signs in
$H_s$ distinguish its joint eigenvalue assignments. For example,
$\ket0$ and $\ket1$ have the same $G_s$ but different $H_s$.
We use signed groups for projector averages and phase-free groups for
counting. In particular, $H_s\cap H_t$ and $G_s\cap G_t$ are not
interchangeable when eigenvalue signs differ.
\end{remark}

\begin{lemma}[Pauli postselection]\label{lem:postselection}
For a signed Hermitian Pauli $g$, set $Q=(I-g)/2$. If $\ket s$ is a normalized
stabilizer state, then $Q\ket s$ is zero or a scalar multiple of a normalized
stabilizer state. In the latter case its scalar has modulus at most one.
\end{lemma}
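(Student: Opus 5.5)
The plan is to split into two cases according to whether $g$ (up to sign) commutes with every element of the signed stabilizer group $H_s$. I write $P=(I+g)/2$, so the operator in the statement is $Q=(I-g)/2=(I+(-g))/2$. Since $-g$ is again a signed Hermitian Pauli, it suffices to treat a projector of the form $(I+h)/2$ with $h$ a Hermitian signed Pauli. The facts I rely on are \cref{lem:maximal-commuting}, namely that $G_s$ is maximal commuting with $|H_s|=2^n$, and the projector formula \eqref{eq:code-projector}.

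\textbf{Case 1: $\overline h\in G_s$.} By \cref{lem:maximal-commuting}, some phase multiple $\lambda h$ lies in $H_s$. Because $h$ and $\lambda h$ are both Hermitian and square to $I$, we get $\lambda=\pm1$. Hence $h\ket s=\pm\ket s$, and so $(I+h)\ket s/2$ equals either $\ket s$ or $0$. In both outcomes the claim holds, with scalar $1$ or the zero vector.

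\textbf{Case 2: $\overline h\notin G_s$.} By maximality, $h$ anticommutes with some $g_0\in H_s$. Then $\ket s$ and $h\ket s$ are orthogonal, since
\[
\bra s h\ket s=\bra s h g_0\ket s=-\bra s g_0 h\ket s=-\bra s h\ket s .
\]
This gives $\norm{(I+h)\ket s/2}^2=\tfrac14(1+1+2\,\mathrm{Re}\bra s h\ket s)=\tfrac12$, so the scalar is $1/\sqrt2$.

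It remains to show that $(I+h)\ket s$ is proportional to a stabilizer state. Let $K=\{k\in H_s:kh=hk\}$, which is an index-two subgroup of $H_s$. Consider the group generated by $K$ and $h$. It is abelian, and it does not contain $-I$: otherwise $-h$ or $-I$ would lie in $K\subseteq H_s$, which forces $\overline h\in G_s$ in the first situation and contradicts $-I\notin H_s$ in the second. This group has order $2^n$, so its stabilizer code is one-dimensional by \eqref{eq:code-projector}. Call the resulting ray $\ket{t}$.

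To identify the vector with this ray, note that $(I+h)\ket s$ is fixed by $h$, because $h^2=I$. It is also fixed by every $k\in K$, since $k$ commutes with $h$ and fixes $\ket s$. Therefore the vector lies in the code, which is the span of $\ket t$. Being nonzero of norm $\sqrt2$, it equals a scalar of modulus $1/\sqrt2$ times $\ket t$ after the factor of $1/2$.

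The only point that needs care is the bookkeeping of signs and phases in Case 1. One must confirm that the phase relating $h$ to an element of $H_s$ is real, which holds because both are Hermitian. One must also confirm that $-I$ is excluded in Case 2.
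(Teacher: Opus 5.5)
Your proof is correct and follows the same route as the paper's: you split into deterministic and nondeterministic outcomes, and in the nondeterministic case you keep the index-two subgroup of $H_s$ commuting with $g$ and adjoin $-g$, which gives a one-dimensional code containing $Q\ket s$. The only differences are that you compute the scalar exactly (modulus $1$ or $1/\sqrt2$) instead of bounding it via $\norm{Q}_{\mathrm{op}}=1$, and that you verify the sign and $-I$ bookkeeping which the paper delegates to the standard stabilizer formalism.
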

\begin{proof}
This is the standard Pauli-measurement update in the stabilizer formalism
\cite{Got97,AG04}. If the outcome is nondeterministic, retain the index-two
subgroup of $H_s$ commuting with $g$ and adjoin $-g$; deterministic outcomes
give zero or $\ket s$. The scalar bound follows from $\norm Q_{\mathrm{op}}=1$.
\end{proof}

\subsection{Coefficient norms and ranks}
\begin{definition}[State and operator ranks and coefficient norms]\label{def:atomic}
For a vector $\ket{v}\in\HH_n$ and an operator $A\in M_N(\C)$,
define the following quantities.
\begin{enumerate}[label=(\roman*)]
\item The \emph{stabilizer rank} is the minimum number of stabilizer
states whose span contains $\ket v$:
\begin{equation}
 \chi(\ket{v})=\min\{k:\ket{v}\in\Span(\ket{s_1},\ldots,\ket{s_k}),\ \ket{s_i}\in\Stab_n\}.
\end{equation}

\item The \emph{stabilizer coefficient norm} is the minimum sum of absolute
coefficients over decompositions into normalized stabilizer states:
\begin{equation}
 \LS(\ket{v})=\min_{\ket{v}=\sum_i c_i\ket{s_i}}\sum_i|c_i|,
 \qquad \ket{s_i}\in\Stab_n.
\end{equation}

\item The \emph{stabilizer extent} is the square of the stabilizer
coefficient norm:
\[
 \xi(\ket{v})=\LS(\ket{v})^2.
\]

\item The \emph{Clifford rank} is the minimum number of Clifford
unitaries whose span contains $A$:
\begin{equation}
 \rC(A)=\min\{k:A\in\Span(C_1,\ldots,C_k),\ C_i\in\Cliff_n\}.
\end{equation}

\item The \emph{Clifford coefficient norm} is the minimum sum of absolute
coefficients over decompositions into Clifford unitaries:
\begin{equation}
 \LC(A)=\min_{A=\sum_i c_iC_i}\sum_i|c_i|,
 \qquad C_i\in\Cliff_n.
\end{equation}
\end{enumerate}
The zero vector and zero operator have rank and coefficient norm zero.
\end{definition}

Global phases of the stabilizer states and Clifford unitaries do not
matter. Both families are finite modulo phase and span their ambient spaces: the computational basis is stabilizer,
and the Paulis form a Clifford operator basis. Combining coefficients of
equal rays reduces the norm minimizations to finite coefficient spaces.
Bounded $\ell_1$ sublevel sets are compact, so the minima exist.
Both unsquared quantities are norms; $\xi$ is homogeneous of degree two.
In particular,
\begin{equation}\label{eq:op-vs-atomic}
 \norm A_{\mathrm{op}}\le\LC(A).
\end{equation}

\begin{remark}[Extent conventions]\label{rem:convention}
We use the squared convention $\xi=\LS^2$ of \cite{BBC19} for states and
retain the unsquared notation $\LC$ for operators. The parameters called
``extent'' in \cite{AD26,DJJMP26,ADGG26} are coefficient $\ell_1$ norms. Every
algorithmic substitution below uses those unsquared parameters explicitly.
\end{remark}

\begin{definition}[Stabilizer fidelity]
For any vector $\ket{v}$, let $\FS(\ket{v})=\max_{\ket s\in\Stab_n}|\ip{s}{v}|^2$.
\end{definition}
\begin{lemma}[Overlap bound from the coefficient norm]\label{lem:overlap}
For all vectors $\ket{\phi},\ket{v}$,
\begin{equation}\label{eq:overlap}
 |\ip{\phi}{v}|\le\sqrt{\FS(\ket{\phi})}\,\LS(\ket{v}).
\end{equation}
In particular, a normalized $\ket{v}$ satisfies $\xi(\ket{v})\FS(\ket{v})\ge1$.
\end{lemma}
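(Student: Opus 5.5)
The plan is to take an optimal decomposition of $\ket v$ and pair it with $\ket\phi$ term by term, bounding each term by the stabilizer fidelity of $\ket\phi$. By the discussion following \cref{def:atomic}, the minimum defining $\LS(\ket v)$ is attained. So we fix a decomposition $\ket v=\sum_i c_i\ket{s_i}$ with $\ket{s_i}\in\Stab_n$ and $\sum_i|c_i|=\LS(\ket v)$. If $\ket v=0$ both sides vanish, so we may assume $\ket v\neq0$.

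Taking the inner product with $\ket\phi$ and applying the triangle inequality gives
\[
 |\ip{\phi}{v}|=\Bigl|\sum_i c_i\ip{\phi}{s_i}\Bigr|\le\sum_i|c_i|\,|\ip{s_i}{\phi}|.
\]
Each $\ket{s_i}$ is a normalized stabilizer state, so by the definition of $\FS$ we have $|\ip{s_i}{\phi}|\le\sqrt{\FS(\ket\phi)}$. Factoring this out leaves $\sum_i|c_i|=\LS(\ket v)$, which is \eqref{eq:overlap}.

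For the second statement, take $\ket\phi=\ket v$ with $\ket v$ normalized. Then \eqref{eq:overlap} reads $1\le\sqrt{\FS(\ket v)}\,\LS(\ket v)$. Squaring both sides and using $\xi=\LS^2$ gives $\xi(\ket v)\FS(\ket v)\ge1$.

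There is no real obstacle here. The only care needed is the existence of the minimizing decomposition, which the paper already justifies. Alternatively, one can avoid existence entirely by running the argument for an arbitrary decomposition and then taking the infimum over decompositions.
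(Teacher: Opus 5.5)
Your proposal is correct and follows essentially the same route as the paper's proof: the triangle inequality applied to $\ip{\phi}{v}$ in a stabilizer decomposition, each overlap bounded by $\sqrt{\FS(\ket\phi)}$, minimization of the coefficient sum, and then $\ket\phi=\ket v$ with squaring for the normalized case. Your remark that one can take the infimum over arbitrary decompositions instead of invoking attainment is a harmless alternative to what the paper does.
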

\begin{proof}
Apply the triangle inequality to $\ip{\phi}{v}$ in a stabilizer
decomposition of $\ket v$, minimize the coefficient sum, and then take
$\ket\phi=\ket v$ and square to obtain the normalized case \cite{BBC19}.
\end{proof}

\subsection{Approximation metrics}
\begin{definition}[Approximate ranks]\label{def:approx}
\begin{enumerate}[label=(\roman*)]
\item For a normalized state $\ket{\phi}$ and $0\le\delta<1$, define its
\emph{approximate stabilizer rank} $\chi_\delta(\ket{\phi})$ at Euclidean
error $\delta$ by
\[
 \chi_\delta(\ket{\phi})=\min_{\norm{\ket{v}-\ket{\phi}}\le\delta}\chi(\ket{v}).
\]
The approximating vector $\ket v$ need not be normalized.

\item For a unitary $U$ and a Frobenius error tolerance $\delta\ge0$, define its
\emph{approximate Clifford rank} $\rC{}_{,\delta}^{F}(U)$ by
\[
 \rC{}_{,\delta}^{F}(U)
 =\min_{\norm{A-U}_F\le\delta}\rC(A).
\]
The approximating operator $A$ need not be unitary.
\end{enumerate}
\end{definition}

Equivalently, $\chi_\delta(\ket{\phi})$ is the least $k$ for which a
$k$-stabilizer span $V$ obeys $\norm{(I-P_V)\ket{\phi}}\le\delta$.
For normalized pure states,
\[
 \Dtr(\ket{\phi},\ket{\psi})=\tfrac12\norm{\ket\phi\bra\phi-\ket\psi\bra\psi}_1
 =\sqrt{1-|\ip\phi\psi|^2}.
\]
The phase-insensitive Euclidean distance is
$\min_\theta\norm{\ket{\phi}-e^{i\theta}\ket{\psi}}$; it lies between
$\Dtr(\ket{\phi},\ket{\psi})$ and $\sqrt2\Dtr(\ket{\phi},\ket{\psi})$.
For process tomography we use
$\min_\theta\norm{U-e^{i\theta}A}_F$.
Conversion from Choi-vector Euclidean distance to this operator distance
includes a factor of $\sqrt N$. 

\subsection{Function norms and inner products}\label{sec:function-norms}
For functions $f,g:\F_2^n\to\C$, we use the normalized inner product
and norms
\[
 \ip f g=2^{-n}\sum_x\overline{f(x)}g(x),\qquad
 \norm f_2=\sqrt{\ip f f},\qquad
 \norm f_\infty=\max_x|f(x)|.
\]
The support is $\supp(f)=\{x:f(x)\ne0\}$.

\begin{lemma}[Elementary function-norm inequalities]\label{lem:function-norms}
For $f,g:\F_2^n\to\C$, the norms and inner product in
\cref{sec:function-norms} satisfy
\[
 |\ip f g|\le\norm f_2\norm g_2,
 \qquad
 2^{-n}\norm f_\infty^2\le\norm f_2^2\le\norm f_\infty^2.
\]
More precisely,
\begin{equation}\label{eq:function-support-norm}
 \norm f_2^2\le\frac{|\supp(f)|}{2^n}\norm f_\infty^2.
\end{equation}
\end{lemma}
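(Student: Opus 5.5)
The three claims follow from the definitions in \cref{sec:function-norms} by direct computation, so I would prove them in order without any outside input.

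\textbf{Cauchy--Schwarz.} Since $\ip f g=2^{-n}\sum_x\overline{f(x)}g(x)$ is a positive multiple of the standard inner product on $\C^{2^n}$, it is itself an inner product. The first inequality is therefore the Cauchy--Schwarz inequality for this inner product. Equivalently, apply the usual Cauchy--Schwarz to the vectors $(f(x))_x$ and $(g(x))_x$ and multiply through by $2^{-n}$.

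\textbf{The support bound \eqref{eq:function-support-norm}.} Write
\[
 \norm f_2^2=2^{-n}\sum_{x\in\supp(f)}|f(x)|^2,
\]
since terms off the support vanish. Bound each summand by $\norm f_\infty^2$; there are $|\supp(f)|$ of them, which gives \eqref{eq:function-support-norm}.

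\textbf{The two-sided sandwich.} The upper bound $\norm f_2^2\le\norm f_\infty^2$ follows from \eqref{eq:function-support-norm} because $|\supp(f)|\le 2^n$. For the lower bound, choose $x^*$ with $|f(x^*)|=\norm f_\infty$. Then
\[
 \norm f_2^2\ge 2^{-n}|f(x^*)|^2=2^{-n}\norm f_\infty^2,
\]
since every summand is nonnegative.

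None of these steps is a real obstacle. The only point needing care is to keep the $2^{-n}$ normalization consistent: it is what turns the trivial bound $|\supp(f)|\le 2^n$ into $\norm f_2^2\le\norm f_\infty^2$, and what produces the factor $2^{-n}$ in the lower bound.
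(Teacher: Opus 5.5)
Your proposal is correct and follows essentially the same route as the paper: Cauchy--Schwarz for the inner-product bound, and sandwiching $\sum_x|f(x)|^2$ between its largest term $\norm f_\infty^2$ and $|\supp(f)|\,\norm f_\infty^2$, then multiplying by $2^{-n}$. Deriving $\norm f_2^2\le\norm f_\infty^2$ from the support bound via $|\supp(f)|\le 2^n$ is exactly the intended argument.
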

\begin{proof}
Apply Cauchy--Schwarz for the first inequality and bound
$\sum_x|f(x)|^2$ between $\norm f_\infty^2$ and
$|\supp(f)|\norm f_\infty^2$ for the remaining inequalities.
\end{proof}

\section{The stabilizer rank--extent inequality}\label{sec:state-main}

We first prove the rank--coefficient norm comparison for ordinary stabilizer
states.
%This argument has no bipartition and no entanglement parameter.
Its three ingredients are an orthonormal stabilizer basis for each code,
clustering by code overlap, and averaged Pauli projections that eliminate
an entire cluster. The Clifford proof in \cref{sec:operator} follows the
same architecture, with an additional cost for entanglement lost under
postselection.

\subsection{A stabilizer basis for a code}
\begin{lemma}[Dimension bound]\label{lem:state-code-basis}
If $\mathcal C$ is a stabilizer code of dimension $d$, then every
$\ket{v}\in\mathcal C$ satisfies
\begin{equation}\label{eq:state-code-cost}
 \LS(\ket{v})^2\le d\norm{\ket{v}}^2.
\end{equation}
\end{lemma}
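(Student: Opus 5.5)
The plan is to exhibit an orthonormal basis of $\mathcal C$ consisting of normalized stabilizer states, expand $\ket v$ in it, and apply Cauchy--Schwarz.

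\textbf{Constructing the basis.} Let $H_{\mathcal C}$ be the stabilizer group of $\mathcal C$, with $m$ independent generators $g_1,\ldots,g_m$, so that $d=2^{n-m}$ by \eqref{eq:code-projector}. By \cref{lem:clifford-conjugation} there is a Clifford $U$ with $U g_i U^\dagger = Z_i$ for $i\le m$, signs made positive. Then $U\mathcal C=\Span\{\ket{0^m}\ket y : y\in\F_2^{n-m}\}$. The vectors $U^\dagger\ket{0^m}\ket y$ form an orthonormal basis of $\mathcal C$ with $d$ elements. Each is a stabilizer state, since Clifford conjugation maps the stabilizer group $\langle Z_1,\ldots,Z_n\rangle$ with signs to another signed stabilizer group.

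If I wanted to avoid invoking the normal form, I would instead extend $H_{\mathcal C}$ to a maximal commuting signed group by adjoining $n-m$ independent commuting Paulis $h_1,\ldots,h_{n-m}$ that commute with $H_{\mathcal C}$. The joint eigenspaces for the $2^{n-m}$ sign choices on the $h_j$ are then one-dimensional stabilizer rays. They are mutually orthogonal and all lie in $\mathcal C$, so they give the same basis $\{\ket{b_1},\ldots,\ket{b_d}\}$. This route has to check that such an extension exists, which follows from \cref{lem:maximal-commuting} after conjugating to $\Zgrp_n$.

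\textbf{The estimate.} Write $\ket v=\sum_{j=1}^d \ip{b_j}{v}\ket{b_j}$. Since this is a stabilizer decomposition, the definition of $\LS$ gives
\[
\LS(\ket v)\le\sum_{j=1}^d|\ip{b_j}{v}|\le\sqrt d\Big(\sum_j|\ip{b_j}{v}|^2\Big)^{1/2}=\sqrt d\,\norm{\ket v}.
\]
Squaring yields \eqref{eq:state-code-cost}.

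The only nontrivial step is the existence of the orthonormal stabilizer basis, and that reduces to standard stabilizer normal-form facts, which the paper takes as given. I expect no real obstacle.
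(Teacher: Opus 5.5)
Your proposal is correct and matches the paper's proof. You build a $d$-element orthonormal stabilizer basis of the code and then apply Cauchy--Schwarz. Your second construction, which extends the code's stabilizer group and ranges over the signs of the added generators, is exactly what the paper does. Your Clifford-conjugation route is the same construction the paper uses later in \cref{lem:code-basis}.
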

\begin{proof}
Extend the code's stabilizer group to a maximal commuting Pauli group. This can be done by adding more independent elements to the list of generators.
Keeping the code's eigenvalues fixed for the original generators, and choosing all possible signs for the
additional generators gives $d$ orthonormal stabilizer states
$\ket{u_1},\ldots,\ket{u_d}$ spanning the code \cite{Got97,Got98}.
Write $\ket v=\sum_{j=1}^d b_j\ket{u_j}$. Since the basis is orthonormal,
$\sum_j|b_j|^2=\norm{\ket v}^2$, and Cauchy--Schwarz gives
\[
 \LS(\ket v)\le\sum_{j=1}^d|b_j|
 \le\sqrt d\left(\sum_{j=1}^d|b_j|^2\right)^{1/2}
 =\sqrt d\norm{\ket v}.
\]
\end{proof}

The next fact permits us to grow a code around several stabilizer states while
controlling its dimension.
\begin{lemma}[Code merging]\label{lem:merge}
Let $\mathcal C_1,\mathcal C_2$ be stabilizer codes on the same qubits,
with dimensions $d_1,d_2$ and projectors $P_1,P_2$. If
$\norm{P_1P_2}_{\mathrm{op}}=\eta>0$, then a stabilizer code containing
both exists with dimension at most $d_1d_2/\eta^2$.
\end{lemma}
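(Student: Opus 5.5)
The plan is to take the merged code to be the one cut out by the \emph{common} stabilizers of the two codes. We then bound its dimension with a trace computation, using that the largest eigenvalue of a positive semidefinite operator is at most its trace.

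Let $S_1,S_2$ be the canonical (signed) stabilizer groups of $\mathcal C_1,\mathcal C_2$ from \cref{def:stabilizer-code}, so $|S_i|=N/d_i$ and $P_i=|S_i|^{-1}\sum_{g\in S_i}g$.

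\textbf{Step 1: no opposite signs.} First I rule out a Pauli $g$ with $g\in S_1$ and $-g\in S_2$. In that case $P_1=\tfrac{1+g}{2}P_1$ and $P_2=\tfrac{1-g}{2}P_2$. All these operators commute, so $P_1P_2=P_1\tfrac{1+g}{2}\tfrac{1-g}{2}P_2=0$, which contradicts $\eta>0$.

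\textbf{Step 2: the merged code.} Set $S=S_1\cap S_2$. It is an abelian group of signed Hermitian Paulis and does not contain $-I$. Its common $+1$ eigenspace $\mathcal C$ is therefore a stabilizer code. It contains both $\mathcal C_1$ and $\mathcal C_2$, because each element of $S$ fixes every vector of each code. Its dimension is $N/|S|$.

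\textbf{Step 3: the trace computation.} Expanding the product gives
\[
 \tr(P_1P_2)=\frac1{|S_1||S_2|}\sum_{g\in S_1,h\in S_2}\tr(gh).
\]
By \cref{lem:pauli-identities}, $\tr(gh)$ vanishes unless $gh\propto I$. For Hermitian signed Paulis this means $h=\pm g$, and Step 1 excludes $h=-g$. Each remaining pair has $h=g\in S$ and contributes $\tr(g^2)=N$. Hence
\[
 \tr(P_1P_2)=\frac{N|S|}{|S_1||S_2|}.
\]

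\textbf{Step 4: comparing with $\eta$.} We have $\eta^2=\norm{P_1P_2}_{\mathrm{op}}^2=\norm{P_1P_2P_1}_{\mathrm{op}}$. Since $P_1P_2P_1$ is positive semidefinite, this is at most $\tr(P_1P_2P_1)=\tr(P_1P_2)$. Therefore
\[
 \dim\mathcal C=\frac N{|S|}\le\frac{N^2}{|S_1||S_2|\,\eta^2}=\frac{d_1d_2}{\eta^2}.
\]

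\textbf{Main obstacle.} The only delicate point is the sign bookkeeping in Step 3. One must work with signed groups rather than phase-free ones, per \cref{rem:stabilizer-signs}, and must dispose of anti-aligned pairs $h=-g$ using $\eta>0$. Without Step 1 those pairs would contribute $-N$ to the trace, and the clean count $|S|$ would fail.
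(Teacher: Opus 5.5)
Your proof is correct and follows the paper's argument essentially step for step: intersect the canonical signed stabilizer groups, use $\eta>0$ to exclude anti-aligned pairs, compute $\tr(P_1P_2)=N|S|/(|S_1||S_2|)$, and bound $\eta^2$ by this trace. The paper phrases the last step as $\norm{P_1P_2}_F^2\ge\norm{P_1P_2}_{\mathrm{op}}^2$, which is equivalent to your largest-eigenvalue bound for $P_1P_2P_1$. One small wording point in Step 1: $P_1$ and $P_2$ need not commute with each other, but you only need that $g$ commutes with each of them, so $P_1P_2=P_1\frac{1+g}{2}\frac{1-g}{2}P_2=0$ still holds.
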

\begin{proof}
We construct a containing code by keeping only the stabilizer constraints
shared by the two codes. Let $H_i$ be their canonical signed stabilizer
groups, and set $H=H_1\cap H_2$. Every vector in either code is fixed by
$H$, so its common $+1$ eigenspace $\mathcal C$ contains both codes.

To bound its dimension, expand the projectors using
\eqref{eq:code-projector}. A product of two signed Paulis has nonzero
trace only when the Paulis agree up to sign. Opposite signs cannot occur
here: if $g\in H_1$ and $-g\in H_2$, the two codes lie in orthogonal
eigenspaces of $g$, contradicting $\norm{P_1P_2}_{\mathrm{op}}>0$.
Thus only the matching pairs $g=h\in H$ contribute. For $m$ physical
qubits this gives
\[
 \tr(P_1P_2)
 =\frac{1}{|H_1||H_2|}\sum_{g\in H_1}\sum_{h\in H_2}\tr(gh)
 =\frac{2^m|H|}{|H_1||H_2|}.
\]
Since $d_i=2^m/|H_i|$ and $\dim\mathcal C=2^m/|H|$, it follows that
\[
 \dim\mathcal C=\frac{d_1d_2}{\tr(P_1P_2)}.
\]
Finally, the projector identities $P_i^2=P_i$ and cyclicity of the trace
give
\[
 \tr(P_1P_2)=\tr(P_2P_1P_2)=\norm{P_1P_2}_F^2
 \ge\norm{P_1P_2}_{\mathrm{op}}^2=\eta^2.
\]
Substitution proves the bound; no commutation of $P_1$ and $P_2$ is needed.
\end{proof}

\subsection{Clustering and block elimination}
\begin{lemma}[State clustering dichotomy]\label{lem:state-cluster}
Fix $K\ge2$ and $2\le r\le K$, and put $D=(2K)^{4r}$.
If $\ket{v}$ is a linear combination of at most $K$ normalized stabilizer
states, then either a stabilizer code of dimension at most $D$ contains
at least $r$ of the displayed stabilizer states, or
\begin{equation}\label{eq:state-cluster-cost}
 \LS(\ket{v})^2\le D\norm{\ket{v}}^2.
\end{equation}
\end{lemma}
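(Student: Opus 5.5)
The plan is to run a greedy agglomerative clustering on the displayed states, using \cref{lem:merge} as the merging step with the threshold $\eta=1/(2K)$. Write $\ket v=\sum_{i=1}^{K'}c_i\ket{s_i}$ with $K'\le K$. Start with the $K'$ singleton clusters $\{i\}$, each carrying the one-dimensional code $\mathcal C_{\{i\}}=\C\ket{s_i}$. At every stage each cluster $S$ carries a stabilizer code $\mathcal C_S$ that contains every $\ket{s_i}$ with $i\in S$. While two current clusters $S,T$ satisfy $\norm{P_{\mathcal C_S}P_{\mathcal C_T}}_{\mathrm{op}}\ge\eta$, replace them by $S\cup T$ with the code supplied by \cref{lem:merge}. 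That code contains both $\mathcal C_S$ and $\mathcal C_T$, hence all $\ket{s_i}$ with $i\in S\cup T$, and has dimension at most $d_Sd_T(2K)^2$. We stop merging as soon as some cluster has at least $r$ elements.

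\textbf{Dimension bookkeeping.} By induction on merges, a cluster with $t$ elements has code dimension at most $(2K)^{2(t-1)}$. The base case is $t=1$ with dimension $1$. For the step, $(2K)^{2(a-1)}(2K)^{2(b-1)}(2K)^2=(2K)^{2(a+b-1)}$.

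\textbf{First case: some cluster reaches $r$ elements.} It was formed by merging two clusters that each had at most $r-1$ elements, so its size is at most $2r-2$. Its dimension is therefore at most $(2K)^{2(2r-3)}\le(2K)^{4r}=D$, and we are in the first alternative of the dichotomy.

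\textbf{Second case: the process halts with all clusters of size at most $r-1$.} Call the clusters $S_1,\dots,S_m$, with codes $\mathcal C_j$ of dimension $d_j\le(2K)^{2(r-2)}$, and $m\le K$. Their pairwise overlaps satisfy $\norm{P_{\mathcal C_j}P_{\mathcal C_l}}_{\mathrm{op}}<\eta$ for $j\ne l$. Put $\ket{w_j}=\sum_{i\in S_j}c_i\ket{s_i}\in\mathcal C_j$, so that $\ket v=\sum_j\ket{w_j}$. The crossing terms obey $|\ip{w_j}{w_l}|=|\bra{w_j}P_jP_l\ket{w_l}|\le\eta\norm{w_j}\norm{w_l}$. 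Hence
\[
 \norm{\ket v}^2\ge\sum_j\norm{w_j}^2-\eta\Bigl(\sum_j\norm{w_j}\Bigr)^2\ge(1-m\eta)\sum_j\norm{w_j}^2\ge\tfrac12\sum_j\norm{w_j}^2,
\]
where the middle step uses Cauchy--Schwarz. Now apply \cref{lem:state-code-basis} to each $\ket{w_j}$, followed by Cauchy--Schwarz over $j$:
\[
 \LS(\ket v)\le\sum_j\sqrt{d_j}\norm{w_j}\le\sqrt{m\max_jd_j}\Bigl(\sum_j\norm{w_j}^2\Bigr)^{1/2}\le\sqrt{2K(2K)^{2(r-2)}}\,\norm{\ket v}.
\]
This gives $\LS(\ket v)^2\le(2K)^{2r-3}\norm{\ket v}^2\le D\norm{\ket v}^2$, which is \eqref{eq:state-cluster-cost}.

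\textbf{Main obstacle.} The delicate step is the near-orthogonality estimate in the second case. The $\ket{w_j}$ are arbitrary vectors in distinct codes, so the only available control is the operator-norm overlap of the code projectors. This is why the threshold must be $\eta\lesssim1/m$, and why the dimension blows up as $(2K)^{O(r)}$ rather than polynomially. Some minor points also need checking. \cref{lem:merge} requires $\eta>0$, which holds automatically. Repeated or zero terms cause no issue, since a cluster's code contains its states regardless of their coefficients.
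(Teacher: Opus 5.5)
Your proposal is correct and follows the paper's proof essentially step for step. Both arguments use greedy merging at threshold $(2K)^{-1}$ via \cref{lem:merge} with the $(2K)^{2(q-1)}$ dimension invariant, and both get the $2r-2$ size bound (hence dimension $(2K)^{4r-6}\le D$) in the first case; in the second case both combine near-orthogonality of the terminal clusters with \cref{lem:state-code-basis} and Cauchy--Schwarz. The only cosmetic difference is that you bound the cross terms by $\eta\bigl(\sum_j\norm{w_j}\bigr)^2$ and Cauchy--Schwarz, whereas the paper uses pairwise AM--GM; both give the same factor $\tfrac12$ and the same final constant $(2K)^{2r-3}$.
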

\begin{proof}
Write the supplied decomposition as $\ket v=\sum_{i=1}^t c_i\ket{s_i}$,
with $t\le K$. The zero vector needs no argument. A \emph{cluster} is a
set of indices from this decomposition, together with a stabilizer code
containing the corresponding stabilizer states. Begin with singleton clusters, whose
codes are the one-dimensional rays of the stabilizer states. Whenever two cluster
projectors satisfy $\norm{P_iP_j}_{\mathrm{op}}\ge(2K)^{-1}$, merge
their index sets and use \cref{lem:merge} with
$\eta=\norm{P_iP_j}_{\mathrm{op}}$ to choose a containing code;
the dimension increases by a factor of at most $\eta^{-2}\le(2K)^2$
relative to the product of the two code dimensions.

Throughout this procedure, a cluster of $q$ indices has code dimension at
most $(2K)^{2(q-1)}$. Indeed, the bound is one for a singleton, and merging
clusters of sizes $q_1,q_2$ gives dimension at most
\[
 (2K)^2(2K)^{2(q_1-1)}(2K)^{2(q_2-1)}
 =(2K)^{2(q_1+q_2-1)}.
\]
Stop if a cluster first reaches size at least $r$. Both clusters merged
at that step had size at most $r-1$, so the new cluster has at most
$2r-2$ indices. Its code dimension is at most $(2K)^{4r-6}\le D$,
giving the first alternative.

If this never happens, the procedure ends with clusters
$I_1,\ldots,I_h$, each of size less than $r$. Define their contributions
using the original coefficients:
\[
 \ket{v_j}=\sum_{i\in I_j}c_i\ket{s_i},\qquad
 \ket v=\sum_{j=1}^h\ket{v_j}.
\]
Their code dimensions satisfy $d_j\le(2K)^{2(r-2)}$, and $h\le K$.
Since no further merge is possible, vectors in distinct cluster codes
have small inner product:
\[
 |\ip{v_i}{v_j}|=|\ip{v_i}{P_iP_jv_j}|
 \le(2K)^{-1}\norm{\ket{v_i}}\norm{\ket{v_j}}
 \qquad(i\ne j).
\]
In particular, the cluster contributions cannot cancel too much. Writing
$x_j=\norm{\ket{v_j}}$ and using $2x_ix_j\le x_i^2+x_j^2$, we obtain
\[
 \norm{\ket v}^2
 \ge\sum_jx_j^2-\frac1K\sum_{i<j}x_ix_j
 \ge\left(1-\frac{h-1}{2K}\right)\sum_jx_j^2
 \ge\frac12\sum_jx_j^2.
\]
We can therefore decompose each contribution in its own code basis and
combine the resulting coefficient bounds. By
\cref{lem:state-code-basis}, the triangle inequality, and Cauchy--Schwarz,
\[
 \LS(\ket{v})^2
 \le\left(\sum_j\sqrt{d_j}\norm{\ket{v_j}}\right)^2
 \le\left(\sum_jd_j\right)\left(\sum_j\norm{\ket{v_j}}^2\right)
 \le2K(2K)^{2(r-2)}\norm{\ket{v}}^2\le D\norm{\ket{v}}^2.
\]
\end{proof}

\begin{lemma}[State block elimination]\label{lem:state-block}
Suppose a code of dimension at most $D$ contains at least $r$ stabilizer states in
a $t$-term stabilizer decomposition of $\ket{v}$. Suppose every vector with
at most $t-r$ stabilizer summands satisfies $\LS(\ket{w})^2\le b\norm{\ket{w}}^2$.
Then
\[
 \LS(\ket{v})^2\le(D+2b)\norm{\ket{v}}^2.
\]
\end{lemma}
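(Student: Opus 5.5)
The plan is to split $\ket v$ along the code $\mathcal C$ into its component inside the code and its component outside, bound each separately, and recombine them with Cauchy--Schwarz. Write the given decomposition as $\ket v=\sum_{i=1}^t c_i\ket{s_i}$, with $\ket{s_1},\ldots,\ket{s_r}\in\mathcal C$ and $\dim\mathcal C\le D$. Let $H=H_{\mathcal C}$ be the canonical signed stabilizer group of $\mathcal C$, so that $P=P_{\mathcal C}=\E_{g\in H}[g]$ by \eqref{eq:code-projector}, with $g$ uniform over $H$. Then
\[
 \ket v=P\ket v+(I-P)\ket v.
\]
The inside component lies in $\mathcal C$, so \cref{lem:state-code-basis} gives $\LS(P\ket v)\le\sqrt D\,\norm{P\ket v}$.

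For the outside component, I would use the randomized Pauli projections described in the introduction. For each $g\in H$ put $\ket{w_g}=(I-g)\ket v=2Q_g\ket v$, where $Q_g=(I-g)/2$. Every $g\in H$ fixes $\ket{s_1},\ldots,\ket{s_r}$, so $Q_g$ annihilates these terms. By \cref{lem:postselection}, each remaining $Q_g\ket{s_j}$ with $j>r$ is zero or a scalar multiple of a normalized stabilizer state. Hence $\ket{w_g}$ has at most $t-r$ stabilizer summands, and the hypothesis gives $\LS(\ket{w_g})\le\sqrt b\,\norm{\ket{w_g}}$. The choice $g=I$ gives $\ket{w_g}=0$, which is harmless. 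Averaging over $H$ gives $\E_g\ket{w_g}=(I-P)\ket v$. Since $\LS$ is a norm, convexity and Jensen's inequality yield
\[
 \LS\bigl((I-P)\ket v\bigr)\le\E_g\LS(\ket{w_g})\le\sqrt b\,\bigl(\E_g\norm{\ket{w_g}}^2\bigr)^{1/2}.
\]

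The key computation is the second moment. Each $g$ is a Hermitian unitary with $g^2=I$, so $(I-g)^\dagger(I-g)=2(I-g)$. Using $\E_g[g]=P$,
\[
 \E_g\norm{\ket{w_g}}^2=2\bigl(\norm{\ket v}^2-\ip{v}{Pv}\bigr)=2\norm{(I-P)\ket v}^2 .
\]
Hence $\LS((I-P)\ket v)\le\sqrt{2b}\,\norm{(I-P)\ket v}$. I expect this step to need the most care. One must check that the average of the projected vectors reproduces exactly $(I-P)\ket v$, with the factor coming out as $1$ because the group average includes the identity. One must also check that the second moment is controlled by the outside component alone, with no contribution from the inside part.

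Finally, by the triangle inequality and Cauchy--Schwarz, with $a=\norm{P\ket v}$ and $c=\norm{(I-P)\ket v}$ satisfying $a^2+c^2=\norm{\ket v}^2$,
\[
 \LS(\ket v)\le\sqrt D\,a+\sqrt{2b}\,c\le\sqrt{D+2b}\,\sqrt{a^2+c^2}.
\]
Squaring gives $\LS(\ket v)^2\le(D+2b)\norm{\ket v}^2$.
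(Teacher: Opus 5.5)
Your proof is correct and follows the paper's argument essentially verbatim: you split $\ket v=P\ket v+(I-P)\ket v$, average the Pauli postselections over the code's signed stabilizer group, and use the same second-moment identity. Your $\ket{w_g}=2Q_g\ket v$ simply absorbs the paper's factor of two, and bounding $\LS((I-P)\ket v)$ by convexity first, rather than applying the triangle inequality directly to $P\ket v+2\E_gQ_g\ket v$, is only a cosmetic reordering.
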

\begin{proof}
Split $\ket v$ into its component in the selected code and its orthogonal
remainder. The code-basis bound controls the first component. To control
the remainder, we express it as an average of projections that remove
the selected $r$ stabilizer states.

Let $H$ be the code's signed stabilizer group, let $P$ be its projector,
and put $Q_g=(I-g)/2$ for $g\in H$. Every selected stabilizer state is fixed by $g$,
so $Q_g$ sends it to zero. By \cref{lem:postselection}, the other terms
remain scalar multiples of stabilizer states or become zero. Thus
$Q_g\ket v$ has at most $t-r$ stabilizer summands, and the assumption of
the lemma gives $\LS(Q_g\ket v)\le\sqrt b\norm{Q_g\ket v}$.

Here $\E_g$ denotes the uniform average over $H$. The projector formula
$\E_g g=P$ gives $\E_gQ_g=(I-P)/2$. Also $Q_g^2=Q_g$, so
$\E_g\norm{Q_g\ket v}^2=\ip{v}{(\E_gQ_g)v}$. These observations yield
\begin{equation}\label{eq:state-average-identities}
 \ket{v}=P\ket{v}+2\E_gQ_g\ket{v},
 \qquad \E_g\norm{Q_g\ket{v}}^2=\tfrac12\norm{(I-P)\ket{v}}^2.
\end{equation}
Apply \cref{lem:state-code-basis} to $P\ket v$ and the preceding
coefficient-norm bound to each $Q_g\ket v$. The triangle inequality and
$\E_g\norm{Q_g\ket v}\le(\E_g\norm{Q_g\ket v}^2)^{1/2}$ give
\begin{align*}
 \LS(\ket{v})
 &\le\sqrt D\norm{P\ket{v}}+2\sqrt b\,\E_g\norm{Q_g\ket{v}}\\
 &\le\sqrt D\norm{P\ket{v}}+\sqrt{2b}\norm{(I-P)\ket{v}}\\
 &\le\sqrt{D+2b}\norm{\ket{v}}.
\end{align*}
The last step is Cauchy--Schwarz for the two terms, together with
$\norm{P\ket v}^2+\norm{(I-P)\ket v}^2=\norm{\ket v}^2$.
\end{proof}

\begin{theorem}[Finite stabilizer rank--extent bound]
\label{thm:state-finite-main}
If $\chi(\ket{v})\le k$, then for $k\ge2$ and every $2\le r\le k$,
\begin{equation}\label{eq:state-finite-bound}
 \xi(\ket{v})\le(2k)^{4r}\left(2^{\floor{k/r}+1}-1\right)\norm{\ket{v}}^2.
\end{equation}
For all $k\ge1$, also $\xi(\ket{v})\le(2^k-1)\norm{\ket{v}}^2$.
\end{theorem}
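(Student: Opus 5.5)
We prove the bound by strong induction on the number of stabilizer summands, with $r$ and the outer parameter $K=k$ held fixed. Put $D=(2k)^{4r}$ and, for $0\le t\le k$, let
\[
 b_t=\sup\left\{\LS(\ket w)^2/\norm{\ket w}^2\right\},
\]
where the supremum is over nonzero $\ket w$ that admit a decomposition into at most $t$ normalized stabilizer states. The goal is the recurrence $b_t\le D+2b_{t-r}$ for $t\ge r$, together with a base bound for $t<r$; unrolling it yields the factor $2^{\floor{k/r}+1}-1$.

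\textbf{Inductive step.} Fix $\ket v$ with a $t$-term decomposition, $t\le k$, and apply \cref{lem:state-cluster} with $K=k$. This is legitimate because $t\le k$, and since $k\ge r\ge2$ the lemma's hypotheses $K\ge2$ and $2\le r\le K$ hold. Two cases arise.
\begin{enumerate}[label=(\roman*)]
\item If the second alternative of \cref{lem:state-cluster} holds, then directly $\LS(\ket v)^2\le D\norm{\ket v}^2$.
\item If some code of dimension at most $D$ contains at least $r$ of the displayed states, then \cref{lem:state-block} applies with $b=b_{t-r}$. Any vector with at most $t-r$ summands is covered by $b_{t-r}$, and this already includes the case where projection zeroes out extra terms, since at most $t-r$ summands remain. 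This gives $\LS(\ket v)^2\le(D+2b_{t-r})\norm{\ket v}^2$.
\end{enumerate}
In both cases $b_t\le D+2b_{t-r}$, using $b_{t-r}\ge0$.

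\textbf{Base case.} For $t<r$, case (ii) of \cref{lem:state-cluster} cannot occur, since fewer than $r$ states are displayed. Hence $b_t\le D$, and also $b_0=0$.

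\textbf{Unrolling.} Write $t=qr+s$ with $0\le s<r$ and $q\le\floor{k/r}$. Iterating the recurrence $q$ times gives
\[
 b_t\le D\,(1+2+\cdots+2^{q-1})+2^q b_s\le D\,(2^{q+1}-1)\le D\,(2^{\floor{k/r}+1}-1).
\]
Finally, since $\chi(\ket v)\le k$, the vector $\ket v$ has a decomposition with $t\le k$ terms, which proves \eqref{eq:state-finite-bound}.

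\textbf{Second bound, $\xi\le(2^k-1)\norm{\ket v}^2$ for all $k\ge1$.} Run a cruder induction $c_t\le 1+2c_{t-1}$ with $c_0=0$, which solves to $c_t\le 2^t-1$. For $t\ge1$, take the one-dimensional code spanned by a single displayed state $\ket{s_1}$; this is a ray, so $D=1$ and $r=1$. The proof of \cref{lem:state-block} does not use $r\ge2$, so it applies with $D=1$, $r=1$ and gives $\LS(\ket v)^2\le(1+2c_{t-1})\norm{\ket v}^2$. The case $k=1$ is also immediate: a multiple of a stabilizer state has $\xi=\norm{\ket v}^2$.

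\textbf{Main obstacle.} The real work is in the earlier lemmas, so the only delicate point here is bookkeeping. The parameter $K$ in \cref{lem:state-cluster}, and hence $D$, must stay fixed at $k$ rather than shrinking with $t$, so that the recurrence has constant $D$. This is fine because the lemma only requires at most $K$ terms.
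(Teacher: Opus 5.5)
Your proposal is correct and follows essentially the same argument as the paper. The paper also inducts on the number of summands with $K=k$ and $D=(2k)^{4r}$ held fixed, uses the clustering dichotomy and block elimination to get $b_t=D+2b_{t-r}$, and for the second bound eliminates one ray at a time to get $c_t=1+2c_{t-1}$. The only difference is cosmetic: you take $b_t$ to be the optimal constant (a supremum) and unroll an inequality, whereas the paper writes down the closed form $D(2^{\floor{t/r}+1}-1)$ first and checks it by induction.
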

\begin{proof}
Fix $K=k$, $r$, and $D=(2K)^{4r}$. Clustering either gives
$\LS(\ket v)^2\le D\norm{\ket v}^2$ directly or supplies a block of
$r$ stabilizer states that can be eliminated. By
\cref{lem:state-block}, a bound $b$ for the remaining terms then gives
the bound $D+2b$. This suggests the following sequence, indexed by the
number of terms:
\[
 b_t=D\sum_{j=0}^{\floor{t/r}}2^j
    =D\left(2^{\floor{t/r}+1}-1\right),\qquad 0\le t\le K.
\]
It satisfies $b_t=D$ for $t<r$ and $b_t=D+2b_{t-r}$ for $t\ge r$.

We prove by induction on $t$ that every vector with at most $t$
stabilizer summands obeys $\LS(\ket v)^2\le b_t\norm{\ket v}^2$.
The case $t=0$ is immediate. If $0<t<r$, a cluster of $r$ supplied
stabilizer states is impossible, so \cref{lem:state-cluster} gives the bound $D=b_t$.
For $t\ge r$, the same lemma either gives the bound $D\le b_t$ or
provides a code containing at least $r$ stabilizer states. In the latter case,
induction bounds every vector with at most $t-r$ summands by $b_{t-r}$,
and \cref{lem:state-block} gives $D+2b_{t-r}=b_t$.
Taking $t=k$ and recalling $\xi=\LS^2$ proves
\eqref{eq:state-finite-bound}.

For the second estimate, eliminate one stabilizer state at a time. Its ray is a
one-dimensional code, so \cref{lem:state-block} gives the recurrence
$c_0=0$ and $c_t=1+2c_{t-1}$. Hence $c_t=2^t-1$, as claimed.
\end{proof}

\begin{corollary}[Uniform stabilizer rank--extent bound]
\label{cor:state-envelope}
Define $\Snd(0)=0$ and, for integers $k\ge1$, set
\[
 \Snd(k)=2^{12\sqrt{k\log_2(2k)}}.
\]
Then $\chi(\ket{v})\le k$ implies $\xi(\ket{v})\le\Snd(k)\norm{\ket{v}}^2$.
\end{corollary}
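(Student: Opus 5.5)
We will derive the corollary from \cref{thm:state-finite-main} by choosing $r$ to balance its two factors, and handle small $k$ with the second estimate $\xi(\ket v)\le(2^k-1)\norm{\ket v}^2$. The first factor contributes $4r\log_2(2k)$ to the exponent and the second contributes at most $\floor{k/r}+1$. These balance at $r\approx\sqrt{k/\log_2(2k)}$, and each term is then of order $\sqrt{k\log_2(2k)}$. The case $k=0$ is trivial, since then $\ket v=0$.

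\emph{Small $k$.} If $k\le 144\log_2(2k)$, then
\[
 k=\sqrt{k}\sqrt{k}\le\sqrt{k}\cdot 12\sqrt{\log_2(2k)}=12\sqrt{k\log_2(2k)},
\]
so $2^k-1\le\Snd(k)$ and the second estimate of \cref{thm:state-finite-main} suffices. This covers all $k\ge1$ in this range, including $k=1$.

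\emph{Large $k$.} Otherwise $k>144\log_2(2k)$, so in particular $k\ge2$. Set
\[
 r=\ceil{\sqrt{k/\log_2(2k)}}.
\]
Then $\sqrt{k/\log_2(2k)}>12$, so $r\ge 13\ge2$, and $r\le k$ because $\log_2(2k)\ge1$. Write $L=\log_2(2k)$ and $s=\sqrt{kL}$. Using $r\le\sqrt{k/L}+1$, the exponent of the first factor is
\[
 4r\log_2(2k)\le 4\sqrt{k/L}\,L+4L=4s+4L.
\]
For the second factor, $\log_2(2^{\floor{k/r}+1}-1)<\floor{k/r}+1\le k/r+1$, and since $r\ge\sqrt{k/L}$ we get $k/r\le\sqrt{kL}=s$. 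The total exponent is therefore at most $5s+4L+1$.

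It remains to show $4L+1\le 7s$. Since $k>144L$, we have $s=\sqrt{kL}>12L$, hence $4L+1\le 5L<s$. The total exponent is thus at most $6s\le12s$, which gives $\xi(\ket v)\le\Snd(k)\norm{\ket v}^2$.

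The only delicate point is the bookkeeping around the ceiling and the additive $+1$ and $4L$ terms. The threshold $144\log_2(2k)$ is chosen so that in the small-$k$ regime the trivial bound already fits under $\Snd(k)$, while in the large-$k$ regime $r\ge2$ and the lower-order terms are dominated by $s$. The constant $12$ leaves considerable slack, so none of these checks is tight.
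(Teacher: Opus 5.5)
Your proof is correct and follows the paper's argument. The paper likewise uses the $2^k-1$ estimate for small $k$, with threshold $k\le 8\log_2(2k)$, and otherwise takes $r=\lceil\sqrt{k/(2L)}\rceil$ in \cref{thm:state-finite-main} to bound the exponent by $4rL+k/r+1$; your threshold $144L$ and choice $r=\lceil\sqrt{k/L}\rceil$ differ only in constants, and your bookkeeping (including $2\le r\le k$) checks out.
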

This implies \cref{thm:intro-state}.
The logarithm of the bound in \eqref{eq:state-finite-bound} is at most
$4rL+k/r+1$, where $L=\log_2(2k)$; choosing $r$ of order
$\sqrt{k/L}$ balances the block cost against the number of elimination
steps. The detailed estimates are given in \cref{sec:app-uniform-bounds}.

\section{The Clifford rank--coefficient norm inequality}\label{sec:operator}

We combine the code-clustering and block-elimination argument of
\cref{sec:state-main} with the entanglement-control ingredients used by
\cite{DJJMP26} to obtain the Clifford rank--coefficient norm bound.
The Clifford rank--norm bound in \cite[Corollary~4.5]{DJJMP26} follows
directly from the coefficient bound of \cite{KS26} applied to the original
Clifford Choi decomposition. Our state extent bound instead allows a new
stabilizer decomposition, whose terms need not be maximally entangled,
so it does not give the improved Clifford bound as a black-box consequence.
We can, however, use the entanglement-deficit bound of
\cite[Lemma~3.4]{DJJMP26} directly within our proof: each Pauli projection
increases the deficit by at most one. The code-merging step is unchanged,
and we track this deficit through block elimination to control the cost
of decomposing the resulting states into maximally entangled stabilizer
states.

\subsection{The normalized Choi map}
For $A\in M_N(\C)$, define
\begin{equation}\label{eq:choi}
 \J A=(A\otimes I)\ket{\Phi_N},\qquad
 \ket{\Phi_N}=N^{-1/2}\sum_x\ket{x,x}.
\end{equation}
With this normalization,
\begin{equation}\label{eq:choi-isometry}
 \norm{\J A}^2=\frac{\norm A_F^2}{N}.
\end{equation}
Thus the map is normalized, but $\J A$ need not be a unit vector.

Let $\mathcal M_n\subseteq\Stab_{2n}$ be the stabilizer states maximally
entangled across the first $n$ versus last $n$ qubits, and let $\LM$ be
the minimum coefficient $\ell_1$-norm over decompositions into states
from $\mathcal M_n$, defined on $\HH_{2n}$. The bipartite stabilizer normal form
\cite{FCYBC04} identifies every such state with
$(C_1\otimes C_2)\ket{\Phi_N}$ for local Cliffords. Since
$(C_1\otimes C_2)\ket{\Phi_N}=\J{C_1C_2^T}$ and the transpose of a
Clifford is Clifford, these are precisely Clifford Choi states. Hence
\begin{equation}\label{eq:choi-atomic}
 \LM(\J A)=\LC(A).
\end{equation}
The norm is defined on all vectors: maximally entangled Pauli Choi states
already form an orthonormal basis.

\subsection{Entanglement deficit and stabilizer decompositions}
\begin{definition}[Deficit]\label{def:deficit}
For a normalized stabilizer state $\ket s$ on the balanced $n:n$ cut,
let $\SR(\ket s)$ denote its Schmidt rank across this cut. Define its \emph{entanglement deficit} $a(\ket s)$, the number of ebits
missing from maximal entanglement, by
\[
 a(\ket s)=n-\log_2\SR(\ket s).
\]
Let $\PG_A\le\PG_{2n}$ be the subgroup of phase-free Paulis supported
only on the first subsystem. Then
\[
 a(\ket s)=\log_2|G_s\cap\PG_A|.
\]
In particular, the deficit is an integer between zero and $n$.
\end{definition}

For completeness, let
$H_{s,A}=\{P_A:P_A\otimes I\in H_s\}$. Tracing the stabilizer expansion
over the second subsystem gives
\[
 \rho_A=2^{-n}\sum_{g\in H_{s,A}}g
       =\frac{|H_{s,A}|}{2^n}P_{H_{s,A}},
 \qquad \rank\rho_A=\frac{2^n}{|H_{s,A}|}.
\]
Here $P_{H_{s,A}}$ is the projector onto the common $+1$ eigenspace of
$H_{s,A}$, and $|H_{s,A}|=|G_s\cap\PG_A|$. The nonzero eigenvalues
of $\rho_A$ are uniform. Since $\SR(\ket s)=\rank\rho_A$, this proves the
formula for the deficit and its integrality. The bipartite normal form is
\begin{equation}\label{eq:normal-form}
 \ket s\ \sim_{\text{local Clifford}}\quad
 \ket{\Phi_2}^{\otimes(n-a)}\otimes
 \ket0_A^{\otimes a}\ket0_B^{\otimes a},\qquad a=a(\ket s)
\end{equation}
\cite{FCYBC04}.

\begin{lemma}[Decomposition into maximally entangled stabilizer states]\label{lem:repair}
Every normalized stabilizer state $\ket s$ satisfies
\begin{equation}\label{eq:repair}
 \LM(\ket s)\le2^{a(\ket s)/2}.
\end{equation}
\end{lemma}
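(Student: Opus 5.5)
We reduce to the bipartite normal form \eqref{eq:normal-form} and decompose the unentangled pairs into Bell states. The key point is that $\LM$ is invariant under local Cliffords $C_A\otimes C_B$. Such an operator maps $\mathcal M_n$ to itself: it preserves stabilizer states, and it preserves Schmidt rank across the cut because it is a local unitary. Therefore, if $\ket s=(C_A\otimes C_B)\ket{s_0}$, any decomposition of $\ket{s_0}$ into states of $\mathcal M_n$ transports to one of $\ket s$ with the same coefficient sum. By \eqref{eq:normal-form}, up to a global phase (which $\LM$ ignores) and a fixed reordering of qubits within each side, we may take
\[
 \ket{s_0}=\ket{\Phi_2}^{\otimes(n-a)}\otimes\ket0_A^{\otimes a}\ket0_B^{\otimes a},\qquad a=a(\ket s).
\]
The reordering is a local permutation, hence also a local Clifford.

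Next we regroup the $a$ unentangled qubits into pairs $(A_i,B_i)$ and use
\[
 \ket{0}_{A_i}\ket0_{B_i}=\tfrac1{\sqrt2}\bigl(\ket{\Phi^+}+\ket{\Phi^-}\bigr),
\]
where $\ket{\Phi^\pm}=(\ket{00}\pm\ket{11})/\sqrt2$. Expanding the tensor product gives
\[
 \ket{s_0}=2^{-a/2}\sum_{\sigma\in\{\pm\}^a}\ket{\Phi_2}^{\otimes(n-a)}\otimes\bigotimes_{i=1}^a\ket{\Phi^{\sigma_i}}_{A_iB_i}.
\]
This sum has $2^a$ terms, each with coefficient $2^{-a/2}$, so the total coefficient sum is $2^{a/2}$.

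Each summand is a tensor product of two-qubit Bell states, one across each $(A,B)$ pair. It is therefore a stabilizer state with Schmidt rank $2^n$ across the $n:n$ cut, and so lies in $\mathcal M_n$. Equivalently, it equals $(I\otimes Z^{x})\ket{\Phi_N}$ up to the qubit ordering, which is a Pauli Choi state. Combining this with local-Clifford invariance gives $\LM(\ket s)\le2^{a/2}$, which is \eqref{eq:repair}.

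I expect no real obstacle here. The only point needing care is checking that local Cliffords and reorderings of qubits within a side preserve $\mathcal M_n$. This follows because both preserve the stabilizer property and the Schmidt rank across the fixed cut.
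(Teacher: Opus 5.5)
Your proposal is correct and follows essentially the same route as the paper: pass to the bipartite normal form, write each unentangled pair as $\ket{00}=(\ket{\Phi^+}+\ket{\Phi^-})/\sqrt2$ to get $2^a$ maximally entangled stabilizer terms with coefficient $2^{-a/2}$ each, and undo the local Clifford. Your extra checks (local Cliffords and within-side qubit permutations preserve $\mathcal M_n$, and global phases are absorbed into coefficients) are just the details the paper leaves implicit.
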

\begin{proof}
Put $a=a(\ket s)$, the number of missing ebits. In the normal form
\eqref{eq:normal-form}, replace each of the $a$ unentangled pairs by
\[
 \ket{00}=\frac{\ket{\Phi^+}+\ket{\Phi^-}}{\sqrt2},\qquad
 \ket{\Phi^\pm}=\frac{\ket{00}\pm\ket{11}}{\sqrt2}.
\]
The resulting $2^a$ terms are maximally entangled stabilizer states, each
with coefficient $2^{-a/2}$. Their total coefficient norm is therefore
$2^a2^{-a/2}=2^{a/2}$. Undoing the local Clifford transformation preserves
both maximal entanglement and the stabilizer property, so this is also
a valid decomposition for $\ket s$.
\end{proof}

\begin{lemma}[One measurement loses at most one ebit]\label{lem:deficit-step}
If $Q=(I-g)/2$ is a Pauli projector and $Q\ket s\ne0$, then
\[
 a(Q\ket s/\norm{Q\ket s})\le a(\ket s)+1.
\]
\end{lemma}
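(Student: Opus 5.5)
The plan is to work with the characterization $a(\ket s)=\log_2|G_s\cap\PG_A|$ from \cref{def:deficit} and track how the local subgroup changes under the standard stabilizer update of \cref{lem:postselection}. Write $\ket{s'}=Q\ket s/\norm{Q\ket s}$ and $\bar g$ for the phase-free class of $g$.

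In the deterministic case, $\bar g\in G_s$ and $Q\ket s\ne0$ force $-g\in H_s$, so $\ket{s'}=\ket s$ and the deficit is unchanged. Otherwise $g$ anticommutes with some element of $H_s$. The update keeps the index-two subgroup $K=\{h\in H_s: hg=gh\}$ and adjoins $-g$, so $H_{s'}=\langle K,-g\rangle$ and
\[
 G_{s'}=\overline K\cup \bar g\,\overline K,\qquad |\overline K|=2^{2n-1}.
\]

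Next I intersect with $\PG_A$:
\[
 G_{s'}\cap\PG_A=(\overline K\cap\PG_A)\cup(\bar g\overline K\cap\PG_A).
\]
The first part is a subgroup of $G_s\cap\PG_A$, so its order is at most $2^{a(\ket s)}$. The second part is a coset of that first subgroup inside $G_{s'}\cap\PG_A$, or it is empty. This is because if $x,y\in\bar g\overline K\cap\PG_A$, then $xy\in\overline K\cap\PG_A$. Hence
\[
 |G_{s'}\cap\PG_A|\le 2\,|\overline K\cap\PG_A|\le 2^{a(\ket s)+1}.
\]
Taking $\log_2$ gives the claim.

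The main obstacle is the bookkeeping that justifies $G_{s'}=\overline K\cup\bar g\overline K$ as phase-free groups. This needs $\bar g\notin\overline K$, which holds because $\bar g\notin G_s$ in the nondeterministic case. It also needs the quotient map to be injective on $H_{s'}$, which is \cref{lem:maximal-commuting}. Alternatively, the whole statement can be cited directly as \cite[Lemma~3.4]{DJJMP26}.
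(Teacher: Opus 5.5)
Your proof is correct, but it follows a different route from the paper. The paper does no computation here. It observes that in the nondeterministic case $\ket s$ and the normalized projection are neighboring stabilizer states with squared overlap $1/2$. It then cites \cite[Lemma~3.4]{DJJMP26}, which says that the entanglement entropies of such neighbors differ by at most one, and uses that a bipartite stabilizer state has entropy $n-a$.

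You instead work directly with the characterization $a(\ket s)=\log_2|G_s\cap\PG_A|$ established after \cref{def:deficit}. You combine it with the update rule $H_{s'}=K\cup(-g)K$ from \cref{lem:postselection}. You then show that $G_{s'}\cap\PG_A$ is the union of the subgroup $\overline K\cap\PG_A\le G_s\cap\PG_A$ and at most one coset of it.

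Your approach makes the lemma self-contained within the paper and avoids the entropy and overlap facts entirely. The same counting also gives the reverse bound $a(\ket{s'})\ge a(\ket s)-1$. Since $\overline K$ has index two in $G_s$, the subgroup $\overline K\cap\PG_A$ has index at most two in $G_s\cap\PG_A$, and it lies inside $G_{s'}\cap\PG_A$. The citation is shorter and gives the two-sided statement from the literature, but only the one-sided bound is needed for \cref{lem:block}.

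One small bookkeeping remark. For the coset count you only use $G_{s'}=\overline K\cup\bar g\,\overline K$ as the image of $H_{s'}$. The facts $\bar g\notin\overline K$ and $|H_{s'}|=2^{2n}$ matter only for justifying that $\langle K,-g\rangle$ is the full stabilizer group of $\ket{s'}$, via the order count $2|K|=2^{2n}$. They do not enter the inequality $|G_{s'}\cap\PG_A|\le2|\overline K\cap\PG_A|$ itself.
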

\begin{proof}
In a nondeterministic outcome, $\ket s$ and the normalized projection
have squared overlap $1/2$ and are neighboring stabilizer states. Their
entanglement entropies differ by at most one by
\cite[Lemma~3.4]{DJJMP26}; for a bipartite stabilizer state this entropy
is $n-a$. A deterministic nonzero outcome leaves $\ket s$ unchanged.
\end{proof}

\begin{lemma}[A low-deficit basis of a code]\label{lem:code-basis}
Suppose a stabilizer code $\mathcal C$ of dimension $d$ contains a
normalized stabilizer $\ket s$ with $a(\ket s)\le a$. Then every $\ket{v}\in\mathcal C$
satisfies
\begin{equation}\label{eq:code-cost}
 \LM(\ket{v})^2\le2^a d\norm{\ket{v}}^2.
\end{equation}
\end{lemma}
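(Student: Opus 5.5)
The plan is to mirror the proof of \cref{lem:state-code-basis}, with one change. Instead of an arbitrary orthonormal stabilizer basis of $\mathcal C$, we build the basis by Pauli translations of the given state $\ket s$. Every basis element then has deficit exactly $a(\ket s)\le a$. Combined with \cref{lem:repair}, this gives each basis vector $\LM$-cost at most $2^{a/2}$, and Cauchy--Schwarz over the $d$ coefficients finishes the argument.

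\textbf{Constructing the basis.} Let $H_{\mathcal C}$ be the canonical signed stabilizer group of $\mathcal C$, so that $H_{\mathcal C}\subseteq H_s$ because $\ket s\in\mathcal C$. The signed group $H_s$ contains $H_{\mathcal C}$, and modulo phases $G_s$ is a maximal commuting group containing $G_{\mathcal C}=\overline{H_{\mathcal C}}$ with index $d=2^{2n}/(|H_{\mathcal C}|\cdot 2^{2n}/2^{2n})$; more precisely $|G_s|/|G_{\mathcal C}|=2^{2n}/|H_{\mathcal C}|=d$ on the $2n$ physical qubits. For a Pauli $W$, the state $W\ket s$ is stabilizer with signed group $WH_sW^\dagger$, in which each element of $H_s$ keeps or flips its sign according to commutation with $W$. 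Choose a subgroup $T$ of Paulis that commute with all of $H_{\mathcal C}$ and whose commutation characters on $G_s/G_{\mathcal C}$ realize all $d$ sign patterns. Such a $T$ exists because the symplectic form is nondegenerate, so characters on $G_s$ trivial on $G_{\mathcal C}$ are realized by Paulis in the commutant of $G_{\mathcal C}$. Then $\{W\ket s:W\in T\}$ are $d$ stabilizer states in $\mathcal C$, since they are still fixed by $H_{\mathcal C}$. They are pairwise orthogonal, because two of them differ in the sign of some element of $G_s$. Hence they form an orthonormal basis of $\mathcal C$.

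\textbf{Deficit invariance.} Each Pauli $W=W_A\otimes W_B$ is a local unitary across the $n:n$ cut, so the Schmidt rank of $W\ket s$ equals that of $\ket s$. Thus $a(W\ket s)=a(\ket s)\le a$. Alternatively, conjugation by $W$ preserves $|G_s\cap\PG_A|$ in \cref{def:deficit}. By \cref{lem:repair}, $\LM(W\ket s)\le 2^{a/2}$ for every $W\in T$.

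\textbf{Conclusion.} Write $\ket v=\sum_{W\in T}b_W W\ket s$, so that $\sum|b_W|^2=\norm{\ket v}^2$. The triangle inequality for the norm $\LM$ and Cauchy--Schwarz give
\[
\LM(\ket v)\le 2^{a/2}\sum_W|b_W|\le 2^{a/2}\sqrt d\,\norm{\ket v}.
\]
Squaring this yields \eqref{eq:code-cost}.

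The only step needing care is the existence of the translation set $T$, namely that the $d$ Pauli translates of $\ket s$ stay inside $\mathcal C$ and span it. I expect this to be the main obstacle. I would settle it by Clifford-conjugating, via \cref{lem:clifford-conjugation}, so that $H_s$ is generated by $Z_1,\dots,Z_{2n}$ with $H_{\mathcal C}$ generated by a subset of them, and then taking $T$ to be the $X$-strings on the complementary coordinates. The deficit is not preserved by that global conjugation, but it is used only to exhibit $T$. The translates themselves are Paulis, which are local across the cut.
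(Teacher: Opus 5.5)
Your proposal is correct and follows essentially the same route as the paper. The paper also obtains an orthonormal basis of $\mathcal C$ from Pauli translates $\bigl(F^\dagger(I\otimes X^y)F\bigr)\ket s$, by Clifford-conjugating $H_s$ to positive single-qubit $Z$'s with the code generators first; it then uses locality of Paulis across the cut to keep the deficit and finishes with \cref{lem:repair} and Cauchy--Schwarz. Your symplectic-duality argument for the existence of the translation set $T$ is an equally valid alternative to that conjugation step.
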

\begin{proof}
We need a code basis whose vectors have the same deficit as the given
$\ket s$, not just any stabilizer basis. Let $d=2^\ell$. Since
$\ket s\in\mathcal C$, the code's signed stabilizer group is a subgroup
of $H_s$. Choose generators of $H_s$ extending those of the code.
By \cref{lem:clifford-conjugation}, a Clifford $F$ sends these generators
to the positive single-qubit $Z$ operators, with the code generators
first. Up to global phase,
\[
 F\ket s=\ket{0^{2n}},\qquad
 F\mathcal C=\ket{0^{2n-\ell}}\otimes\HH_\ell.
\]
Thus $\ket{u_y}=F^\dagger\ket{0^{2n-\ell},y}$, for
$y\in\F_2^\ell$, form an orthonormal stabilizer basis of the code.
Each is a Pauli image of $\ket s$, because
\[
 \ket{u_y}=\bigl(F^\dagger(I\otimes X^y)F\bigr)\ket s.
\]
Although $F$ need not be local across the balanced cut, the Pauli in
parentheses is a tensor product of operators on the two halves. It
therefore preserves entanglement, so every $\ket{u_y}$ has deficit
$a(\ket s)\le a$.

Write $\ket v=\sum_y b_y\ket{u_y}$. Using \cref{lem:repair} for each
basis vector and then Cauchy--Schwarz gives
\[
 \LM(\ket v)\le\sum_y|b_y|\LM(\ket{u_y})
 \le2^{a/2}\sum_y|b_y|
 \le2^{a/2}\sqrt d\norm{\ket v}.
\]
\end{proof}

\subsection{Clustering with an entanglement weight}
The code-merging argument of \cref{lem:state-cluster} does not depend on
the bipartition. Its terminal-cluster estimate does: here we use
\cref{lem:code-basis} in place of \cref{lem:state-code-basis}.
\begin{lemma}[Weighted clustering dichotomy]\label{lem:cluster}
Fix $K\ge2$ and $2\le r\le K$, and put $D=(2K)^{4r}$.
Suppose $\ket{v}=\sum_{i=1}^t c_i\ket{s_i}$, where $t\le K$ and each $\ket{s_i}$ is a
normalized balanced-cut stabilizer of deficit at most $a$. Then either
a stabilizer code of dimension at most $D$ contains at least $r$ of the
displayed stabilizer states, or
\[
 \LM(\ket{v})^2\le2^aD\norm{\ket{v}}^2.
\]
\end{lemma}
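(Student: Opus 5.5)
I would rerun the proof of \cref{lem:state-cluster} essentially verbatim, since the clustering procedure never uses the bipartition, and change only the terminal estimate. Start with singleton clusters whose codes are the rays of the $\ket{s_i}$. Merge two clusters whenever their projectors satisfy $\norm{P_iP_j}_{\mathrm{op}}\ge(2K)^{-1}$, using \cref{lem:merge} to choose a containing code. By the same induction as before, a cluster of $q$ indices has code dimension at most $(2K)^{2(q-1)}$. If some cluster first reaches size at least $r$, it has at most $2r-2$ indices. Its code then has dimension at most $(2K)^{4r-6}\le D$ and contains at least $r$ of the displayed states, which is the first alternative.

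Otherwise the procedure ends with clusters $I_1,\ldots,I_h$, where $h\le K$, each of size less than $r$ and with code dimensions $d_j\le(2K)^{2(r-2)}$. Define $\ket{v_j}=\sum_{i\in I_j}c_i\ket{s_i}$. Each $\ket{v_j}$ lies in its cluster code $\mathcal C_j$. The pairwise overlaps of distinct cluster codes are below $(2K)^{-1}$, so the near-orthogonality estimate from \cref{lem:state-cluster} applies unchanged:
\[
 \norm{\ket v}^2\ge\tfrac12\sum_j\norm{\ket{v_j}}^2 .
\]

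The one new point is the cost of each cluster. The code $\mathcal C_j$ contains at least one displayed state $\ket{s_i}$ with $i\in I_j$. By hypothesis that state has deficit at most $a$. So \cref{lem:code-basis} applies to $\mathcal C_j$ and gives
\[
 \LM(\ket{v_j})\le 2^{a/2}\sqrt{d_j}\,\norm{\ket{v_j}} .
\]
This is exactly why the hypothesis is a uniform deficit bound on all displayed states: every terminal code, including the merged ones, contains some original state, so it inherits a low-deficit stabilizer basis. I do not need the merged codes to themselves consist of low-deficit states beyond this.

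To finish, combine the clusters using the triangle inequality for the norm $\LM$ and then Cauchy--Schwarz:
\[
 \LM(\ket v)^2\le 2^a\Bigl(\sum_j\sqrt{d_j}\norm{\ket{v_j}}\Bigr)^2\le 2^a\Bigl(\sum_jd_j\Bigr)\sum_j\norm{\ket{v_j}}^2\le 2^a\cdot 2K(2K)^{2(r-2)}\norm{\ket v}^2\le 2^aD\norm{\ket v}^2 .
\]
The zero vector is trivial. I expect no real obstacle, because the only step that needs checking is the applicability of \cref{lem:code-basis} to each terminal cluster code, and that was settled above.
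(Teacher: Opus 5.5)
Your proposal is correct and follows the paper's proof essentially step for step. You rerun the merging procedure of \cref{lem:state-cluster} unchanged and replace only the terminal per-cluster cost by \cref{lem:code-basis}, justified as the paper does: each terminal cluster code contains one of the original states, which has deficit at most $a$, and your bookkeeping $2^a\cdot 2K(2K)^{2(r-2)}\le 2^aD$ matches.
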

\begin{proof}
Run the same merging procedure as in \cref{lem:state-cluster}. The code
dimension invariant and the bound $D$ for a cluster of at least $r$
stabilizer states are unchanged. If no such cluster appears, write $\ket{v}=\sum_j\ket{v_j}$
over the terminal clusters, whose code dimensions are $d_j$.
Their pairwise projector overlaps give
$\sum_j\norm{\ket{v_j}}^2\le2\norm{\ket{v}}^2$, while
$\sum_jd_j\le K(2K)^{2(r-2)}$. Each cluster code contains one of its
original stabilizer states, of deficit at most $a$. Thus \cref{lem:code-basis}
and Cauchy--Schwarz give
\[
 \LM(\ket{v})^2
 \le2^a\left(\sum_j\sqrt{d_j}\norm{\ket{v_j}}\right)^2
 \le2^a\left(\sum_jd_j\right)
          \left(\sum_j\norm{\ket{v_j}}^2\right)
 \le2^aD\norm{\ket{v}}^2.
\]
\end{proof}

\subsection{The weighted block recurrence}
\begin{lemma}[Eliminating a block]\label{lem:block}
Suppose a code $\mathcal C$ of dimension at most $D$ contains at least $r$
stabilizer states in a $t$-term decomposition of $\ket{v}$ into stabilizers of deficit at most
$a$. Suppose every vector $\ket{w}$ expressible using at most $t-r$ stabilizers
of deficit at most $a+1$ satisfies
\[
 \LM(\ket{w})^2\le2^{a+1}b\norm{\ket{w}}^2.
\]
Then $\LM(\ket{v})^2\le2^a(D+4b)\norm{\ket{v}}^2$.
\end{lemma}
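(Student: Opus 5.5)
We follow the proof of \cref{lem:state-block}, replacing the two coefficient bounds by their deficit-weighted versions. Let $H$ be the signed stabilizer group of $\mathcal C$, $P$ its projector, and $Q_g=(I-g)/2$ for $g\in H$. The identities \eqref{eq:state-average-identities} depend only on $H$ being a signed stabilizer group, so they still hold:
\[
\ket v=P\ket v+2\E_gQ_g\ket v,\qquad \E_g\norm{Q_g\ket v}^2=\tfrac12\norm{(I-P)\ket v}^2 .
\]

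\emph{Component inside the code.} The code $\mathcal C$ contains at least one selected stabilizer state, and that state has deficit at most $a$. So \cref{lem:code-basis} applies to $P\ket v\in\mathcal C$ and gives
\[
\LM(P\ket v)\le 2^{a/2}\sqrt D\norm{P\ket v}.
\]

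\emph{Projected vectors.} Fix $g\in H$. Each selected stabilizer is fixed by $g$, so $Q_g$ kills it, and at most $t-r$ terms of $Q_g\ket v$ survive. By \cref{lem:postselection}, each surviving term is zero or a scalar multiple of a normalized stabilizer state. By \cref{lem:deficit-step}, that state has deficit at most $a+1$. Hence $Q_g\ket v$ is expressible using at most $t-r$ stabilizers of deficit at most $a+1$, and the hypothesis gives
\[
\LM(Q_g\ket v)\le 2^{(a+1)/2}\sqrt b\,\norm{Q_g\ket v}.
\]
The original decomposition is only guaranteed to have deficit at most $a$, not exactly $a$. This causes no problem, because the hypothesis covers every vector built from stabilizers of deficit at most $a+1$.

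\emph{Combining.} The triangle inequality for the norm $\LM$ and Jensen's inequality give
\[
\LM(\ket v)\le 2^{a/2}\sqrt D\norm{P\ket v}+2\cdot 2^{(a+1)/2}\sqrt b\,\bigl(\E_g\norm{Q_g\ket v}^2\bigr)^{1/2}
=2^{a/2}\Bigl(\sqrt D\norm{P\ket v}+2\sqrt b\,\norm{(I-P)\ket v}\Bigr).
\]
In the second term, $2\cdot 2^{(a+1)/2}\cdot 2^{-1/2}=2^{a/2}\cdot 2$, so the factor is $2\sqrt b=\sqrt{4b}$. Cauchy--Schwarz applied to the two terms, together with $\norm{P\ket v}^2+\norm{(I-P)\ket v}^2=\norm{\ket v}^2$, then yields
\[
\LM(\ket v)^2\le 2^a(D+4b)\norm{\ket v}^2 .
\]
Relative to the state case, the extra factor $2$ in front of $b$ comes exactly from the one-ebit loss.

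\emph{Main obstacle.} The step needing care is the deficit bookkeeping in the projected decomposition. \Cref{lem:deficit-step} is stated for a nonzero projection of a normalized stabilizer state. The remaining terms of $Q_g\ket v$ are exactly such projections, rescaled by scalars of modulus at most one, so the lemma applies to each of them. Terms that vanish are simply dropped.
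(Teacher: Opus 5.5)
Your proposal is correct and follows essentially the same route as the paper. Like the paper, you reuse the averaging identities, bound $P\ket v$ via \cref{lem:code-basis}, bound each $Q_g\ket v$ via \cref{lem:postselection,lem:deficit-step} and the hypothesis, and finish with Jensen and Cauchy--Schwarz to get $2^{a/2}\sqrt{D+4b}\,\norm{\ket v}$, including the same accounting $2\cdot2^{(a+1)/2}\cdot2^{-1/2}=2\cdot2^{a/2}$.
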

\begin{proof}
We use the same decomposition as in \cref{lem:state-block}, but account
for the possible loss of one ebit under projection. Let $P=P_{\mathcal C}$
and $Q_g=(I-g)/2$ for $g\in H_{\mathcal C}$. Each $Q_g$ removes the
selected stabilizer states. By \cref{lem:postselection,lem:deficit-step},
$Q_g\ket v$ is a combination of at most $t-r$ stabilizers of deficit at
most $a+1$. The assumption therefore bounds its coefficient norm by
$\sqrt{2^{a+1}b}\norm{Q_g\ket v}$.

Let $\E_g$ be the uniform average over the signed group $H_{\mathcal C}$.
The identities derived in \eqref{eq:state-average-identities} apply
unchanged:
\begin{equation}\label{eq:average-identities}
 \ket{v}=P\ket{v}+2\E_gQ_g\ket{v},\qquad
 \E_g\norm{Q_g\ket{v}}^2=\tfrac12\norm{(I-P)\ket{v}}^2.
\end{equation}
The code contains an original stabilizer state of deficit at most $a$, so
\cref{lem:code-basis} bounds the coefficient norm of $P\ket v$ by
$2^{a/2}\sqrt D\norm{P\ket v}$. Combining the two bounds and applying
Cauchy--Schwarz to the average gives
\begin{align*}
 \LM(\ket{v})
 &\le 2^{a/2}\sqrt D\norm{P\ket{v}}
       +2\sqrt{2^{a+1}b}\,\E_g\norm{Q_g\ket{v}}\\
 &\le 2^{a/2}\left(\sqrt D\norm{P\ket{v}}
                  +2\sqrt b\norm{(I-P)\ket{v}}\right)\\
 &\le2^{a/2}\sqrt{D+4b}\norm{\ket{v}}.
\end{align*}
The last inequality uses the orthogonality of $P\ket v$ and
$(I-P)\ket v$, just as in \cref{lem:state-block}. In that lemma the
recurrence was $D+2b$; here it is $D+4b$. The extra factor of two in
the squared recursive cost comes from the possible increase in deficit
from $a$ to $a+1$.
\end{proof}

\begin{theorem}[Finite rank--norm bound]\label{thm:finite-main}
Let $A$ be an operator on $n$ qubits and $\rC(A)\le k$.
For $k\ge2$ and every integer $2\le r\le k$,
\begin{equation}\label{eq:finite-bound}
 \LC(A)^2\le
 (2k)^{4r}\frac{4^{\floor{k/r}+1}-1}{3}\,2^{-n}\norm A_F^2.
\end{equation}
For every $k\ge1$, the additional estimate
\begin{equation}\label{eq:single-bound}
 \LC(A)^2\le\frac{4^k-1}{3}2^{-n}\norm A_F^2
\end{equation}
holds. The zero operator satisfies both estimates trivially.
\end{theorem}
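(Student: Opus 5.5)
We pass to the Choi picture and run the same induction as in \cref{thm:state-finite-main}, but over pairs (number of terms, deficit). By \eqref{eq:choi-atomic} and \eqref{eq:choi-isometry} it suffices to show $\LM(\J A)^2\le D\frac{4^{\floor{k/r}+1}-1}{3}\norm{\J A}^2$. If $A=\sum_{i\le k}c_iC_i$, then $\J A=\sum_i c_i\J{C_i}$, and each $\J{C_i}$ is a maximally entangled stabilizer state, i.e.\ of deficit $0$. So it is enough to prove a statement for vectors that are combinations of at most $t$ balanced-cut stabilizer states of deficit at most $a$, for arbitrary $a\ge0$ (we work on $2n$ qubits, so deficits are capped at $n$; the cap only helps).

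Fix $K=k$, $2\le r\le K$ and $D=(2K)^{4r}$. Define
\[
 b_t=D\sum_{j=0}^{\floor{t/r}}4^j=D\,\frac{4^{\floor{t/r}+1}-1}{3},
\]
so that $b_t=D$ for $t<r$ and $b_t=D+4b_{t-r}$ for $t\ge r$. The claim, proved by induction on $t$ simultaneously for all $a$, is the following.
\begin{quote}
Every $\ket w$ expressible with at most $t$ stabilizers of deficit at most $a$ satisfies $\LM(\ket w)^2\le 2^a b_t\norm{\ket w}^2$.
\end{quote}
The case $t=0$ is trivial. For $0<t<r$, \cref{lem:cluster} cannot produce a cluster of $r$ states, so it gives $2^aD=2^ab_t$. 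For $t\ge r$, \cref{lem:cluster} either gives $2^aD\le 2^ab_t$ directly, or a code of dimension at most $D$ containing $r$ of the states. In the latter case, the induction hypothesis at $(t-r,a+1)$ supplies exactly the assumption of \cref{lem:block} with $b=b_{t-r}$. That lemma yields $2^a(D+4b_{t-r})=2^ab_t$.

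The key point is that the hypothesis is quantified over all deficits, so raising the deficit from $a$ to $a+1$ is harmless. This is why the induction runs on $t$ with $a$ universally quantified, rather than inducting on $a$. Taking $t=k$ and $a=0$ gives \eqref{eq:finite-bound}.

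For \eqref{eq:single-bound} we eliminate one state at a time. A single stabilizer ray is a one-dimensional code containing a state of deficit at most $a$. So \cref{lem:block} with $D=1$, $r=1$ applies and gives the recurrence $c_0=0$, $c_t=1+4c_{t-1}$, with the claim $\LM^2\le 2^ac_t\norm{\cdot}^2$. Hence $c_t=(4^t-1)/3$, and $a=0$ finishes. Here we must check that \cref{lem:block} does not actually need $r\ge2$ or any clustering: its proof uses only the code projector averaging and \cref{lem:code-basis}, so it holds for $r=1$. The case $k=1$ in \eqref{eq:finite-bound} is excluded, and the zero operator is trivial.

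The main obstacle is bookkeeping rather than new mathematics. We must verify that \cref{lem:cluster} and \cref{lem:block} hold for states of arbitrary deficit $a$, which they do as stated. We must also verify that postselected terms keep their counts: a term annihilated by $Q_g$ simply drops out, and a nonzero term is a scalar multiple of a normalized stabilizer state of deficit at most $a+1$ by \cref{lem:postselection,lem:deficit-step}.
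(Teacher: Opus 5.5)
Your proposal is correct and follows the paper's proof essentially step for step: the same sequence $b_t=D\sum_{j=0}^{\floor{t/r}}4^j$, induction on $t$ with the deficit $a$ universally quantified, \cref{lem:cluster} and \cref{lem:block} for the two cases, the deficit-zero Choi vector at the end, and single-ray elimination giving $c_t=1+4c_{t-1}$ for \eqref{eq:single-bound}. Your explicit check that \cref{lem:block} needs no assumption $r\ge2$ is a useful clarification that the paper leaves implicit.
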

\begin{proof}
First we prove a bound for stabilizer combinations at every deficit;
then we apply its deficit-zero case to the Choi vector of $A$. Fix
$K=k$, $r$, and $D=(2K)^{4r}$. The block-elimination step now has
recurrence $b_t=D+4b_{t-r}$, so define
\[
 b_t=D\sum_{j=0}^{\floor{t/r}}4^j,\qquad 0\le t\le K.
\]
We induct on $t$, proving simultaneously for all nonnegative integers
$a$ that every vector
$\ket{v}$ with at most $t$ stabilizer summands of deficit at most $a$ obeys
$\LM(\ket{v})^2\le2^ab_t\norm{\ket{v}}^2$. We include all $a$ because
eliminating a block reduces the number of terms but may increase the
deficit. Values $a>n$ merely give weaker bounds on the same class of stabilizer states.

For $t<r$, the first alternative of \cref{lem:cluster} is impossible,
and the second alternative gives the claim. This includes $t=0$,
when $\ket{v}=0$. For $t\ge r$, either the explicit clustering bound gives
$2^aD\norm{\ket{v}}^2$, or \cref{lem:block} and induction give
$2^a(D+4b_{t-r})\norm{\ket{v}}^2=2^ab_t\norm{\ket{v}}^2$.

A rank-$k$ Clifford decomposition of $A$ gives a $k$-term decomposition of
$\J A$ into deficit-zero stabilizers. Apply the conclusion with $a=0$,
then use \eqref{eq:choi-isometry} and \eqref{eq:choi-atomic}. This proves
\eqref{eq:finite-bound}.

For \eqref{eq:single-bound}, eliminate one stabilizer state at a time, using its
one-dimensional ray as the code. The same induction gives
$c_0=0$ and $c_t=1+4c_{t-1}=(4^t-1)/3$. Applying the deficit-zero
case to $\J A$ proves the second bound.
\end{proof}

\begin{definition}[A uniform bound function]\label{def:envelope}
The function $\Bnd(k)$ is the monotone bound function used in the
Clifford rank--norm inequality below.
Set $\Bnd(0)=0$ and, for every integer $k\ge1$, define
\begin{equation}\label{eq:envelope}
 \Bnd(k)=2^{12\sqrt{k\log_2(2k)}}.
\end{equation}
The constant twelve is convenient rather than optimized.
\end{definition}

\begin{corollary}[Uniform operator inequality]\label{cor:main-envelope}
If $\rC(A)\le k$, then
\begin{equation}\label{eq:operator-envelope}
 \LC(A)^2\le\Bnd(k)2^{-n}\norm A_F^2.
\end{equation}
\end{corollary}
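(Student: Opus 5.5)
The plan is to deduce \cref{cor:main-envelope} from \cref{thm:finite-main} by choosing $r$ to balance the two factors, with \eqref{eq:single-bound} covering small $k$. The zero operator and $k=0$ are trivial, since then $A=0$. For $k=1$, \eqref{eq:single-bound} gives $\LC(A)^2\le 2^{-n}\norm A_F^2$, and $\Bnd(1)=2^{12}\ge1$. So assume $k\ge2$ and write $L=\log_2(2k)\ge2$. Taking logarithms in \eqref{eq:finite-bound} and using $(4^{m+1}-1)/3\le 4^{m+1}$ with $m=\floor{k/r}$, the exponent is at most
\[
 4rL+2\floor{k/r}+2\le 4rL+\frac{2k}{r}+2 .
\]
The target is to show this is at most $12\sqrt{kL}$ for a suitable integer $r\in[2,k]$.

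The natural choice is $r=\ceil{\sqrt{k/(2L)}}$, which balances $4rL$ against $2k/r$ at the real minimizer $\sqrt{k/(2L)}$. If $r\ge2$ and $r\le k$, then $r\le\sqrt{k/(2L)}+1$ and $r\ge\sqrt{k/(2L)}$. These give $4rL\le 2\sqrt{2}\sqrt{kL}+4L$ and $2k/r\le 2\sqrt{2}\sqrt{kL}$, so the exponent is at most $4\sqrt2\sqrt{kL}+4L+2$. Since $kL\ge L^2$ fails in general, I will bound $4L+2$ differently. In this regime $\sqrt{k/(2L)}>1$, so $k>2L$ and hence $L<\sqrt{kL/2}$. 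Then $4L+2\le 3\sqrt2\sqrt{kL}+\sqrt{kL}$, which is comfortably within $12\sqrt{kL}$ because $4\sqrt2+3\sqrt2+1<11$.

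The remaining regime is $\sqrt{k/(2L)}\le1$, that is, $k\le 2L=2\log_2(2k)$. This only allows small $k$; for example, $k\le 8$ satisfies $8\le 2\cdot 4$. Here I will instead use \eqref{eq:single-bound}, whose exponent is at most $2k$. It then suffices to check $2k\le 12\sqrt{kL}$, which holds because $k\le 2L$ gives $\sqrt k\le\sqrt{2L}$, so $2k=2\sqrt k\sqrt k\le 2\sqrt{2}\sqrt{kL}$. Alternatively, $r=2$ works whenever $r\le k$ is an issue. The constraint $r\le k$ is automatic in the first regime since $\sqrt{k/(2L)}+1\le k$ for $k\ge2$.

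I expect no real obstacle; the only care needed is the additive $4L+2$ term and the small-$k$ boundary where the ceiling forces $r=1$, both handled above. Finally, monotonicity of $\Bnd$ lets the bound for $\rC(A)=k'\le k$ imply the bound with $\Bnd(k)$. Alternatively, one can apply the argument directly with $k$, since \cref{thm:finite-main} already only assumes $\rC(A)\le k$.
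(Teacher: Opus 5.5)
Your proof is correct and is essentially the paper's: the same choice $r=\lceil\sqrt{k/(2L)}\rceil$ in \eqref{eq:finite-bound}, with \eqref{eq:single-bound} for small $k$, except that you split at $k\le 2L$ rather than the paper's $k\le 8L$, which is equally valid. Your side remark that $kL\ge L^2$ ``fails in general'' is actually false, since $k\ge\log_2(2k)$ for every integer $k\ge1$, but nothing in your argument depends on it.
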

This implies \cref{thm:intro-main}.
As in the state case, choosing $r$ of order $\sqrt{k/L}$, with
$L=\log_2(2k)$, balances the terms $4rL$ and $2k/r$ in the logarithm
of \eqref{eq:finite-bound}. The detailed estimates are given in
\cref{sec:app-uniform-bounds}.

\section{A nearly quadratic lower bound for AND}\label{sec:uncertainty}

\begin{definition}[Stable rank {\cite{MZ10}}]\label{def:stable-rank}
For $A\ne0$, define its \emph{stable rank} by
$\sr(A)=\norm A_F^2/\norm A_{\mathrm{op}}^2$.
It satisfies $1\le\sr(A)\le\rank A$.
\end{definition}

\begin{theorem}[Clifford stable-rank uncertainty]\label{thm:stable-rank}
If $A\ne0$ is an $n$-qubit operator of Clifford rank at most $k$, then
\begin{equation}\label{eq:stable-rank}
 \sr(A)\ge\frac{2^n}{\Bnd(k)}.
\end{equation}
In particular, $\rank A\ge2^n/\Bnd(k)$.
\end{theorem}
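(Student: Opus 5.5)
The proof combines two inequalities already in hand: the operator-norm lower bound \eqref{eq:op-vs-atomic} on the Clifford coefficient norm, and the upper bound of \cref{cor:main-envelope}. Both are squared and compared.

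Let $A\ne0$ with $\rC(A)\le k$. Note $k\ge1$, since only the zero operator has Clifford rank zero.

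\begin{enumerate}
\item Every Clifford unitary has operator norm one. So for any decomposition $A=\sum_j c_jC_j$ the triangle inequality gives $\norm A_{\mathrm{op}}\le\sum_j|c_j|$. Minimizing over decompositions gives $\norm A_{\mathrm{op}}\le\LC(A)$, which is \eqref{eq:op-vs-atomic}.
\item Squaring and applying \cref{cor:main-envelope} gives
\[
 \norm A_{\mathrm{op}}^2\le\LC(A)^2\le\Bnd(k)\,2^{-n}\norm A_F^2 .
\]
\item Since $A\ne0$, we may divide by $\norm A_{\mathrm{op}}^2>0$. This gives $\norm A_F^2/\norm A_{\mathrm{op}}^2\ge 2^n/\Bnd(k)$, i.e.\ $\sr(A)\ge 2^n/\Bnd(k)$.
\item The rank statement follows from $\rank A\ge\sr(A)$ in \cref{def:stable-rank}. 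That inequality holds because the squared Frobenius norm is the sum of the squared singular values. There are $\rank A$ nonzero singular values, each at most $\norm A_{\mathrm{op}}$.
\end{enumerate}

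There is no real obstacle: all of the difficulty sits in \cref{cor:main-envelope}, which we take as given. The only point needing care is that the corollary's bound is in the Choi-normalized form $2^{-n}\norm A_F^2$. The factor $2^n$ in the conclusion comes exactly from this normalization.
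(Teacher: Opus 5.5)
Your proposal is correct and follows the paper's proof: chain $\norm A_{\mathrm{op}}^2\le\LC(A)^2\le\Bnd(k)2^{-n}\norm A_F^2$ using \eqref{eq:op-vs-atomic} and \cref{cor:main-envelope}, divide by $\norm A_{\mathrm{op}}^2>0$, and use $\rank A\ge\sr(A)$. You also justify the two elementary inequalities and note $k\ge1$, which the paper takes for granted.
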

\begin{proof}
Since $A\ne0$, its operator norm is positive. By
\eqref{eq:op-vs-atomic} and \cref{cor:main-envelope},
\[
 \norm A_{\mathrm{op}}^2\le\LC(A)^2
 \le\Bnd(k)2^{-n}\norm A_F^2.
\]
Rearranging gives
$\sr(A)=\norm A_F^2/\norm A_{\mathrm{op}}^2\ge2^n/\Bnd(k)$.
The ordinary-rank bound follows from $\rank A\ge\sr(A)$.
\end{proof}

\begin{definition}[Quadratic-phase rank]\label{def:quadratic}
For $f:\F_2^n\to\C$, define its quadratic-phase rank as follows. All polynomials
below are Boolean multilinear polynomials over $\F_2$, and coefficients
$c_j$ may be complex.
The \emph{quadratic-phase rank} is the minimum number of quadratic
phases needed to express $f$ as a linear combination:
\begin{equation}
 R_2(f)=\min\left\{k:f=\sum_{j=1}^k c_j(-1)^{q_j},\quad
                                      \deg q_j\le2\right\}.
\end{equation}
Set $\QAND(n)=R_2(\AND_n)$, where $\AND_n(x)=\prod_i x_i$.
Flipping every input bit exchanges AND and NOR and preserves quadratic-phase
rank. Hence $\QAND(n)=R_2(\NOR_n)$, where
\[
 \NOR_n(x)=\begin{cases}1,&x=0,\\0,&x\ne0.\end{cases}
\]
\end{definition}

A quadratic phase is the diagonal of a Clifford generated by $Z$ and $CZ$.
Thus, for $D_f=\diag(f(x))$, using the function norms from
\cref{sec:function-norms},
\begin{equation}\label{eq:diagonal-lift}
 \rC(D_f)\le R_2(f),\qquad
 \norm{D_f}_F=\sqrt N\norm f_2,\qquad
 \norm{D_f}_{\mathrm{op}}=\norm f_\infty.
\end{equation}

\begin{corollary}[Support uncertainty]\label{cor:support}
For every nonzero $f$ with $R_2(f)\le k$, using the function norms from
\cref{sec:function-norms},
\begin{equation}\label{eq:support}
 \norm f_2^2\ge\frac{\norm f_\infty^2}{\Bnd(k)},\qquad
 |\supp f|\ge\frac{2^n}{\Bnd(k)}.
\end{equation}
\end{corollary}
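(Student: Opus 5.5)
The plan is to pass from $f$ to the diagonal operator $D_f=\diag(f(x))$ and apply \cref{thm:stable-rank} together with \eqref{eq:diagonal-lift}. Since $f\ne0$, also $D_f\ne0$. Moreover $\rC(D_f)\le R_2(f)\le k$, because each quadratic phase $(-1)^{q_j}$ is the diagonal of a Clifford built from $Z$ and $CZ$ gates.

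\cref{thm:stable-rank} then gives $\sr(D_f)\ge 2^n/\Bnd(k)$, that is,
\[
 \frac{\norm{D_f}_F^2}{\norm{D_f}_{\mathrm{op}}^2}\ge\frac{2^n}{\Bnd(k)}.
\]
Substituting $\norm{D_f}_F^2=N\norm f_2^2$ with $N=2^n$, and $\norm{D_f}_{\mathrm{op}}=\norm f_\infty$, from \eqref{eq:diagonal-lift}, the factors of $2^n$ cancel. This yields $\norm f_2^2\ge \norm f_\infty^2/\Bnd(k)$, the first inequality. Alternatively, one can chain \eqref{eq:op-vs-atomic} with \cref{cor:main-envelope} directly.

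For the support bound, combine this with \eqref{eq:function-support-norm}:
\[
 \frac{\norm f_\infty^2}{\Bnd(k)}\le\norm f_2^2\le\frac{|\supp f|}{2^n}\norm f_\infty^2.
\]
Since $f\ne0$, we have $\norm f_\infty>0$, and dividing by it gives $|\supp f|\ge 2^n/\Bnd(k)$. Equivalently, $|\supp f|\ge\rank D_f\ge 2^n/\Bnd(k)$ by the rank clause of \cref{thm:stable-rank}.

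There is no real obstacle here. The only points to check are the normalization conventions: the normalized $\norm f_2$ versus the unnormalized Frobenius norm, which is where the $\sqrt N$ in \eqref{eq:diagonal-lift} enters, and that a minimal representation of $f$ indeed lifts to a Clifford decomposition of $D_f$ with the same number of terms.
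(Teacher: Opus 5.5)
Your proposal is correct and follows essentially the same route as the paper: lift $f$ to $D_f$, apply \cref{thm:stable-rank} with the norm identities in \eqref{eq:diagonal-lift} to get the first inequality, then combine with \eqref{eq:function-support-norm} and cancel $\norm f_\infty^2>0$. Your alternative finish via $|\supp f|=\rank D_f\ge 2^n/\Bnd(k)$ from the rank clause of \cref{thm:stable-rank} is also valid.
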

\begin{proof}
By \eqref{eq:diagonal-lift}, $D_f$ has Clifford rank at most $k$ and
stable rank $2^n\norm f_2^2/\norm f_\infty^2$.
Applying \cref{thm:stable-rank} gives the first inequality.
Combine it with \eqref{eq:function-support-norm} from
\cref{lem:function-norms} to get
\[
 \frac{\norm f_\infty^2}{\Bnd(k)}
 \le\norm f_2^2
 \le\frac{|\supp(f)|}{2^n}\norm f_\infty^2.
\]
Since $f\ne0$, cancel $\norm f_\infty^2>0$ to obtain the support bound.
\end{proof}

\begin{lemma}[Logarithmic inversion]\label{lem:inversion}
For $k,L\ge1$,
\[
 L^2=O\!\bigl(k\log(1+k)\bigr)
 \quad\Longrightarrow\quad
 k=\Omega\!\left(\frac{L^2}{\log(1+L)}\right).
\]
\end{lemma}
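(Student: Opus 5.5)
We unpack the hypothesis into an explicit inequality: there is a universal constant $C\ge1$ with $L^2\le C\,k\log_2(1+k)$ for all the $k,L\ge1$ under consideration. We aim to show $k\ge c\,L^2/\log_2(1+L)$ for a universal $c>0$. The approach is a two-sided comparison: first obtain a crude polynomial upper bound on $k$ in terms of $L$ (or a crude lower bound on $k$), then use it to replace $\log(1+k)$ by $O(\log(1+L))$.

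\emph{Step 1: compare $\log(1+k)$ with $\log(1+L)$.} Split into two cases. If $k\ge L^2$, then since $L\ge1$ we have $L^2/\log_2(1+L)\le L^2\le k$, and the conclusion holds with $c=1$. Otherwise $k<L^2$, so $1+k\le 1+L^2\le(1+L)^2$, and hence
\[
\log_2(1+k)\le 2\log_2(1+L).
\]

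\emph{Step 2: solve for $k$.} In the second case, substitute the bound from Step 1 into the hypothesis:
\[
L^2\le C\,k\log_2(1+k)\le 2C\,k\log_2(1+L),
\]
which gives $k\ge L^2/\bigl(2C\log_2(1+L)\bigr)$. Note that $\log_2(1+L)\ge1$ because $L\ge1$, so there is no division by zero. Combining both cases yields $k\ge \min\{1,1/(2C)\}\,L^2/\log_2(1+L)$, which is the claimed $\Omega$ bound. The change of logarithm base only affects the constants.

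\emph{Main obstacle.} There is essentially none. The only subtlety is to avoid circularity when bounding $\log(1+k)$, since a bound on $k$ itself is exactly what we are trying to prove. The case split at $k=L^2$ is what removes this: in the regime where $k$ is large the conclusion is trivial, and in the complementary regime $k$ is polynomially bounded by $L$, so its logarithm is comparable to $\log L$. One should also be careful that the hypothesis is quantified with a uniform constant. Otherwise the $\Omega$ in the conclusion would not be universal, and this matters in the intended application, where $L$ will be $n$ and $k$ will be $\QAND(n)$ via \cref{cor:support} with $\Bnd(k)$.
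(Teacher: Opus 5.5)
Your proof is correct and follows the paper's argument essentially line for line. It uses the same case split at $k\ge L^2$, the same bound $\log_2(1+k)\le\log_2(1+L^2)\le2\log_2(1+L)$ in the complementary case, and reaches the same conclusion $k\ge L^2/(2C\log_2(1+L))$.
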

The proof is recorded in \cref{sec:app-inversion}. We use the equivalent
forms $\log(2k)$ and $\log(1+k)$ interchangeably in asymptotic estimates.

\begin{corollary}[Exact quadratic uncertainty for AND]\label{cor:and}
As $n\to\infty$,
\begin{equation}\label{eq:and-lower}
 \QAND(n)=\Omega\!\left(\frac{n^2}{\log n}\right).
\end{equation}
Every nonzero rank-one operator on $n$ qubits also has Clifford rank
$\Omega(n^2/\log n)$.
\end{corollary}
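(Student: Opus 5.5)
The plan is to combine the support uncertainty of \cref{cor:support} with the logarithmic inversion of \cref{lem:inversion}, working with $\NOR_n$ rather than $\AND_n$ since the two have equal quadratic-phase rank (\cref{def:quadratic}).

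\textbf{Step 1: a bound on $\Bnd(k)$.} Let $k=\QAND(n)=R_2(\NOR_n)$. The function $f=\NOR_n$ is nonzero and has support of size exactly one. So \cref{cor:support} gives $1=|\supp f|\ge 2^n/\Bnd(k)$, that is, $\Bnd(k)\ge 2^n$. For $n\ge1$ we have $k\ge1$, because $\NOR_n\ne0$. Taking logarithms in \eqref{eq:envelope} then gives
\[
 n\le 12\sqrt{k\log_2(2k)}, \qquad\text{so}\qquad n^2\le 144\,k\log_2(2k)=O\bigl(k\log(1+k)\bigr).
\]

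\textbf{Step 2: inversion.} Apply \cref{lem:inversion} with $L=n$. It yields $k=\Omega(n^2/\log(1+n))=\Omega(n^2/\log n)$, which is \eqref{eq:and-lower}.

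\textbf{Step 3: rank-one operators.} For a nonzero rank-one operator $A$ with $\rC(A)\le k$, \cref{thm:stable-rank} gives $1=\rank A\ge 2^n/\Bnd(k)$. This is the same inequality $\Bnd(k)\ge2^n$ as in Step 1, so Step 2 applies verbatim. Note that $\rC(A)\ge1$ since $A\ne0$.

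The argument is essentially immediate. The only points needing care are edge cases: ensuring $k\ge1$ so that the definition of $\Bnd$ applies, and confirming that \cref{lem:inversion} is used with its implicit constants uniform in $n$ (it is stated asymptotically with universal constants). I expect no substantive obstacle beyond this bookkeeping.
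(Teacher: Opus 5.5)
Your proposal is correct and follows the paper's argument: \cref{cor:support} forces $\Bnd(k)\ge 2^n$, \cref{lem:inversion} with $L=n$ then gives $k=\Omega(n^2/\log n)$, and \cref{thm:stable-rank} handles the rank-one case in the same way. The differences are cosmetic: passing to $\NOR_n$ is unnecessary since $\AND_n$ already has support of size one, and you use the rank form of \cref{thm:stable-rank} where the paper uses $\sr(A)=1$.
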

\begin{proof}
Let $k=R_2(\AND_n)=\QAND(n)$, the minimum number of quadratic phases
needed to represent $\AND_n$ (\cref{def:quadratic}). Since $\AND_n$ is
nonzero only at the all-ones input, $|\supp(\AND_n)|=1$.
Applying \cref{cor:support} gives $1\ge2^n/\Bnd(k)$, hence
$\Bnd(k)\ge2^n$. By \eqref{eq:envelope}, this means
\[
 2^{12\sqrt{k\log_2(2k)}}\ge2^n,
 \qquad k\log_2(2k)\ge n^2/144.
\]
Apply \cref{lem:inversion} with $L=n$ to obtain
$k=\Omega(n^2/\log n)$.

For the operator claim, let $A$ be a nonzero rank-one operator and set
$k=\rC(A)$, its Clifford rank. There is only one nonzero singular value,
so $\norm A_F=\norm A_{\mathrm{op}}$ and $\sr(A)=1$
by \cref{def:stable-rank}. Thus \cref{thm:stable-rank} gives
$1\ge2^n/\Bnd(k)$. The same calculation and inversion yield
$\rC(A)=\Omega(n^2/\log n)$.
\end{proof}

The AND bound concerns exact representation: the zero function already
approximates $\AND_n$ with squared $L_2$ error $2^{-n}$, so constant
absolute error cannot give a comparable lower bound.

\section{Fidelity bounds and stabilizer-rank lower bounds}\label{sec:state}

Throughout this section, $\Snd$ is the state bound function from
\cref{cor:state-envelope}.
\begin{theorem}[Uniform span and robust fidelity bounds]\label{thm:span}
Let $V$ be spanned by at most $k$ stabilizer states, and let $P_V$ denote
the orthogonal projector onto $V$. Then, for every vector
$\ket{\phi}$,
\begin{equation}\label{eq:span-bound}
 \norm{P_V\ket{\phi}}^2\le\Snd(k)\FS(\ket{\phi}).
\end{equation}
If $\ket{\phi}$ is normalized and $\chi_\delta(\ket{\phi})\le k$ with $0\le\delta<1$, then
\begin{equation}\label{eq:robust-fidelity}
 \FS(\ket{\phi})\ge\frac{1-\delta^2}{\Snd(k)}.
\end{equation}
\end{theorem}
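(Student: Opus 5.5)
We would prove \eqref{eq:span-bound} by testing $\ket\phi$ against the vector it projects onto. Put $\ket w=P_V\ket\phi$. If $\ket w=0$ there is nothing to prove. Otherwise $\ket w\in V$, and $V$ is spanned by at most $k$ stabilizer states, so $\chi(\ket w)\le k$. \Cref{cor:state-envelope} then gives
\[
 \LS(\ket w)\le\sqrt{\Snd(k)}\,\norm{\ket w}.
\]
Because $P_V$ is an orthogonal projector, $\ip{\phi}{w}=\ip{\phi}{P_V\phi}=\norm{P_V\ket\phi}^2=\norm{\ket w}^2$. Applying \cref{lem:overlap} with this $\ket\phi$ and $\ket v=\ket w$ yields
\[
 \norm{\ket w}^2=|\ip{\phi}{w}|\le\sqrt{\FS(\ket\phi)}\,\LS(\ket w)
 \le\sqrt{\FS(\ket\phi)}\sqrt{\Snd(k)}\,\norm{\ket w}.
\]
Dividing by $\norm{\ket w}>0$ and squaring gives $\norm{P_V\ket\phi}^2\le\Snd(k)\FS(\ket\phi)$.

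For \eqref{eq:robust-fidelity}, assume $\chi_\delta(\ket\phi)\le k$. By the equivalent form of \cref{def:approx} recorded after it, there is a span $V$ of at most $k$ stabilizer states with $\norm{(I-P_V)\ket\phi}\le\delta$. Since $\ket\phi$ is normalized, Pythagoras gives
\[
 \norm{P_V\ket\phi}^2=1-\norm{(I-P_V)\ket\phi}^2\ge1-\delta^2.
\]
Combining this with \eqref{eq:span-bound} gives $1-\delta^2\le\Snd(k)\FS(\ket\phi)$, which rearranges to the claim. The hypothesis $\delta<1$ only ensures the lower bound is positive.

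The argument has no real obstacle; it is a direct combination of \cref{cor:state-envelope} and \cref{lem:overlap}. The one step that needs care is bounding $\LS$ of the projected vector $P_V\ket\phi$, and not of $\ket\phi$ itself, whose rank is unknown. This is also why the bound needs no normalization of $\ket\phi$ in the first part.
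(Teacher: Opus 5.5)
Your proof is correct and follows essentially the same route as the paper. You apply \cref{cor:state-envelope} to $P_V\ket\phi$ and combine it with \cref{lem:overlap}, dividing by $\norm{P_V\ket\phi}$ where the paper normalizes first. For the robust bound you use the Pythagoras step with the closest point in the span $V$, exactly as the paper does.
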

\begin{proof}
We first bound the projection of $\ket\phi$ onto $V$. If
$P_V\ket\phi=0$, the bound is immediate. Otherwise, normalize this
projection to obtain
\[
 \ket v=\frac{P_V\ket\phi}{\norm{P_V\ket\phi}}.
\]
Since $V$ is spanned by at most $k$ stabilizer states, $\ket v$ has
stabilizer rank at most $k$. Its stabilizer extent therefore satisfies
$\xi(\ket v)\le\Snd(k)$ by \cref{cor:state-envelope}.
The overlap bound in \cref{lem:overlap}, together with
$\xi(\ket v)=\LS(\ket v)^2$, now gives
\[
 \norm{P_V\ket\phi}^2=|\ip\phi v|^2
 \le\FS(\ket\phi)\xi(\ket v)
 \le\Snd(k)\FS(\ket\phi).
\]
Here $\FS(\ket\phi)$ is the largest squared overlap of $\ket\phi$
with a normalized stabilizer state. This proves \eqref{eq:span-bound}.

For the second claim, suppose $\ket\phi$ is normalized and
$\chi_\delta(\ket\phi)\le k$. By \cref{def:approx}, there is a vector
$\ket w$ of stabilizer rank at most $k$ with
$\norm{\ket\phi-\ket w}\le\delta$. Let $V$ be the span of the at most
$k$ stabilizer states in a decomposition of $\ket w$.
The orthogonal projection $P_V\ket\phi$ is the closest vector in $V$
to $\ket\phi$, so its error is also at most $\delta$. Pythagoras gives
\[
 \norm{P_V\ket\phi}^2
 =1-\norm{\ket\phi-P_V\ket\phi}^2\ge1-\delta^2.
\]
Combining this with \eqref{eq:span-bound} and dividing by $\Snd(k)$
proves \eqref{eq:robust-fidelity}.
\end{proof}

\begin{corollary}[Fidelity forces nearly quadratic rank]\label{cor:rank-from-fid}
There is a universal $c>0$ such that, for a normalized state $\ket\phi$
and $0\le\delta<1$, if
\[
 L=\log_2\frac{1-\delta^2}{\FS(\ket{\phi})}\ge1,
\]
then
\begin{equation}\label{eq:rank-from-fid}
 \chi_\delta(\ket{\phi})\ge c\frac{L^2}{\log_2(1+L)}.
\end{equation}
More generally, if normalized $\ket{v},\ket{\phi}$ satisfy
$|\ip\phi v|^2\ge\alpha$ and
$L=\log_2(\alpha/\FS(\ket{\phi}))\ge1$, the same lower bound holds for $\chi(\ket{v})$.
\end{corollary}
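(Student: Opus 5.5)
Both claims follow from \cref{thm:span} (or directly from \cref{cor:state-envelope} and \cref{lem:overlap}), the explicit form of $\Snd$, and \cref{lem:inversion}. No new structural argument is needed; the work is in tracking constants so that the hypothesis $L\ge1$ gives a uniform $c$.

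\textbf{First claim.} Let $k=\chi_\delta(\ket\phi)$. If $k=0$, the approximating vector is $0$, so $\norm{\ket\phi}\le\delta<1$, which is impossible for a normalized state. Hence $k\ge1$. Then \eqref{eq:robust-fidelity} gives $\Snd(k)\ge(1-\delta^2)/\FS(\ket\phi)$, that is,
\[
 12\sqrt{k\log_2(2k)}\ge L .
\]
Squaring yields $k\log_2(2k)\ge L^2/144$.

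\textbf{Second claim.} Set $k=\chi(\ket v)\ge1$; the case $k=0$ cannot occur since $\ket v$ is normalized. By \cref{cor:state-envelope}, $\xi(\ket v)\le\Snd(k)$. By \cref{lem:overlap},
\[
 \alpha\le|\ip\phi v|^2\le\FS(\ket\phi)\,\xi(\ket v)\le\FS(\ket\phi)\Snd(k).
\]
This gives the same inequality $12\sqrt{k\log_2(2k)}\ge L$.

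\textbf{Inversion.} The remaining step converts $k\log_2(2k)\ge L^2/144$ into $k\ge cL^2/\log_2(1+L)$ with a universal $c$ valid for every $L\ge1$, not merely asymptotically. I would do this by hand rather than rely on the $O$-statement of \cref{lem:inversion}.

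Suppose $k<cL^2/\log_2(1+L)$. Since $k\ge1$, we have $L\ge1$, and $c$ is small, it follows that $2k\le L^2$. Hence $\log_2(2k)\le2\log_2 L\le2\log_2(1+L)$, and therefore
\[
 k\log_2(2k)<2cL^2 .
\]
This contradicts $k\log_2(2k)\ge L^2/144$ once $c\le1/288$.

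The case $2k>L^2$ needs separate care. There, $k>L^2/2\ge L^2/\log_2(1+L)$ for $L\ge3$, since then $\log_2(1+L)\ge2$. For small $L$, we have $k\ge1\ge cL^2/\log_2(1+L)$ provided $c\le1/9$, because for $L<3$ the ratio $L^2/\log_2(1+L)$ is below $9$. Taking $c=1/288$ covers every case.

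This case split on $L\ge1$ is the only delicate step, and it is routine. \Cref{lem:inversion} can also simply be cited for large $L$.
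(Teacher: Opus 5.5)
Your proposal is correct and follows the paper's proof. Both claims reduce, via \cref{thm:span} (equivalently \cref{cor:state-envelope} with \cref{lem:overlap}), to $k\log_2(2k)\ge L^2/144$ with $k\ge1$, and this is then inverted, which you do by hand instead of citing \cref{lem:inversion}; your separate treatment of $2k>L^2$ is harmless but unnecessary, since the contradiction hypothesis $k<cL^2/\log_2(1+L)$ with $c\le1/2$ already forces $2k<L^2$.
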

\begin{proof}
In the first case, set $k=\chi_\delta(\ket\phi)$, the approximate
stabilizer rank from \cref{def:approx}. The robust fidelity bound in
\cref{thm:span} gives $\Snd(k)\ge(1-\delta^2)/\FS(\ket\phi)=2^L$.
In the second case, set $k=\chi(\ket v)$ and choose a span $V$ of $k$
stabilizers containing $\ket v$. Since $\ket v$ is normalized,
$\alpha\le|\ip\phi v|^2\le\norm{P_V\ket\phi}^2$.
The span bound in \cref{thm:span} therefore gives
$\Snd(k)\ge\alpha/\FS(\ket\phi)=2^L$. In both cases $k\ge1$, and
\[
 \log_2\Snd(k)=12\sqrt{k\log_2(2k)}.
\]
Thus $k\log_2(2k)\ge L^2/144$; apply \cref{lem:inversion}.
\end{proof}

\begin{proposition}[Fidelity multiplicativity {\cite{BBC19}}]\label{prop:alignment}
If each normalized pure state $\ket{\phi_j}$ acts on at most three qubits, then
\[
 \FS\!\left(\bigotimes_j\ket{\phi_j}\right)=\prod_j\FS(\ket{\phi_j}).
\]
\end{proposition}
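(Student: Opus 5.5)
The equality in \cref{prop:alignment} has an easy direction and a hard direction, and I would treat them separately. Since the statement is quoted from \cite{BBC19}, the plan is to isolate the one nontrivial ingredient from that paper and reduce everything else to it.

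\emph{Lower bound.} The inequality $\FS(\bigotimes_j\ket{\phi_j})\ge\prod_j\FS(\ket{\phi_j})$ is immediate. A tensor product of stabilizer states is a stabilizer state, and the overlap with a product state factorizes, so the maximizing stabilizer states for the individual factors give a product stabilizer state attaining the product.

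\emph{Upper bound.} I would induct on the number of factors, so it suffices to prove
\[
 \FS(\ket\phi\otimes\ket\psi)\le\FS(\ket\phi)\FS(\ket\psi)
\]
when $\ket\phi$ acts on a register $A$ of at most three qubits and $\ket\psi$ is itself such a product state, for which the inductive hypothesis gives multiplicativity. Fix a normalized stabilizer state $\ket s$ on $A\otimes B$.
\begin{enumerate}[label=(\roman*)]
\item Put $\ket s$ in the bipartite normal form \eqref{eq:normal-form} (which applies to unbalanced cuts as well). This gives $\ket s=2^{-m/2}\sum_{x\in\F_2^m}\ket{a_x}\ket{b_x}$, where $\{\ket{a_x}\}$ and $\{\ket{b_x}\}$ are orthonormal stabilizer families and each is a set of Pauli translates inside a code, as in the proof of \cref{lem:code-basis}.
\item Expand $\ip{s}{\phi\otimes\psi}=2^{-m/2}\sum_x\ip{a_x}{\phi}\ip{b_x}{\psi}$.
\item Bound the $B$-side coefficients $\beta_x=\ip{b_x}{\psi}$ in two ways. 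Each satisfies $|\beta_x|^2\le\FS(\ket\psi)$. In addition, for every choice of phases $\omega_x$, the vector $2^{-m/2}\sum_x\omega_x\ket{b_x}$ is a stabilizer state whenever the phase pattern is a stabilizer-type (quadratic) phase, so the corresponding phased sums $|2^{-m/2}\sum_x\omega_x\beta_x|^2$ are also at most $\FS(\ket\psi)$.
\end{enumerate}

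The remaining task is to show that, for $\ket\phi$ on at most three qubits, the vector $(\ip{a_x}{\phi})_x$ can be paired against such $\beta$ losing only a factor $\sqrt{\FS(\ket\phi)}$. This is the \emph{stabilizer alignment} property of \cite{BBC19}, which I would phrase through extent duality:
\[
 \xi(\ket\chi)=\max\bigl\{|\ip{y}{\chi}|^2:\ |\ip{s}{y}|\le1\ \text{for all }\ket s\in\Stab\bigr\}.
\]
For states on at most three qubits, \cite{BBC19} show that $\xi(\ket\phi)=1/\FS(\ket\phi)$, with optimal witness $y=\ket\phi/\sqrt{\FS(\ket\phi)}$. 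Multiplicativity of extent then amounts to feasibility of the tensor-product witness, namely $|\ip{s}{\phi\otimes\psi}|\le\sqrt{\FS(\ket\phi)\FS(\ket\psi)}$ for every $\ket s$, which is exactly the desired fidelity bound.

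\emph{Expected obstacle.} The main obstacle is this alignment step. It is where the three-qubit restriction enters, and the statement is known to fail in general \cite{HMMVG21}. I would first try to reduce it to a finite check. Because $A$ has at most three qubits, the Schmidt rank satisfies $m\le3$, and there are only finitely many local-Clifford classes of the $A$-side families $\{\ket{a_x}\}$. For each class, one verifies the alignment inequality directly from the explicit description of stabilizer states on at most three qubits, following \cite{BBC19}. The other steps (the normal form, the expansion, and the induction over factors) are routine.
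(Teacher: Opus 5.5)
The paper does not prove this proposition; it imports it from \cite{BBC19}. Your lower bound and the normal-form expansion in steps (i)--(ii) are fine, but the hard direction is never established. The extent-duality paragraph is circular. Feasibility of the product witness $\ket\phi\otimes\ket\psi/\sqrt{\FS(\ket\phi)\FS(\ket\psi)}$ is, as you say yourself, exactly the inequality $|\ip{s}{\phi\otimes\psi}|^2\le\FS(\ket\phi)\FS(\ket\psi)$ that you set out to prove. The implication you describe (fidelity multiplicativity plus $\xi=1/\FS$ on small factors gives extent multiplicativity) points toward extent, so it supplies nothing for the fidelity bound.

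The ``finite check'' does not close the gap either. You never say what inequality is being checked, and whatever it is must hold for every $\ket\phi$ in a continuum. Concretely, take $\ket{a_x}=U_A\ket{x,0}$ and $\ket{b_x}=U_B\ket{x,0}$. Then step (ii) reads $\ip{s}{\phi\otimes\psi}=2^{-m/2}\sum_x\alpha_x\beta_x$, where $\alpha=(I\otimes\bra{0})U_A^\dagger\ket\phi$ and $\beta=(I\otimes\bra{0})U_B^\dagger\ket\psi$ are $m$-qubit vectors. The usable information about $\beta$ is $|\ip{t}{\beta}|\le\sqrt{\FS(\ket\psi)}$ for \emph{every} $m$-qubit stabilizer state $\ket t$, not only for quadratic-phase sums. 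Since $\beta\mapsto\max_t|\ip{t}{\beta}|$ is the dual norm of $\LS$, the worst case of $|\ip{s}{\phi\otimes\psi}|^2$ is exactly $2^{-m}\xi(\alpha)\FS(\ket\psi)$, attained by a suitable $\ket\psi\propto U_B(\ket y\otimes\ket0)$. So your reduction comes down to proving
\[
 \xi\bigl((I\otimes\bra{0^{|A|-m}})U\ket\phi\bigr)\le2^m\FS(\ket\phi)
 \qquad\text{for every Clifford }U\text{ on }A\text{ and every }0\le m\le|A|.
\]
This is genuinely nontrivial. At $m=|A|$, together with $\xi\ge1/\FS$, it forces $\FS(\ket\phi)\ge2^{-|A|/2}$. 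That condition is necessary: the state $2^{-|A|/2}\sum_x\ket{x}\ket{x}$ has overlap $2^{-|A|/2}$ with $\ket\phi\otimes\ket{\bar\phi}$, while $\FS(\ket{\bar\phi})=\FS(\ket\phi)$. Typical states on many qubits violate it, and this is where the three-qubit hypothesis enters.

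Establishing the display for $|A|\le3$ is the entire content of the proposition, and your proposal defers it to \cite{BBC19} rather than proving it. Note also that once the display holds for $\ket\phi$, it gives $\FS(\ket\phi\otimes\ket\psi)=\FS(\ket\phi)\FS(\ket\psi)$ for arbitrary $\ket\psi$, so no inductive hypothesis on $\ket\psi$ is needed.
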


\begin{corollary}[Magic-state tensor powers]\label{cor:magic-rank}
Let $\ket{\phi}\in\C^2$ be a fixed normalized non-stabilizer state. For every
fixed $0\le\delta<1$,
\[
 \chi_\delta(\ket{\phi}^{\otimes n})
 =\Omega_{\phi,\delta}(n^2/\log n).
\]
In particular, for $\ket T=(\ket0+e^{i\pi/4}\ket1)/\sqrt2$,
\begin{equation}\label{eq:T-rank}
 \chi_\delta(\ket{T}^{\otimes n})=\Omega_\delta(n^2/\log n).
\end{equation}
\end{corollary}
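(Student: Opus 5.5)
The plan is to combine \cref{cor:rank-from-fid} with the fidelity multiplicativity of \cref{prop:alignment}. Since $\ket\phi$ is a one-qubit state, \cref{prop:alignment} applies with every factor on a single qubit and gives
\[
 \FS(\ket\phi^{\otimes n})=\FS(\ket\phi)^n.
\]
Set $f=\FS(\ket\phi)$. The first step is to show that $f<1$ strictly. Indeed $\Stab_1$ is finite (six rays), and $f$ is a maximum of $|\ip s\phi|^2$ over it. For a normalized $\ket\phi$, the equality $|\ip s\phi|=1$ holds only if $\ket\phi$ is a phase multiple of $\ket s$. Because $\ket\phi$ is non-stabilizer, each overlap is strictly less than one, so $f<1$. (For $\ket T$ one has $f=\cos^2(\pi/8)$.) Put $\gamma=\log_2(1/f)>0$, a constant depending only on $\phi$.

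Next apply the first part of \cref{cor:rank-from-fid} to the normalized state $\ket\phi^{\otimes n}$. Its parameter is
\[
 L_n=\log_2\frac{1-\delta^2}{f^n}=\gamma n+\log_2(1-\delta^2).
\]
Since $\delta<1$ is fixed, $\log_2(1-\delta^2)$ is a finite constant. Hence $L_n\ge\gamma n/2\ge1$ for all $n\ge n_0(\phi,\delta)$. For such $n$ the corollary gives
\[
 \chi_\delta(\ket\phi^{\otimes n})\ge c\frac{L_n^2}{\log_2(1+L_n)}.
\]
The function $x\mapsto x^2/\log_2(1+x)$ is increasing for $x\ge1$, and $\gamma n/2\le L_n\le\gamma n$. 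The right-hand side is therefore at least $c(\gamma n/2)^2/\log_2(1+\gamma n)=\Omega_{\phi,\delta}(n^2/\log n)$. The finitely many $n<n_0$ are absorbed into the implied constant, using $\chi_\delta\ge1$ since $\delta<1$. The $T$-state statement \eqref{eq:T-rank} is the special case $\ket\phi=\ket T$, which is non-stabilizer.

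The only step that needs genuine care is the strict inequality $f<1$ together with the applicability of \cref{prop:alignment}. Both are elementary, because the one-qubit stabilizer set is finite and explicit. The rest is bookkeeping of constants: $\gamma$ depends on $\phi$ and the additive $\log_2(1-\delta^2)$ depends on $\delta$. This is why the bound is $\Omega_{\phi,\delta}$ rather than uniform.
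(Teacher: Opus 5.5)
Your proposal is correct and follows essentially the same route as the paper: multiplicativity gives $\FS(\ket\phi^{\otimes n})=\FS(\ket\phi)^n$ with $\FS(\ket\phi)<1$, and \cref{cor:rank-from-fid} is then applied with $L=\Theta_{\phi,\delta}(n)$. Your monotonicity argument for $x^2/\log_2(1+x)$ and your use of $\chi_\delta\ge1$ for small $n$ only spell out what the paper summarizes as ``for sufficiently large $n$''.
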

\begin{proof}
Since $\ket\phi$ is not a stabilizer and the family of one-qubit stabilizer
states is finite, $\gamma=\FS(\ket\phi)$ satisfies $0<\gamma<1$.
By \cref{prop:alignment}, $\FS(\ket{\phi}^{\otimes n})=\gamma^n$.
For fixed $\delta<1$, the parameter in \cref{cor:rank-from-fid} is
\[
 L=\log_2\frac{1-\delta^2}{\gamma^n}
   =n\log_2(1/\gamma)+\log_2(1-\delta^2).
\]
The first term grows linearly in $n$ and the second is constant.
Thus $L=\Theta_{\phi,\delta}(n)$, and the corollary gives the stated
$\Omega_{\phi,\delta}(n^2/\log n)$ lower bound for sufficiently large $n$.
For $\ket{T}$, direct comparison with the six Pauli eigenstates gives
$\gamma=\cos^2(\pi/8)$.
\end{proof}
For the $T$ state, a nearly quadratic constant-error bound was already
proved in \cite[Theorem~1.1]{MT24}. The point here is the explicit
$\log n$ loss obtained uniformly for every fixed one-qubit
non-stabilizer state from the rank--extent inequality.

\begin{corollary}[Exponentially small target overlap]\label{cor:tiny-overlap}
Suppose normalized targets $\ket{\phi_n}$ have $\FS(\ket{\phi_n})\le q^n$ and
normalized $\ket{v_n}$ satisfy $|\ip{\phi_n}{v_n}|^2\ge p^n$, where
$0<q<p<1$ are fixed. Then $\chi(\ket{v_n})=\Omega_{p,q}(n^2/\log n)$.
\end{corollary}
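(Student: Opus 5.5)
This follows directly from the second part of \cref{cor:rank-from-fid}, applied with $\ket\phi=\ket{\phi_n}$, $\ket v=\ket{v_n}$ and $\alpha=p^n$. By hypothesis $|\ip{\phi_n}{v_n}|^2\ge p^n$ and $\FS(\ket{\phi_n})\le q^n$. Hence the parameter
\[
 L_n=\log_2\frac{p^n}{\FS(\ket{\phi_n})}\ge n\log_2\frac pq .
\]
Since $q<p$ are fixed, the constant $\kappa=\log_2(p/q)$ is positive. So $L_n\ge\kappa n\ge1$ as soon as $n\ge1/\kappa$. For all such $n$, the corollary yields
\[
 \chi(\ket{v_n})\ge c\,\frac{L_n^2}{\log_2(1+L_n)}.
\]

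It remains to replace $L_n$ by $n$. The map $x\mapsto x^2/\log_2(1+x)$ is increasing on $[1,\infty)$: its numerator grows quadratically while its denominator grows only logarithmically, and a derivative check confirms this. So we may substitute the lower bound $L_n\ge\kappa n$, which gives $\chi(\ket{v_n})\ge c\,\kappa^2n^2/\log_2(1+\kappa n)$. For $n$ large, $\log_2(1+\kappa n)=O_{p,q}(\log n)$, because $\kappa\le\log_2(1/q)$ is a fixed constant. This gives the claimed $\Omega_{p,q}(n^2/\log n)$.

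Two points need attention. First, we only get a lower bound on $L_n$, not its exact value, which is exactly why the monotonicity step is needed. Second, the hypothesis of \cref{cor:rank-from-fid} requires $L\ge1$, which only holds for sufficiently large $n$. Both are routine, and the finitely many small $n$ are absorbed into the implicit constant.
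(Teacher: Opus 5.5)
Your proposal is correct and follows the paper's own argument: apply the overlap version of \cref{cor:rank-from-fid} with $\alpha=p^n$, observe $L\ge n\log_2(p/q)$, and conclude. The monotonicity of $x^2/\log_2(1+x)$ and the requirement $L\ge1$ for large $n$ are details the paper leaves implicit, and you handle both correctly.
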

\begin{proof}
Apply the overlap version of \cref{cor:rank-from-fid} with $\alpha=p^n$.
Its parameter satisfies
\[
 L=\log_2\frac{p^n}{\FS(\ket{\phi_n})}
 \ge\log_2\frac{p^n}{q^n}=n\log_2(p/q).
\]
Since $p/q>1$ is fixed, the resulting lower bound is
$\Omega_{p,q}(n^2/\log n)$.
\end{proof}

For magic-state tensor powers in this exponentially small-overlap regime,
\cite[Corollary~3]{KS26} gives an $\Omega(n/\log n)$ conclusion. The
improvement in \cref{cor:tiny-overlap} comes from a uniform optimal-norm
bound, not from controlling the particular approximating coefficients.

\begin{corollary}[The quantitative rank--fidelity conjecture]\label{cor:MT-conj}
There is a universal constant $C$ such that every normalized state of
stabilizer rank at most $k$ satisfies $\FS(\ket{\psi})\ge2^{-k}/C$.
In fact,
\[
 \FS(\ket{\psi})\ge2^{-O(\sqrt{k\log(k+1)})}.
\]
\end{corollary}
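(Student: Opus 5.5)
The plan is to derive the statement directly from \cref{cor:state-envelope} combined with \cref{lem:overlap}. Take a normalized $\ket\psi$ with $\chi(\ket\psi)\le k$. \Cref{cor:state-envelope} gives $\xi(\ket\psi)\le\Snd(k)\norm{\ket\psi}^2=\Snd(k)$. The normalized case of \cref{lem:overlap} gives $\xi(\ket\psi)\FS(\ket\psi)\ge1$. Together these yield
\[
 \FS(\ket\psi)\ge\frac1{\Snd(k)}=2^{-12\sqrt{k\log_2(2k)}}.
\]
Since $\log_2(2k)=1+\log_2k\le2\log_2(k+1)$ for $k\ge1$, this is $2^{-O(\sqrt{k\log(k+1)})}$, which is the second displayed claim. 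Alternatively, one can apply \eqref{eq:robust-fidelity} with $\delta=0$.

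The first claim then reduces to comparing exponents. We need a universal $C$ with $2^{-12\sqrt{k\log_2(2k)}}\ge2^{-k}/C$ for every $k\ge1$, which is the same as
\[
 k-12\sqrt{k\log_2(2k)}\ge-\log_2C .
\]
The function $g(k)=k-12\sqrt{k\log_2(2k)}$ tends to $+\infty$, because $\sqrt{k\log k}=o(k)$. It is therefore bounded below on the positive integers. To make this concrete, note that $g(k)\ge0$ as soon as $k\ge144\log_2(2k)$, which holds for all $k\ge2^{11}$ or so. On the finitely many smaller $k$, we have $g(k)\ge-12\sqrt{k\log_2(2k)}$, and this is bounded by an absolute constant. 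Taking $C=2^{-\min_k g(k)}$ finishes the argument.

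Nothing here is a genuine obstacle; the only care needed is the finite-range constant. One could also avoid it entirely by using the second estimate of \cref{thm:state-finite-main}, namely $\xi\le2^k-1$, which gives $\FS\ge1/(2^k-1)\ge2^{-k}$. That yields the conjecture with $C=1$ directly, and the square-root improvement then comes from \cref{cor:state-envelope}.
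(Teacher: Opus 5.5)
Your argument is correct and is essentially the paper's. The paper applies \cref{thm:span} with $\delta=0$, whose proof is exactly \cref{cor:state-envelope} combined with \cref{lem:overlap}, to get $\FS(\ket\psi)\ge 2^{-12\sqrt{k\log_2(2k)}}$, and then absorbs the finitely many small $k$ into $C$, just as you do. Your closing observation is also valid and sharper than what the paper records: the single-term elimination bound $\xi(\ket\psi)\le 2^k-1$ from \cref{thm:state-finite-main} already gives $\FS(\ket\psi)\ge 1/(2^k-1)\ge 2^{-k}$, which is the conjecture with $C=1$.
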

\begin{proof}
By \cref{thm:span} with $\delta=0$ and monotonicity of $\Snd$,
\[
 \FS(\ket\psi)\ge\frac1{\Snd(k)}
 =2^{-12\sqrt{k\log_2(2k)}}.
\]
This proves the stronger bound. Since $12\sqrt{k\log_2(2k)}$ grows
more slowly than $k$, it is at most $k$ for all sufficiently large $k$.
For these $k$ the fidelity is at least $2^{-k}$. A single constant
$C\ge1$ covers the finitely many smaller values as well.
\end{proof}
This proves the quantitative conjecture following Theorem~1.3 of
\cite{MT25}. The robust form \eqref{eq:robust-fidelity} also allows every
fixed approximation error below one, independently of $k$.

\section{Rank lower bounds for pseudorandom states and unitaries}\label{sec:pseudo}

We use the Bell-difference fidelity test stated in \cref{prop:distinguisher}
as a black box. Our contribution is the rank threshold obtained from the
rank--fidelity inequality.

\subsection{A Bell-difference fidelity test}

The test detects stabilizer fidelity using copies of the unknown state,
without requiring a stabilizer decomposition. Our rank--fidelity bound
ensures that states of low approximate rank meet its acceptance threshold.
The pointwise rejection guarantee outside a small exceptional Haar set
allows us to amplify the test when such states occur only with
inverse-polynomial probability in an ensemble.

\begin{proposition}[Bell-difference fidelity test {\cite[Theorem~23 and Lemma~15]{GIKL23}}]\label{prop:distinguisher}
Let $\gamma=\gamma(n)\ge1$ be polynomially bounded in $n$.
For sufficiently large $n$, there is a test using $\poly(\gamma)$ copies
and $\poly(n,\gamma)$ time that accepts every pure state with stabilizer
fidelity at least $1/\gamma$ with probability at least $2/3$.
Outside an exceptional set of pure states of Haar measure at most
$\exp(-2^{\Omega(n)})$, its acceptance probability is at most $1/3$.
\end{proposition}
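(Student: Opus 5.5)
This proposition is cited from \cite[Theorem~23 and Lemma~15]{GIKL23} and used as a black box, so the plan is to assemble it from those two ingredients rather than re-derive them. The test I would use is Bell difference sampling. From four copies of $\ket\psi$, Bell-basis measurements on pairs and XOR of the outcomes produce a Pauli label $x\in\F_2^{2n}$. This label is distributed according to the convolution $q_\psi=4\,p_\psi*p_\psi$ of the characteristic distribution $p_\psi(x)=2^{-n}|\bra\psi W_x\ket\psi|^2$. The test then estimates $\E_{x\sim q_\psi}|\bra\psi W_x\ket\psi|^2$ by measuring $W_x^{\otimes2}$ on a further pair of copies, and accepts if the empirical average exceeds a threshold $\tau$ depending on $\gamma$.

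\emph{Completeness.} I would invoke the GIKL lemma, Lemma~15 in the citation, which gives a lower bound on the accepted statistic of the form $\E_{x\sim q_\psi}|\bra\psi W_x\ket\psi|^2\ge\FS(\ket\psi)^{c}$ for a fixed constant $c$. The argument behind it is that a stabilizer state $\ket s$ with large overlap forces $p_\psi$ to place mass at least polynomial in $\FS$ on the stabilizer group $G_s$. Bell difference sampling preserves this concentration, since $G_s$ is a subgroup closed under addition in $\F_2^{2n}$. With $\FS\ge1/\gamma$, the statistic is at least $\gamma^{-c}$. Taking $\tau=\gamma^{-c}/2$ and $O(\gamma^{2c})$ repetitions, a Hoeffding bound gives acceptance probability at least $2/3$. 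This uses $\poly(\gamma)$ copies, and each round costs $\poly(n)$ gates, so the time is $\poly(n,\gamma)$.

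\emph{Soundness.} For a Haar-random state, every $x\ne0$ has $\E|\bra\psi W_x\ket\psi|^2=1/(2^n+1)$. The $x=0$ term carries only weight $q_\psi(0)=4\sum_y p_\psi(y)^2$, which is $O(2^{-n})$ in expectation. Hence the expected statistic is $O(2^{-n})$. Lévy's lemma applied to this Lipschitz polynomial in $\ket\psi$ gives that it exceeds $\tau/2$ with probability at most $\exp(-2^{\Omega(n)})$, because $\tau$ is only inverse-polynomial. Outside that set, Hoeffding with the same repetition count yields acceptance probability at most $1/3$, provided $n$ is large enough that $O(2^{-n})\ll\gamma^{-c}$. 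This is where the polynomial bound on $\gamma$ enters.

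The main obstacle is the completeness lemma linking stabilizer fidelity to the Bell-difference statistic with only polynomial loss. This is exactly the content imported from GIKL, and I would cite it rather than reprove it. The concentration step is routine once the Lipschitz constant of the statistic, which is $O(1)$, is checked.
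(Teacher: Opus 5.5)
Your plan is essentially the paper's own route, since the paper gives no argument here and imports this test directly from \cite{GIKL23}. Your sketch holds up: the accepted statistic equals $2^{-n}\sum_x|\bra\psi W_x\ket\psi|^6$, which is at least $\FS(\ket\psi)^6$ by H\"older on the stabilizer group $G_s$ of a maximizing $\ket s$ (so $c=6$) and is $O(1)$-Lipschitz, so the L\'evy step works --- though you should fix the normalization to $q_\psi(x)=\sum_y p_\psi(y)p_\psi(x+y)$ and read the Haar mean $2^{-n}+O(4^{-n})$ off that identity, rather than multiplying expectations of the dependent quantities $q_\psi(x)$ and $|\bra\psi W_x\ket\psi|^2$.
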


\subsection{Approximate stabilizer rank of pseudorandom states}
Constant-rank states were excluded from pseudorandom ensembles by
\cite{MT25}, and \cite{KS26} obtained a rank threshold of order
$\log n/\log\log n$. The following theorem raises this threshold to order
$(\log n)^2/\log\log n$ and gives a negligible-exception statement for
every fixed approximation error below one, using \cref{prop:distinguisher}.
\begin{definition}[Pseudorandom pure-state ensemble]\label{def:prs}
An efficiently preparable ensemble $\{\ket{\psi_{n,s}}\}_s$ on $n$ qubits is
pseudorandom if no polynomial-time quantum algorithm, given polynomially
many copies of one sampled state, distinguishes it with nonnegligible
advantage from the same number of copies of one Haar-random state.
A function is negligible if it is eventually smaller than $n^{-c}$ for
every $c>0$.
\end{definition}
This is the copy-access pseudorandomness notion of \cite{JLS18}; our
lower bounds only use its indistinguishability requirement.

\begin{theorem}[Negligible mass of low approximate rank]\label{thm:prs}
For every fixed $C>0$ and $0\le\delta<1$, a pseudorandom pure-state ensemble
satisfies
\begin{equation}\label{eq:prs}
 \Pr_s\left[
 \chi_\delta(\ket{\psi_{n,s}})
 \le C\frac{(\log_2n)^2}{\log_2\log_2n}
 \right]=\negl(n).
\end{equation}
The expression is used for sufficiently large $n$.
\end{theorem}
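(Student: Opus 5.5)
We argue by contradiction. Suppose that for some fixed $C>0$, $0\le\delta<1$ and some $c>0$ there are infinitely many $n$ with
\[
 p_n:=\Pr_s\bigl[\chi_\delta(\ket{\psi_{n,s}})\le k_n\bigr]\ge n^{-c},
 \qquad
 k_n=\floor{C(\log_2n)^2/\log_2\log_2n}.
\]
We build a polynomial-time distinguisher with advantage $\Omega(n^{-c})$, which contradicts \cref{def:prs}.

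\emph{Threshold.} First convert the rank condition into a fidelity threshold. By the robust bound \eqref{eq:robust-fidelity} in \cref{thm:span}, every normalized state with $\chi_\delta(\ket\psi)\le k_n$ satisfies $\FS(\ket\psi)\ge(1-\delta^2)/\Snd(k_n)$. For large $n$ we have $\log_2(2k_n)\le 2\log_2\log_2 n$, hence
\[
 \log_2\Snd(k_n)=12\sqrt{k_n\log_2(2k_n)}\le 12\sqrt{2C}\,\log_2n .
\]
So $\Snd(k_n)\le n^{12\sqrt{2C}}$. Set $\gamma=\gamma(n)=n^{12\sqrt{2C}}/(1-\delta^2)$. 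This $\gamma$ is polynomially bounded, and every low-rank state has stabilizer fidelity at least $1/\gamma$. The specific scaling $(\log n)^2/\log\log n$ is exactly what makes $\sqrt{k\log k}=O(\log n)$.

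\emph{Test.} Next invoke \cref{prop:distinguisher} with this $\gamma$. On the ensemble side, each single run of the test accepts with probability at least $2/3$ on the low-rank fraction $p_n\ge n^{-c}$ of keys. On the Haar side, it accepts with probability at most $1/3$ outside a set of measure $\exp(-2^{\Omega(n)})$.

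\emph{Main obstacle: amplification.} The expected acceptance on the ensemble side is not obviously larger than on the Haar side, because the high-rank keys may be accepted with arbitrary probability. We handle this by repeating the test $m=O(\log n)$ times on fresh copies of the same sampled state and outputting ``ensemble'' iff a majority of runs accept. This uses $\poly(n)$ copies and time.

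For a Haar state outside the exceptional set, a Chernoff bound makes the majority-accept probability at most $n^{-2c}$. Including the exceptional set, the Haar acceptance probability is at most $n^{-2c}+\exp(-2^{\Omega(n)})$.

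On the ensemble side, each low-rank key yields majority-accept with probability at least $1-n^{-2c}$. Hence the ensemble acceptance probability is at least $p_n(1-n^{-2c})\ge n^{-c}/2$.

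The advantage is therefore at least $n^{-c}/2-2n^{-2c}$, which is non-negligible along infinitely many $n$. This is the required contradiction.

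One detail still needs checking: repeating on a single sampled state is permitted, since the model gives polynomially many copies of one state. Under the Haar side the runs are conditionally independent given the state, so the Chernoff bound applies pointwise, and each run's guarantees hold for every fixed state.
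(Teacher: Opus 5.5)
Your proof is correct and follows essentially the same route as the paper: you convert low approximate rank into an inverse-polynomial stabilizer-fidelity lower bound via \eqref{eq:robust-fidelity}, then amplify the Bell-difference test of \cref{prop:distinguisher} by a majority vote over $O(\log n)$ fresh-copy runs, so that the Haar acceptance drops below the non-negligible low-rank mass $p_n$. Your explicit bound $\Snd(k_n)\le n^{12\sqrt{2C}}$ and your error level $n^{-2c}$ just make concrete the paper's $n^{-O_{C,\delta}(1)}$ threshold and its choice $\beta=n^{-b-2}$.
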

\begin{proof}
The idea is to use low approximate rank only as a guarantee of detectable
stabilizer fidelity. The test in \cref{prop:distinguisher} does not need to find a low-rank
decomposition.

Let $k_n$ be the integer part of the threshold in \eqref{eq:prs}, and
let $p_n$ be the probability of the low-rank event. Since
$\sqrt{k_n\log(2k_n)}=O_C(\log n)$, \cref{thm:span} gives every
state in this event stabilizer fidelity at least
\[
 f_0(n)=\frac{1-\delta^2}{\Snd(k_n)}\ge n^{-O_{C,\delta}(1)}.
\]
Apply \cref{prop:distinguisher} with $\gamma=1/f_0(n)$.
Repeating on fresh copies and taking a majority vote reduces the error
to $\beta$ outside the proposition's exceptional Haar set, with an
$O(\log(1/\beta))$ overhead.
It then accepts each state of fidelity at least $f_0(n)$ with probability
at least $1-\beta$, while its average Haar acceptance is at most
\begin{equation}\label{eq:haar-soundness}
 \beta+\exp\!\left(-2^{\Omega(n)}\right).
\end{equation}
The costs remain polynomial whenever $\log(1/\beta)=O(\log n)$.

Suppose, for a contradiction, that $p_n$ is not negligible. By the
definition of negligible, there is a fixed integer $b\ge1$ such that
$p_n\ge n^{-b}$ for infinitely many $n$. Run the test with
$\beta=n^{-b-2}$. On the given ensemble, it accepts with probability at
least $p_n(1-\beta)$: it accepts each state in the low-rank event with
probability at least $1-\beta$, and acceptance outside the event can
only increase the total. On a Haar-random state its average acceptance
is at most $\beta+\exp(-2^{\Omega(n)})$. The difference is therefore at least
\[
 p_n(1-\beta)-\beta-\exp(-2^{\Omega(n)})
 \ge\tfrac12 n^{-b}
\]
along that infinite subsequence, for sufficiently large $n$. This is a
nonnegligible advantage for a polynomial-time test, contradicting
pseudorandomness. Hence $p_n$ is negligible.
\end{proof}

\subsection{Approximate Clifford rank of pseudorandom unitaries}
\begin{definition}[Pseudorandom unitary ensemble]
An efficiently implementable unitary ensemble on $n$ qubits is pseudorandom
against forward queries if no polynomial-time quantum algorithm with
polynomially many oracle uses of the sampled unitary distinguishes it from
a Haar-random unitary with nonnegligible advantage. A stronger definition
allowing inverse queries also implies this requirement.
\end{definition}

\begin{theorem}[Frobenius rank exclusion]\label{thm:pru}
For every fixed $C>0$ and $0\le\delta<1$, a pseudorandom unitary ensemble
has negligible probability of
\begin{equation}\label{eq:pru}
 \rC{}_{,\delta\sqrt N}^{F}(U)
 \le C\frac{(\log_2n)^2}{\log_2\log_2n}.
\end{equation}
The approximating operator is not required to be unitary. The distinguisher
uses only forward applications of $U$.
\end{theorem}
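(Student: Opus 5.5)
We reduce to the state case \cref{thm:prs} by testing a random column of $U$, or more simply the Choi state. The plan is to apply $U$ to one half of a maximally entangled state, which needs only a forward query. This gives the normalized Choi vector $\J U$ on $2n$ qubits, a unit vector since $\norm U_F^2=N$. Suppose $\rC{}_{,\delta\sqrt N}^{F}(U)\le k$, witnessed by $A$ with $\norm{A-U}_F\le\delta\sqrt N$ and $\rC(A)\le k$. By \eqref{eq:choi-isometry}, $\norm{\J A-\J U}\le\delta$. A Clifford decomposition of $A$ with $k$ terms gives a decomposition of $\J A$ into $k$ stabilizer states on $2n$ qubits, since Clifford Choi states are stabilizer. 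Hence $\chi_\delta(\J U)\le k$.

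\cref{thm:span} then gives
\[
\FS(\J U)\ge\frac{1-\delta^2}{\Snd(k)}\ge n^{-O_{C,\delta}(1)}
\]
whenever $k\le C(\log_2 n)^2/\log_2\log_2 n$. The ambient qubit count $2n$ does not enter this bound, because $\Snd$ is dimension-free.

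Next we run the Bell-difference test of \cref{prop:distinguisher} on $2n$ qubits with $\gamma=1/f_0$. Each copy of $\J U$ costs one forward query. We amplify with $\beta=n^{-b-2}$ exactly as in the proof of \cref{thm:prs}, so low-rank unitaries are accepted with probability at least $1-\beta$.

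The main obstacle is soundness on Haar-random $U$. Here $\J U$ is not a Haar-random $2n$-qubit state, so the exceptional-set guarantee of \cref{prop:distinguisher} does not apply directly. There are two ways around this.

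The first is to use a random column $U\ket x$ for uniformly random $x$ instead of the Choi state. For Haar $U$, each column $U\ket x$ is exactly Haar-distributed on $n$ qubits. This needs a low-rank column: $\mathbb E_x\norm{(A-U)\ket x}^2=N^{-1}\norm{A-U}_F^2\le\delta^2$, so some $x$, indeed a constant fraction of $x$ when $\delta'$ is taken between $\delta$ and $1$, has $\norm{(A-U)\ket x}\le\delta'<1$ by Markov's inequality. Moreover $A\ket x$ has stabilizer rank at most $k$, since Cliffords send $\ket x$ to stabilizer states. With probability at least a constant $c_{\delta}$ over $x$ the column therefore has fidelity at least $(1-\delta'^2)/\Snd(k)$.

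The test prepares $U\ket x$ with forward queries. On the pseudorandom side it accepts with probability at least $c_\delta p_n(1-\beta)$. On the Haar side, each column is Haar, so acceptance is at most $\beta+\exp(-2^{\Omega(n)})$. The same contradiction argument as in \cref{thm:prs} then shows that $p_n$ is negligible.
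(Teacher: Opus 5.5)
Your final argument is essentially the paper's proof: test a uniformly random column $U\ket x$ (same $x$ for every copy), use Markov's inequality to get a constant fraction $c_\delta$ of columns within error $\delta'<1$ of the rank-$k$ vectors $A\ket x$, apply \cref{thm:span}, and rerun the amplification and contradiction from \cref{thm:prs}, using that each column of a Haar unitary is Haar-distributed. The Choi-state detour is unnecessary, and you rightly note that its soundness is not covered by \cref{prop:distinguisher}; your ``two ways around this'' only ever presents one, so you can drop that part.
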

\begin{proof}
We first show that a good Frobenius approximation to $U$ gives good
state-vector approximations to a constant fraction of its columns.
We then test one randomly chosen column using only forward queries.

Let $k_n$ be the integer part of
$C(\log_2 n)^2/\log_2\log_2 n$, the rank threshold in \eqref{eq:pru}.
Suppose $A$ has Clifford rank at most $k_n$ and
$\norm{U-A}_F\le\delta\sqrt N$.
For a uniform computational-basis input $x$,
\[
 \E_x\norm{(U-A)\ket x}^2=\frac{\norm{U-A}_F^2}{N}\le\delta^2.
\]
Put $\delta_*^2=(1+\delta^2)/2<1$. Markov's inequality gives
\[
 \Pr_x[\norm{(U-A)\ket x}\le\delta_*]
 \ge\frac{1-\delta^2}{1+\delta^2}=:c_\delta>0.
\]
To see why these columns have low approximate stabilizer rank, write
$A=\sum_{j=1}^{k_n}c_jC_j$, allowing zero coefficients if fewer terms
are needed. Then
\[
 A\ket x=\sum_{j=1}^{k_n}c_jC_j\ket x
\]
is a combination of at most $k_n$ stabilizer states, because each
Clifford $C_j$ sends the computational-basis stabilizer $\ket x$ to a
stabilizer. For at least a $c_\delta$ fraction of inputs, this vector
approximates the normalized state $U\ket x$ within error $\delta_*$.
Consequently \cref{thm:span} gives those columns stabilizer fidelity at
least
\[
 f_0(n)=\frac{1-\delta_*^2}{\Snd(k_n)}
 \ge n^{-O_{C,\delta}(1)}.
\]

The distinguishing algorithm chooses one uniform $x$, prepares repeated
copies of $U\ket x$, and applies \cref{prop:distinguisher} with
$\gamma=1/f_0(n)$, amplified to error $\beta$ as in the proof of
\cref{thm:prs}. The same $x$ is used for every copy in a run. If $U$ has the
claimed low-rank approximation, the acceptance probability, averaged
over $x$, is at least $c_\delta(1-\beta)$. The algorithm does not need
to know the approximating operator $A$ or which inputs are good.
For Haar-random $U$ and any fixed $x$, the state $U\ket x$ is Haar-random,
so the Haar acceptance is bounded by \eqref{eq:haar-soundness}.

Let $p_n$ be the ensemble probability of the event in \eqref{eq:pru}.
If it were nonnegligible, some fixed $b\ge1$ would give
$p_n\ge n^{-b}$ for infinitely many $n$. Choose $\beta=n^{-b-2}$.
The difference in acceptance probabilities would then be at least
\[
 p_nc_\delta(1-\beta)-\beta-\exp(-2^{\Omega(n)})
 \ge\tfrac12 c_\delta n^{-b}
\]
for sufficiently large $n$ in that subsequence. The test uses
polynomially many forward queries, so this contradicts pseudorandomness.
No assumption about independence of different columns is needed.
\end{proof}

\section{Tomography and Hamiltonian learning from rank bounds}\label{sec:tomography}

Throughout this section, $0<\eps<1$ is the accuracy parameter and
$0<\beta<1/2$ is the failure probability.

All learning consequences in this section are instantiations of the
bounded-extent learning results of \cite{ADGG26}, including the unitary
learning consequence described there using the framework of \cite{DJJMP26}.
We use these algorithms as black boxes, substituting our rank--norm bounds
into their complexity guarantees. In particular, our stronger stabilizer
extent bound replaces the bound of \cite{KS26} in the state-learning
application; our Clifford coefficient-norm bound gives the corresponding
unitary and Hamiltonian learning guarantees. The decompositions whose existence was proved in
\cref{sec:state-main,sec:operator} need not be known to the learner.

\subsection{Copy-only state tomography}
We use the polynomial complexity guarantee of \cite[Corollary~8.11]{ADGG26},
with the confidence and classical-output guarantees of
\cite[Theorem~4.2]{AD26}.
\begin{theorem}[Bounded-extent state tomography {\cite{AD26,ADGG26}}]\label{thm:AD-import}
Let $R\ge1$. Given copies of an unknown
normalized $n$-qubit state $\ket{\psi}$ with $\LS(\ket{\psi})\le R$, there is an
algorithm returning a classical description of a normalized stabilizer
combination $\ket{\widehat\psi}$ such that
$\Dtr(\ket{\psi},\ket{\widehat\psi})\le\eps$ with probability at least $1-\beta$.
Its sample and quantum time complexity are
\begin{equation}\label{eq:AD-import}
 \poly(n,R,1/\eps,\log(1/\beta)).
\end{equation}
\end{theorem}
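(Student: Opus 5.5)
This statement is imported from \cite{AD26,ADGG26} rather than proved from the results of this paper, so the plan is to check that their hypotheses and outputs match ours and to translate parameters. The first step is to reconcile extent conventions. As noted in \cref{rem:convention}, the ``extent'' in \cite{AD26,ADGG26} is the unsquared coefficient $\ell_1$ norm. That is exactly $\LS$ here, so the hypothesis $\LS(\ket\psi)\le R$ is their bounded-extent promise with parameter $R$, not $R^2$. Since $R$ enters only polynomially, a squaring mismatch would in any case only change the degree of the polynomial in \eqref{eq:AD-import}. The hypothesis $R\ge1$ is automatic for normalized states by \cref{lem:overlap}, since $\LS(\ket\psi)\ge1/\sqrt{\FS(\ket\psi)}\ge1$.

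The second step is to invoke \cite[Corollary~8.11]{ADGG26}, which gives sample and time complexity polynomial in $n$, the extent parameter, and $1/\eps$ at constant confidence. I would then take from \cite[Theorem~4.2]{AD26} the form of the output: a classical description of a stabilizer combination, normalized, with a trace-distance guarantee.

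Two translations remain.
\begin{itemize}
\item \emph{Error measure.} If the imported guarantee is stated in phase-insensitive Euclidean distance or in fidelity, convert it to $\Dtr$ using $\Dtr\le\min_\theta\norm{\ket\psi-e^{i\theta}\ket{\widehat\psi}}$ and $\Dtr=\sqrt{1-|\ip\psi{\widehat\psi}|^2}$. Each conversion costs at most a polynomial change in $\eps$. If the output combination is not normalized, dividing by its norm changes the distance by at most a constant factor, and that factor can be absorbed by running with $\eps/2$.
\item \emph{Confidence.} If only constant success probability is supplied, amplify to $1-\beta$ by the standard repeat-and-select procedure. Run $O(\log(1/\beta))$ independent trials. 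Estimate pairwise overlaps of the candidate outputs with fresh copies, using swap tests, or classically, since the candidates have known descriptions. Output a candidate that is $2\eps$-close to a majority of the others. This gives the $\log(1/\beta)$ dependence and keeps both time and copy complexity polynomial.
\end{itemize}

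The main obstacle is bookkeeping rather than mathematics. I expect the step most likely to need care to be confirming that the cited algorithm needs only an upper bound $R$ on $\LS$, not a known decomposition; \cref{thm:state-tomography} relies on that. It must also run in time polynomial in $R$, rather than only in copies, when $R$ is superpolynomial. If the cited running time were instead polynomial in $R$ only through a sparsified rank, I would combine it with the sparsification bound $\chi_\delta\le1+\xi/\delta^2$ from \cite{BBC19} to recover polynomial dependence on $R$.
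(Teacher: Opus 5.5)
Your proposal matches the paper. The paper gives no independent proof: it imports the statement by combining the polynomial complexity bound of \cite[Corollary~8.11]{ADGG26} with the confidence and classical-output guarantees of \cite[Theorem~4.2]{AD26}, under the unsquared $\ell_1$ convention of \cref{rem:convention}. Your extra bookkeeping (conversion to $\Dtr$, renormalization at a factor-two cost, and majority-style amplification to confidence $1-\beta$) is sound, but the paper does not need it, since it takes the $\log(1/\beta)$ dependence directly from \cite{AD26}.
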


\begin{theorem}[Tomography of low-stabilizer-rank states]\label{thm:state-tomography}
Given copies of an unknown normalized $n$-qubit pure state $\ket{\psi}$ with
$\chi(\ket{\psi})\le k$, a classical description of a pure state within trace
distance $\eps$ can be produced with failure probability at most $\beta$
and sample and quantum time complexity
\begin{equation}\label{eq:state-tomography}
 \poly(n,1/\eps,\log(1/\beta))\,
 2^{O(\sqrt{k\log(2k)})}.
\end{equation}
The output is a stabilizer combination, not necessarily of rank $k$.
\end{theorem}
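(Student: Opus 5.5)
The plan is to plug the rank--extent bound into the bounded-extent tomography algorithm of \cref{thm:AD-import}; no new learning argument is needed. Given the promise $\chi(\ket\psi)\le k$, \cref{cor:state-envelope} applied to the normalized $\ket\psi$ gives $\xi(\ket\psi)\le\Snd(k)$. Hence $\LS(\ket\psi)\le R$ with
\[
 R=\sqrt{\Snd(k)}=2^{6\sqrt{k\log_2(2k)}}\ge1.
\]
The learner knows $k$, so it can compute this $R$. It does not need the decomposition whose existence \cref{cor:state-envelope} guarantees.

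Next, I run the algorithm of \cref{thm:AD-import} with parameters $R$, $\eps$ and $\beta$. Its output is a classical description of a normalized stabilizer combination $\ket{\widehat\psi}$. With probability at least $1-\beta$ this satisfies $\Dtr(\ket\psi,\ket{\widehat\psi})\le\eps$, which is the accuracy guarantee of the theorem. The output has no rank-$k$ guarantee, which matches the closing remark of the statement.

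The remaining step is the complexity bookkeeping. By \eqref{eq:AD-import}, sample and time complexity are $\poly(n,R,1/\eps,\log(1/\beta))$. A polynomial in several variables is bounded by a product of polynomials in each, so this is at most $\poly(n,1/\eps,\log(1/\beta))\cdot R^{O(1)}$. Since $R^{O(1)}=2^{O(\sqrt{k\log(2k)})}$, this gives \eqref{eq:state-tomography}.

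The only point needing care is that the imported guarantee is stated with the unsquared $\ell_1$ parameter, per \cref{rem:convention}. So I pass $R=\LS$-bound $=\sqrt{\xi}$-bound, not $\xi$ itself. Either choice changes only the constant in the exponent.
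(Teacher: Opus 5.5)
Your proposal is correct and follows essentially the same route as the paper: you use \cref{cor:state-envelope} to get $\LS(\ket\psi)\le\sqrt{\Snd(k)}$, pass $R=\sqrt{\Snd(k)}$ to \cref{thm:AD-import}, and note that $\log R=O(\sqrt{k\log(2k)})$. Your extra remarks, that $R\ge1$ and that the learner needs only $k$ and not the decomposition, are consistent with the paper's setup.
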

\begin{proof}
\Cref{cor:state-envelope} gives
$\LS(\ket{\psi})\le\sqrt{\Snd(k)}$.
Use $R=\sqrt{\Snd(k)}$ in \cref{thm:AD-import}. Since
$\log R=O(\sqrt{k\log(2k)})$, polynomial dependence on $R$ gives
\eqref{eq:state-tomography}.
\end{proof}

At fixed accuracy and inverse-polynomial failure probability, the
algorithm is polynomial time when $k=O(\log^2n/\log\log n)$.

\subsection{Clifford-rank unitary tomography}\label{sec:clifford-tomography}
We use the combination of polynomial-time Clifford self-correction with
the learning framework of \cite{DJJMP26}, described in
\cite[Section~8.2.5]{ADGG26}.
\begin{theorem}[Bounded-Clifford-norm tomography {\cite{DJJMP26,ADGG26}}]\label{thm:unitary-import}
Let $U$ be an unknown $n$-qubit unitary with $\LC(U)\le R$, $R\ge1$.
With query access to $U$ as in \cite[Section~8.2.5]{ADGG26},
there is an algorithm which, with probability at least $1-\beta$, returns
a classical Clifford combination $\widehat U$ satisfying
\[
 \min_\theta\norm{\widehat U-e^{i\theta}U}_F\le\eps\sqrt N,
 \qquad
 \rC(\widehat U)=\poly(R/\eps).
\]
Its query and quantum time complexity are
\[
 \poly(n,R,1/\eps,\log(1/\beta)).
\]
The output operator need not be unitary.
\end{theorem}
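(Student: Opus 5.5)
This theorem is imported from the literature, so the plan is to assemble it from the cited results rather than reprove it. The argument runs through the Choi correspondence and the framework of \cite{DJJMP26}, as combined with Clifford self-correction in \cite[Section~8.2.5]{ADGG26}.

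\textbf{Step 1: reduce to a state with bounded coefficient norm.} By \eqref{eq:choi-atomic}, $\LM(\J U)=\LC(U)\le R$, and $\J U$ is a unit vector because $U$ is unitary. The learner prepares copies of $\J U$ with one forward query each, by applying $U$ to half of $\ket{\Phi_N}$. This makes $\J U$ a unit vector of bounded coefficient norm over the restricted dictionary $\mathcal M_n$ of maximally entangled stabilizer states.

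\textbf{Step 2: learn with the bounded-extent machinery.} Run the algorithm of \cref{thm:AD-import}, in the version of \cite{ADGG26}. Its complexity is polynomial in the coefficient norm, and that norm is at most $R$ over either dictionary, since $\LS\le\LM$. The output is an estimate within trace distance $\eps'$.

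\textbf{Step 3: project the output back onto Cliffords.} The learned description is a combination of stabilizer states that need not be maximally entangled. The self-correction step of \cite{DJJMP26,ADGG26} converts each learned term into a Clifford Choi state. Sparsification \`a la \cite{BBC19}, applied to a decomposition with $\ell_1$ norm $\poly(R)$, keeps only $\poly(R/\eps)$ terms. This yields the bound $\rC(\widehat U)=\poly(R/\eps)$.

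\textbf{Step 4: convert the error to Frobenius distance.} Trace distance $\eps'$ between unit vectors gives phase-insensitive Euclidean distance at most $\sqrt2\eps'$. Through \eqref{eq:choi-isometry}, this becomes $\min_\theta\norm{\widehat U-e^{i\theta}U}_F\le\sqrt2\eps'\sqrt N$, so the choice $\eps'=\eps/\sqrt2$ gives the stated accuracy. Confidence is boosted by repetition with $O(\log(1/\beta))$ overhead, and all costs remain $\poly(n,R,1/\eps,\log(1/\beta))$.

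\textbf{Main obstacle.} The hard part is Step 3: guaranteeing that the output is a Clifford combination of polynomial size, rather than a generic stabilizer combination on $2n$ qubits. I would not reprove this. It is exactly the content of the cited Section~8.2.5 and is used here as a black box.
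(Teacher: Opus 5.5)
The paper gives no proof of this statement. It is imported from \cite{DJJMP26} and \cite[Section~8.2.5]{ADGG26}, and the paper only uses it in \cref{cor:unitary-tomo}, by substituting $R=\sqrt{\Bnd(k)}$. Treating it as a black box is therefore exactly what the paper does. Your Steps~1 and~4 are correct bookkeeping:
\begin{itemize}
\item $\J U$ is a unit vector that can be prepared with one forward query;
\item $\LS(\J U)\le\LM(\J U)=\LC(U)\le R$;
\item trace distance $\eps'$ between unit vectors gives phase-insensitive Euclidean distance at most $\sqrt2\eps'$, which \eqref{eq:choi-isometry} turns into Frobenius error $\sqrt2\eps'\sqrt N$.
\end{itemize}

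The glue in Steps~2--3, however, would fail if offered as a derivation. You invoke \cref{thm:AD-import} only through its black-box guarantee: some normalized stabilizer combination on $2n$ qubits within trace distance $\eps'$. You then post-process its terms into Clifford Choi states. That guarantee says nothing about the entanglement deficit of individual terms, and high-deficit terms cannot be converted cheaply. Write $M,S$ for the coefficient matrices of a maximally entangled $\ket m$ and a stabilizer $\ket s$ of deficit $a$. Hölder gives $|\ip{m}{s}|\le\norm{M}_{\mathrm{op}}\norm{S}_1=2^{-n/2}2^{(n-a)/2}=2^{-a/2}$. So replacing $\ket s$ by a single Clifford Choi state does not preserve accuracy. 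Combining this with the overlap argument of \cref{lem:overlap} over the dictionary $\mathcal M_n$, and with \cref{lem:repair}, gives $\LM(\ket s)=2^{a/2}$ exactly, which can be as large as $2^{n/2}$.

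Hence no $\poly(R)$-norm Clifford decomposition is available to feed into the sparsification of \cite{BBC19}, and $\rC(\widehat U)=\poly(R/\eps)$ does not follow. Your Step~4 error budget also does not account for this conversion. The Clifford structure has to be enforced inside the learner, and the rank bound then comes from the size of that procedure's output. Two ways this can happen:
\begin{itemize}
\item Each term has overlap $\tau\norm{w}$ with some vector $w$ satisfying $\LM(w)\le R\norm{w}$. Then its deficit is at most $2\log_2(R/\tau)$, and \cref{lem:repair} expands it at cost at most $R/\tau$.
\item The learner outputs Clifford terms directly via self-correction.
\end{itemize}

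Since you black-box Step~3 anyway, your proposal reduces to the same citation as the paper. Just do not present the Step~2--3 pipeline as how that citation works. Also, confidence boosting for a learner needs a selection step among repeated runs, not bare repetition; the cited guarantees already include the $\log(1/\beta)$ dependence.
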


\begin{corollary}[Tomography from Clifford rank]\label{cor:unitary-tomo}
For $\rC(U)\le k$, the algorithm in \cref{thm:unitary-import} has query
and quantum time complexity
\begin{equation}\label{eq:unitary-tomo}
 \poly(n,1/\eps,\log(1/\beta))\,
 2^{O(\sqrt{k\log(2k)})},
\end{equation}
and may output a Clifford combination of rank
$\poly(1/\eps)2^{O(\sqrt{k\log(2k)})}$.
\end{corollary}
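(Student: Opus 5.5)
The plan is to mirror the proof of \cref{thm:state-tomography}, with \cref{cor:main-envelope} in the role of \cref{cor:state-envelope} and \cref{thm:unitary-import} in the role of \cref{thm:AD-import}.

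The first step is to bound the Clifford coefficient norm of $U$. Since $U$ is unitary, $\norm U_F^2=\tr(U^\dagger U)=N=2^n$, so the normalizing factor $2^{-n}\norm U_F^2$ in \eqref{eq:operator-envelope} equals one. Hence $\rC(U)\le k$ gives
\[
 \LC(U)\le\sqrt{\Bnd(k)}=2^{6\sqrt{k\log_2(2k)}}.
\]
If $k=0$ is excluded, which is automatic because $U\neq0$, this is at least $1$. We therefore set $R=\sqrt{\Bnd(k)}$, which meets the hypothesis $R\ge1$ of \cref{thm:unitary-import}.

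The second step is to substitute this $R$ into the guarantees of \cref{thm:unitary-import}. The query and quantum time complexity $\poly(n,R,1/\eps,\log(1/\beta))$ is at most $\poly(n,1/\eps,\log(1/\beta))\cdot R^{O(1)}$. Since $\log_2 R=6\sqrt{k\log_2(2k)}$, we have $R^{O(1)}=2^{O(\sqrt{k\log(2k)})}$, which yields \eqref{eq:unitary-tomo}. The output rank bound $\rC(\widehat U)=\poly(R/\eps)$ likewise becomes $\poly(1/\eps)\,2^{O(\sqrt{k\log(2k)})}$.

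The only point needing care, and I expect it is minor, is splitting a polynomial in several arguments into a product form. Any polynomial $p(n,R,1/\eps,\log(1/\beta))$ with all arguments at least $1$ (note $1/\eps>1$, and $\log(1/\beta)\ge1$ since $\beta<1/2$) is bounded by $(nR\cdot\eps^{-1}\log(1/\beta))^{c}$ for some constant $c$. This bound factors as $\poly(n,1/\eps,\log(1/\beta))\cdot R^{c}$, which justifies the separation. No new structural argument is needed beyond this; the substantive content is entirely in \cref{cor:main-envelope}.
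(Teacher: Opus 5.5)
Your proposal is correct and follows the same route as the paper: you use $\norm U_F^2=N$ to get $\LC(U)\le\sqrt{\Bnd(k)}$ from \cref{cor:main-envelope}, then substitute $R=\sqrt{\Bnd(k)}$ into \cref{thm:unitary-import}. Your additional checks ($R\ge1$, $k\ge1$ since $U\ne0$, and factoring the polynomial) are harmless details the paper leaves implicit.
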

\begin{proof}
A unitary has $\norm U_F=\sqrt N$.
\Cref{cor:main-envelope} gives $\LC(U)\le\sqrt{\Bnd(k)}$.
Substitute this value of $R$ in \cref{thm:unitary-import}.
\end{proof}

\subsection{Hamiltonians with bounded Clifford rank}\label{sec:hamilonian-tomography}
The Clifford coefficient-norm bound also gives a learning algorithm for
Hamiltonians of bounded Clifford rank.

\begin{theorem}[Bounded-Clifford-norm Hamiltonian learning {\cite{ADGG26}}]
\label{thm:H-import}
Suppose $H=H^\dagger$ is an $n$-qubit Hamiltonian with $\tr H=0$ and
$\LC(H)\le R$, $R\ge1$. With query access to the short-time evolutions
$e^{-itH}$ used in \cite[Corollary~8.15]{ADGG26}, an algorithm outputs
$\widehat H$ with Frobenius error at most $\eps\sqrt N$ and failure
probability at most $\beta$. Its query complexity, quantum time, and total
evolution time are bounded by
\[
 \poly(n,R,1/\eps,\log(1/\beta)).
\]
The algorithm uses evolution times $t=\poly(\eps/R)$, and the output has
Clifford rank $\poly(R/\eps)$.
\end{theorem}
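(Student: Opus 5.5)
Since \cref{thm:H-import} is imported from \cite{ADGG26}, the plan is to check that its hypotheses and conclusions match \cite[Corollary~8.15]{ADGG26} under our conventions, not to reprove it. The work is in the conversions; the learning argument itself stays a black box.

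\emph{Step 1: the norm parameter.} By \cref{rem:convention}, the ``extent'' parameter in \cite{ADGG26} is an unsquared coefficient $\ell_1$ norm over Clifford unitaries. So $\LC(H)\le R$ is exactly their hypothesis with the same $R$, and no square root needs to be taken. I would also check that their decomposition class agrees with ours:
\begin{itemize}
\item complex coefficients on Clifford unitaries modulo global phase, as in \cref{def:atomic}(v);
\item Hermiticity of $H$ not forcing a Hermitian decomposition. If their statement does require one, I would symmetrize $H=\tfrac12\sum_j(c_jC_j+\overline{c_j}C_j^\dagger)$. Since $C_j^\dagger$ is Clifford, this at most doubles the number of terms and does not increase $\sum|c_j|$.
\end{itemize}
The condition $\tr H=0$ removes the global-phase ambiguity of $e^{-itH}$, which is why it is imposed there.

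\emph{Step 2: the error metric.} Their guarantee is presumably stated in a normalized Frobenius or Choi-vector norm. By \eqref{eq:choi-isometry}, normalized Choi-vector error $\eps$ is the same as $\norm{\widehat H-H}_F\le\eps\sqrt N$, matching the statement. The conversion factor $\sqrt N$ noted after \cref{def:approx} is exactly what is absorbed here.

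\emph{Step 3: complexity and output.} From their corollary I would read off three things:
\begin{itemize}
\item the polynomial bounds on query count, quantum time and total evolution time;
\item the short evolution times $t=\poly(\eps/R)$, chosen so that $e^{-itH}\approx I-itH$ with second-order error $O(t^2\LC(H)^2)$. This uses $\norm H_{\mathrm{op}}\le\LC(H)$ from \eqref{eq:op-vs-atomic}, and reduces the problem to learning a bounded-norm Clifford combination via the unitary framework of \cref{thm:unitary-import};
\item the output rank $\poly(R/\eps)$, which comes from the sparsification/sampling step giving $\poly(R/\eps)$ Clifford terms.
\end{itemize}
The confidence parameter $\beta$ enters through standard median amplification with $O(\log(1/\beta))$ overhead.

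\emph{Main obstacle.} I expect the hardest part to be the bookkeeping between their normalization, where a Hamiltonian often has a bounded operator norm or a specific access model, and ours. In particular I must make sure that any polynomial dependence on an operator-norm bound is dominated by $\poly(R)$, which holds because $\norm H_{\mathrm{op}}\le R$. If their statement is phrased via the first-order expansion for unitaries, I would verify the $O(t^2R^2)$ remainder explicitly and pick $t$ accordingly.
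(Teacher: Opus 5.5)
Your proposal is correct and takes the same approach as the paper. The paper gives no independent proof of this statement: it imports \cite[Corollary~8.15]{ADGG26} as a black box. The only bookkeeping is the unsquared $\ell_1$ convention of \cref{rem:convention}, so that $R$ bounds $\LC(H)$ directly, together with the $\sqrt N$ Frobenius normalization you identify. Your Step~3 remarks on the internal mechanism (short-time linearization of $e^{-itH}$ and reduction to unitary learning) are plausible, but the paper neither needs nor checks them.
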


\begin{corollary}[Rank-promised Hamiltonian learning]\label{cor:H-tomo}
Assume $\tr H=0$, $\rC(H)\le k$, and a known bound
$\norm H_{\mathrm{op}}\le b$ with $b\ge1$. Under the access assumptions
of \cref{thm:H-import}, the complexities are at most
\begin{equation}\label{eq:H-tomo}
 \poly(n,b,1/\eps,\log(1/\beta))\,
 2^{O(\sqrt{k\log(2k)})}.
\end{equation}
\end{corollary}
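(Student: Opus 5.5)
The plan is to instantiate \cref{thm:H-import} with a value of $R$ obtained from \cref{cor:main-envelope}, turning the operator-norm promise into a Frobenius-norm bound. Since $\norm H_F^2\le N\norm H_{\mathrm{op}}^2$, the promise $\norm H_{\mathrm{op}}\le b$ gives $2^{-n}\norm H_F^2\le b^2$. If $H\neq0$, \cref{cor:main-envelope} applied to the rank-$k$ Clifford decomposition of $H$ then yields
\[
 \LC(H)\le\sqrt{\Bnd(k)}\,2^{-n/2}\norm H_F\le b\sqrt{\Bnd(k)}.
\]
Set $R=b\sqrt{\Bnd(k)}$. Then $R\ge1$, because $b\ge1$ and $\Bnd(k)\ge1$ for $k\ge1$. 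The case $H=0$ (or $k=0$) is trivial: we may still run the algorithm with $R=b$, since $\LC(0)=0\le R$.

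Next we check the remaining hypotheses of \cref{thm:H-import}. Hermiticity, $\tr H=0$, and the access model are assumed in the corollary. Crucially, $R$ is computable from the known quantities $b$ and $k$, so the learner can choose the evolution times $t=\poly(\eps/R)$ without knowing any decomposition of $H$. Substituting $R$ into the complexity $\poly(n,R,1/\eps,\log(1/\beta))$ and using
\[
 \log R=\log b+6\sqrt{k\log_2(2k)}
\]
gives $\poly(R)\le\poly(b)\,2^{O(\sqrt{k\log(2k)})}$. This is \eqref{eq:H-tomo}. The output's Clifford rank $\poly(R/\eps)$ is bounded in the same way.

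The main obstacle is that the Clifford norm bound is relative to the normalized Frobenius norm rather than an absolute quantity. The operator-norm promise $b$ is exactly what converts it into a usable $R$, which explains the polynomial dependence on $b$ in the final bound.
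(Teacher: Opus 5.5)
Your proposal is correct and matches the paper's proof. Both bound $\norm H_F\le b\sqrt N$, apply \cref{cor:main-envelope} to get $\LC(H)\le b\sqrt{\Bnd(k)}$, and substitute $R=b\sqrt{\Bnd(k)}$ into \cref{thm:H-import}; your extra checks ($R\ge1$, the trivial $H=0$ case, and that $R$ is computable from the known $b$ and $k$) are harmless additions.
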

\begin{proof}
Since $\norm H_F\le b\sqrt N$,
$\LC(H)\le b\sqrt{\Bnd(k)}$ by \cref{cor:main-envelope}.
Use $R=b\sqrt{\Bnd(k)}$ in \cref{thm:H-import}.
A known smaller norm bound can always be enlarged to one.
\end{proof}

\section{Conclusion}\label{sec:discussion}

Randomized Pauli projections and clustering of stabilizer codespaces give
subexponential bounds on stabilizer extent and Clifford coefficient norm
in terms of rank. These bounds connect short decompositions to small
coefficient norms, yielding nearly quadratic rank lower bounds and
stronger guarantees for distinguishing and learning low-rank states and
operators.

A natural question is whether the exponent $O(\sqrt{k\log(k+1)})$ can be
improved to $O(\sqrt k)$. This would remove the logarithmic loss in our
AND and magic-state rank lower bounds. In the present proof, that loss
comes from the dimension cost of merging codespaces; a stronger
clustering bound could therefore lead to an improvement. More broadly,
is stabilizer extent polynomially bounded in stabilizer rank, and does
an analogous bound hold for the squared Clifford coefficient norm
relative to normalized Frobenius norm? Such bounds would give exponential
rank lower bounds for magic-state tensor powers and for quadratic-phase
representations of AND, respectively.

\bibliographystyle{alpha}
\bibliography{references}

\appendix
\section{Stabilizer normal forms and random bases}\label{sec:app-stabilizer}

\subsection{Affine supports and full-support decompositions}\label{sec:proof-affine-phase}
We record the normal-form facts used in the state-to-operator reduction.

\begin{lemma}[Affine phase normal form]\label{lem:affine-phase}
Let $\ket s$ be a normalized stabilizer state on $m$ qubits whose computational
support has codimension $t$.
\begin{enumerate}[label=(\roman*)]
\item Its support is an affine space $\mathcal A=\{x:Mx=b\}$ of
codimension $t$. It can be written, up to a global phase, as
\begin{equation}\label{eq:affine-form}
 \ket s=2^{t/2}\Pi_{\mathcal A}D\ket{+^m},
\end{equation}
where $D$ is a diagonal Clifford on all $m$ qubits and
\[
 \Pi_{\mathcal A}=\sum_{x\in\mathcal A}\ket x\bra x
\]
is the projector onto the span of computational-basis vectors labelled
by elements of $\mathcal A$.
\item It is a linear combination of $2^t$ full-support stabilizer states
of the form $D'\ket{+^m}$ with diagonal Clifford $D'$.
\item Moreover,
\[
 2^t=|G_s\cap\Zgrp_m|.
\]
\end{enumerate}

\end{lemma}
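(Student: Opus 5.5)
Part (i) is the standard Dehaene--De Moor form of a stabilizer state \cite{DDM03}, so the plan is to derive (i) from it and then obtain (ii) and (iii) by short manipulations. In that form, any stabilizer state on $m$ qubits is, up to global phase, $|\mathcal A|^{-1/2}\sum_{x\in\mathcal A} i^{\ell(x)}(-1)^{q(x)}\ket x$. Here $\mathcal A$ is an affine subspace, $\ell$ is linear and $q$ is quadratic over $\F_2$. Since $|\mathcal A|=2^{m-t}$, the prefactor equals $2^{-(m-t)/2}$.

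To match \eqref{eq:affine-form}, I extend the phase function $x\mapsto i^{\ell(x)}(-1)^{q(x)}$ from $\mathcal A$ to all of $\F_2^m$. I do this by reading $\ell$ and $q$ as polynomials in the coordinates; the $\F_2$-to-$\mathbb Z_4$ lifting of $\ell$ is already accounted for in the normal form. The extended phase is the diagonal of a Clifford $D$ built from $S$, $Z$ and $CZ$ gates. Using $D\ket{+^m}=2^{-m/2}\sum_x(\text{phase})\ket x$, projecting by $\Pi_{\mathcal A}$ and multiplying by $2^{t/2}$ gives exactly the normal form. The only care needed is this extension of the phase off $\mathcal A$, which is harmless because only its values on $\mathcal A$ matter.

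For (ii), I expand $\Pi_{\mathcal A}$ by characters. Write $\mathcal A=\{x:Mx=b\}$ with $M$ having $t$ independent rows. Then $\Pi_{\mathcal A}=2^{-t}\sum_{y\in\F_2^t}(-1)^{y\cdot b}Z^{M^Ty}$. Substituting into \eqref{eq:affine-form} gives $\ket s=2^{-t/2}\sum_y(-1)^{y\cdot b}Z^{M^Ty}D\ket{+^m}$. Each $Z^{M^Ty}D$ is a diagonal Clifford $D'$, which yields $2^t$ full-support stabilizer states of the required form.

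For (iii), I compute which $Z$-type Paulis stabilize $\ket s$. The operator $\pm Z^u$ fixes $\ket s$ up to sign exactly when $(-1)^{u\cdot x}$ is constant on $\mathcal A$. This holds exactly when $u$ lies in the row space of $M$, which has $2^t$ elements. For such $u$ the sign is $(-1)^{u\cdot x_0}$ for any $x_0\in\mathcal A$, so the corresponding class lies in $G_s$. Conversely, if a class $\overline{Z^u}$ lies in $G_s$, then $\lambda Z^u\ket s=\ket s$ for some phase $\lambda$. Comparing amplitudes on the support forces $(-1)^{u\cdot x}$ to be constant on $\mathcal A$. Together these give $|G_s\cap\Zgrp_m|=2^t$.

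The main obstacle is the bookkeeping in (i), namely extending the phase and fixing the normalization. The remaining steps are routine Fourier identities.
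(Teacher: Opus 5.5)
Your proposal is correct and follows the paper's proof essentially step for step. Both start from the Dehaene--De Moor form and extend the support phase to a diagonal Clifford on all of $\F_2^m$. Both then expand $\Pi_{\mathcal A}=2^{-t}\sum_y(-1)^{y\cdot b}Z^{M^Ty}$ for (ii), and count the $Z$-type stabilizers as the $2^t$ elements of the row space of $M$ for (iii).

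The only difference is in how (i) extends the phase. The paper passes to affine coordinates in which $\mathcal A=\{z=0\}$ and conjugates back by a permutation Clifford. You extend directly in the original coordinates. That works, though you leave the key step implicit: for $\ell(x)=\bigoplus_j a_jx_j$,
\[
 i^{\ell(x)}=\prod_j i^{a_jx_j}\prod_{j<k}(-1)^{a_ja_kx_jx_k},
\]
so the extended phase is indeed a product of $S$ and $CZ$ gates.
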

\begin{proof}
We prove the three claims in order, starting from the standard amplitude
normal form for stabilizer states \cite{DDM03}.

For (i), that normal form says that the nonzero amplitudes have equal
magnitude and lie on an affine space $\mathcal A$ of dimension $m-t$.
Here an affine space is the solution set of independent binary linear
equations $Mx=b$, and $t$ is the number of those equations. Choose
invertible affine coordinates $(y,z)$ so that $\mathcal A$ is given by
$z=0$, with $y\in\F_2^{m-t}$ free. On this support the phase has the form
\[
 i^{p(y)},\qquad
 p(y)=c+\sum_j d_jy_j+2\sum_{j<\ell}e_{j\ell}y_jy_\ell\pmod4,
\]
where $c,d_j\in\{0,1,2,3\}$ and $e_{j\ell}\in\{0,1\}$. This is what
is meant by a Clifford quadratic phase with values modulo four:
the linear terms are implemented by powers of the phase gate $S$,
and the quadratic terms by controlled-$Z$ gates.

Apply this diagonal Clifford to the $y$ coordinates, ignoring $z$.
The affine coordinate change is implemented by bit flips, swaps, and
CNOT gates. Conjugating by this basis permutation gives a diagonal
Clifford $D$ in the original coordinates whose phases agree with those
of $\ket s$ on $\mathcal A$. Projecting $D\ket{+^m}$ onto
$\mathcal A$ leaves $2^{m-t}$ amplitudes of magnitude $2^{-m/2}$.
Multiplication by $2^{t/2}$ normalizes them, proving
\eqref{eq:affine-form} up to global phase.

For (ii), write the rows of $M$ as $u_1,\ldots,u_t$. The projector
enforcing the equation $u_j\cdot x=b_j$ is
$(I+(-1)^{b_j}Z^{u_j})/2$. Multiplying these commuting projectors
enforces all the equations at once. Expanding the product gives
\[
 \Pi_{\mathcal A}
 =\prod_{j=1}^t\frac{I+(-1)^{b_j}Z^{u_j}}2
 =2^{-t}\sum_{z\in\F_2^t}
                   (-1)^{z\cdot b}Z^{M^Tz}.
\]
Consequently
\begin{equation}\label{eq:full-support-expand}
 \ket s=2^{-t/2}\sum_{z\in\F_2^t}
                  (-1)^{z\cdot b}DZ^{M^Tz}\ket{+^m},
\end{equation}
Each $DZ^{M^Tz}$ is diagonal Clifford, so every state in this
$2^t$-term expansion is a full-support stabilizer state.

For (iii), a signed diagonal Pauli $(-1)^cZ^u$ fixes $\ket s$ exactly
when every point in its support satisfies $u\cdot x=c$. The equations
holding throughout $\mathcal A$ are precisely the binary linear
combinations of its $t$ independent defining equations. There are
$2^t$ such combinations, including the trivial equation. Hence $H_s$
has $2^t$ diagonal elements. Forgetting their signs identifies them
with $G_s\cap\Zgrp_m$, proving $|G_s\cap\Zgrp_m|=2^t$.
\end{proof}

\subsection{A random Clifford basis}
\begin{lemma}[A random Clifford basis]\label{lem:random-basis}
Let $G_0\le\PG_m$ be a fixed maximal commuting subgroup, and let $G$ be
uniformly random among the maximal commuting subgroups of $\PG_m$.
Then
\begin{equation}\label{eq:intersection-expectation}
 \E_G|G\cap G_0|
 =1+\frac{2^m-1}{2^m+1}<2.
\end{equation}
Equivalently, for any fixed normalized stabilizer state $\ket s$ and a uniformly
random Clifford $F$ modulo global phase,
$\E_F|G_{Fs}\cap\Zgrp_m|=2^{m+1}/(2^m+1)$.
For $m=0$, we use $\HH_0=\C$ and the trivial phase-free Pauli
group $\PG_0=\{e\}$; both expectations equal one.
\end{lemma}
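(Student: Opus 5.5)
We compute the expectation by writing $|G\cap G_0|$ as a sum of indicators and applying linearity, using the symmetry of the uniform measure. Specifically,
\[
 |G\cap G_0|=\sum_{p\in G_0}\mathbf 1[p\in G]
 =1+\sum_{p\in G_0\setminus\{e\}}\mathbf 1[p\in G],
\]
so $\E_G|G\cap G_0|=1+(2^m-1)\Pr_G[p\in G]$. This uses that $e$ always lies in $G$ and that $|G_0|=2^m$ by \cref{lem:maximal-commuting}. The probability does not depend on the fixed nonidentity $p$. The reason is that Clifford conjugation acts transitively on nonidentity elements of $\PG_m$ (\cref{lem:clifford-conjugation}) and permutes maximal commuting subgroups. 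Hence the uniform measure on maximal commuting subgroups is invariant, and $\Pr[p\in G]=\Pr[p'\in G]$ for all $p,p'\ne e$.

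To evaluate this common probability $\pi$, we double count pairs $(p,G)$ with $p\ne e$ and $p\in G$. Every maximal commuting $G$ contains exactly $2^m-1$ nonidentity elements, so
\[
 (4^m-1)\,\pi=\sum_{p\ne e}\Pr[p\in G]=\E_G\bigl(|G|-1\bigr)=2^m-1,
\]
giving $\pi=1/(2^m+1)$. Substituting yields $\E_G|G\cap G_0|=1+(2^m-1)/(2^m+1)$, which is less than $2$. This double counting is the only substantive step. The main thing to check is that the uniform distribution on maximal commuting subgroups really is conjugation invariant, which holds because conjugation is a bijection on that finite set.

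For the equivalent formulation, take $G_0=\Zgrp_m$, which is maximal commuting. We need $G_{Fs}$ to be uniformly distributed over maximal commuting subgroups when $F$ is a uniform Clifford modulo phase. First, $G_{Fs}=FG_sF^\dagger$. Second, the Clifford group acts transitively on maximal commuting subgroups (\cref{lem:maximal-commuting}: each is conjugate to $\Zgrp_m$). The pushforward of the uniform measure on a group under an orbit map onto a transitive set is uniform, because all fibers are cosets of a stabilizer and so have equal size. Hence $\E_F|G_{Fs}\cap\Zgrp_m|=1+\frac{2^m-1}{2^m+1}=\frac{2^{m+1}}{2^m+1}$.

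Finally, for $m=0$ both groups are trivial, so the intersection always has size $1$. This agrees with the formula $2/(1+1)=1$.
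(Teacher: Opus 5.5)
Your proposal is correct and follows essentially the same route as the paper: you write $|G\cap G_0|$ as a sum of indicators, use Clifford transitivity on nonidentity Paulis to get equal membership probabilities, double count to obtain $(4^m-1)\pi=2^m-1$, and get uniformity of $G_{Fs}=FG_sF^\dagger$ from transitivity on maximal commuting subgroups. You also spell out two points the paper leaves implicit, namely conjugation invariance of the uniform measure and the orbit--stabilizer argument for the pushforward, but the argument is otherwise the same.
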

\begin{proof}
Assume $m\ge1$. Clifford conjugation is transitive on nonidentity
phase-free Paulis, i.e, for any two nonidentity Pauli classes represented by $P_1,P_2$, there is a Clifford operator $C$ such that $CP_1C^\dagger=P_2$ (up to phase). It is furthermore transitive even on maximal commuting subgroups, i.e, for any maximal commuting subgroups $G_1,G_2\subseteq \overline{\mathcal P}_m$, we have a Clifford operator $C$ such that $CG_1C^{\dagger}=G_2$.

Each such subgroup has $2^m-1$ nonidentity elements. Double counting therefore gives, for
every fixed $P\ne I$ in $\overline{\mathcal P}_m$,
\[
 \Pr[P\in G]=\frac{2^m-1}{4^m-1}=\frac1{2^m+1}.
\]
Count the elements of $G\cap G_0$, including $e$, and use linearity of expectation to obtain
\eqref{eq:intersection-expectation}. The equivalent Clifford-basis
form follows because $G_{Fs}=F G_s F^\dagger$ is uniform over the maximal
commuting subgroups. For $m=0$, all groups are trivial.
\end{proof}

\section{The linear-overhead state-to-operator reduction}\label{sec:app-reduction}
The direct proof of \cref{cor:state-envelope} already bounds state extent.
The following reduction records a more general relation: any operator
rank--norm inequality also yields a state rank--extent inequality, with
at most linear overhead in rank.

\begin{proposition}[Simultaneous full-support reduction]\label{prop:full-support}
If $V\subseteq\HH_n$ is spanned by $k\ge1$ stabilizer states, then there
are a Clifford $F$ and diagonal Cliffords $D_1,\ldots,D_r$, $r\le2k-1$,
such that
\begin{equation}\label{eq:span-full-support}
 FV\subseteq\Span\{D_j\ket{+^n}:1\le j\le r\}.
\end{equation}
The same $F$ and list work for every vector of $V$.
\end{proposition}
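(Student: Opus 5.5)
Let $V=\Span\{\ket{s_1},\ldots,\ket{s_k}\}$. The plan is to choose one Clifford $F$ so that every $F\ket{s_i}$ has computational support of small codimension, and then expand each $F\ket{s_i}$ into full-support states using \cref{lem:affine-phase}(ii). That lemma writes a stabilizer state whose support has codimension $t_i$ as a combination of $2^{t_i}$ states $D'\ket{+^n}$ with $D'$ diagonal Clifford. The total number of diagonal Cliffords used is then at most $\sum_i 2^{t_i}$, and any vector of $V$ lies in their span. The same $F$ and the same list therefore serve all of $V$ at once.

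\textbf{Choosing $F$.} By \cref{lem:affine-phase}(iii), $2^{t_i}=|G_{Fs_i}\cap\Zgrp_n|$. Take $F$ to be a uniformly random Clifford. By \cref{lem:random-basis} with $m=n$,
\[
 \E_F\sum_{i=1}^k|G_{Fs_i}\cap\Zgrp_n|=\frac{k\,2^{n+1}}{2^n+1}<2k .
\]
Since each summand is a positive integer, some $F$ makes the sum an integer strictly less than $2k$, hence at most $2k-1$. For that $F$ the expansion gives $r\le 2k-1$ diagonal Cliffords $D_j$ with $FV\subseteq\Span\{D_j\ket{+^n}\}$, which is \eqref{eq:span-full-support}.

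\textbf{Checks.} Two small points need confirming. First, $G_{Fs}=FG_sF^\dagger$ is uniformly distributed over maximal commuting subgroups; this is exactly the Clifford-basis form stated in \cref{lem:random-basis}. Second, the expansion in \cref{lem:affine-phase}(ii) produces exactly $2^t$ terms, each of the form $DZ^{M^Tz}\ket{+^n}$, so the count is $\sum_i 2^{t_i}$ with no hidden extra terms. If several states happen to produce the same diagonal Clifford, we simply merge them, which only lowers $r$.

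\textbf{Main obstacle.} The probabilistic step is where the bound comes from: one needs the strict inequality $2^{n+1}/(2^n+1)<2$ together with integrality to reach $2k-1$ rather than $2k$. This step is already supplied by \cref{lem:random-basis}. The rest is bookkeeping. For $k\ge1$ the argument is uniform and needs no further case analysis.
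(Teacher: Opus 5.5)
Your proposal is correct and follows the paper's proof essentially step for step. You take a uniformly random Clifford, combine \cref{lem:affine-phase}(iii) with \cref{lem:random-basis} to get expected total $k\,2^{n+1}/(2^n+1)<2k$, use integrality to fix an $F$ with $\sum_i 2^{t_i(F)}\le 2k-1$, and then expand via \cref{lem:affine-phase}(ii), discarding repeated states.
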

\begin{proof}
Let $\ket{s_1},\ldots,\ket{s_k}$ be normalized stabilizer states spanning $V$.
Choose a uniformly random Clifford $F$ and let $t_i(F)$ be the
codimension of the computational support of $F\ket{s_i}$. By
(iii) of \cref{lem:affine-phase},
\[
 2^{t_i(F)}=|G_{Fs_i}\cap\Zgrp_n|.
\]

\Cref{lem:random-basis} and linearity of expectation give
\[
 \E_F\sum_{i=1}^k2^{t_i(F)}
 =k\frac{2^{n+1}}{2^n+1}<2k.
\]
The quantity being averaged is integer-valued, so some single Clifford
$F$ satisfies $\sum_i2^{t_i(F)}\le2k-1$.

Fix this $F$. For each $i$, part (ii) of \cref{lem:affine-phase} expresses
$F\ket{s_i}$ as a linear combination of $2^{t_i(F)}$ full-support states
$D\ket{+^n}$ with $D$ diagonal Clifford. Collect all these states and
remove repetitions. The resulting list has length at most
$\sum_i2^{t_i(F)}\le2k-1$ and spans every $F\ket{s_i}$, hence all of $FV$.
The choice of $F$ and the list depend only on the supplied stabilizers,
not on the vector of $V$ to be represented.
\end{proof}

\begin{theorem}[Operator bounds imply state bounds]\label{thm:state-reduction}
Suppose a nondecreasing function $B$ satisfies
$\LC(A)^2\le B(k)2^{-n}\norm A_F^2$ whenever $\rC(A)\le k$.
Then every pure state $\ket{v}$ of stabilizer rank at most $k$ satisfies
\begin{equation}\label{eq:state-from-operator}
 \xi(\ket{v})\le B(2k-1)\norm{\ket{v}}^2.
\end{equation}
In particular, the operator bound of \cref{cor:main-envelope} gives
$\xi(\ket{v})\le\Bnd(2k-1)\norm{\ket{v}}^2$.
\end{theorem}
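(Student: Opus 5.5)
Given $\ket v$ with $\chi(\ket v)\le k$, let $V$ be the span of the $k$ stabilizer states in an optimal decomposition. Apply \cref{prop:full-support} to obtain a Clifford $F$ and diagonal Cliffords $D_1,\ldots,D_r$, $r\le 2k-1$, with $F\ket v=\sum_j c_jD_j\ket{+^n}$. Since $\LS$ is invariant under Clifford conjugation (Cliffords permute $\Stab_n$ up to phase), it suffices to bound $\xi(F\ket v)$. So assume from now on that $\ket v=\sum_j c_jD_j\ket{+^n}$.

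The plan is to lift this amplitude vector to a diagonal operator. Write $\ket v=\sum_x v(x)\ket x$ and set $f(x)=\sqrt N\,v(x)$. Each $D_j$ is diagonal, $D_j=\diag(\omega_j(x))$ with $|\omega_j(x)|=1$, and $D_j\ket{+^n}$ has amplitudes $N^{-1/2}\omega_j(x)$. Hence
\[
 \diag(f)=\sum_j c_jD_j ,
\]
so the operator $A=\diag(f)$ has $\rC(A)\le r\le 2k-1$. Its Frobenius norm is $\norm A_F^2=\sum_x|f(x)|^2=N\norm{\ket v}^2$. The hypothesis, together with monotonicity of $B$, then gives
\[
 \LC(A)^2\le B(2k-1)\,2^{-n}\norm A_F^2=B(2k-1)\norm{\ket v}^2 .
\]

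To transfer back to states, take an optimal decomposition $A=\sum_i a_iC_i$ into Clifford unitaries and apply it to $\ket{+^n}$. This gives $A\ket{+^n}=\sum_i a_iC_i\ket{+^n}$, where each $C_i\ket{+^n}$ is a normalized stabilizer state. Moreover $A\ket{+^n}$ has amplitudes $N^{-1/2}f(x)=v(x)$, so $A\ket{+^n}=\ket v$. Therefore $\LS(\ket v)\le\sum_i|a_i|=\LC(A)$, and squaring yields \eqref{eq:state-from-operator}. The particular case follows by taking $B=\Bnd$, which is nondecreasing by \cref{def:envelope}, and invoking \cref{cor:main-envelope}.

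The main obstacle is the bookkeeping of normalizations. We need both that the diagonal lift reproduces exactly $\ket v$ on $\ket{+^n}$ and that $2^{-n}\norm A_F^2$ equals $\norm{\ket v}^2$. Using $f=\sqrt N v$, both identities hold simultaneously, so no additional loss is incurred. The combinatorial content, namely the simultaneous full-support reduction to $2k-1$ terms, is already provided by \cref{prop:full-support}.
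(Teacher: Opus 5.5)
Your proposal is correct and matches the paper's proof: apply \cref{prop:full-support}, lift $F\ket v$ to the diagonal operator $\sqrt N\diag((F\ket v)_x)$, which has Clifford rank at most $2k-1$ and normalized Frobenius norm $\norm{\ket v}$, and pull any Clifford decomposition back through $\ket{+^n}$. The only cosmetic difference is that you first replace $\ket v$ by $F\ket v$ using Clifford invariance of $\LS$, whereas the paper writes $\ket v=\sum_j c_jF^\dagger C_j\ket{+^n}$ directly.
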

\begin{proof}
Apply \cref{prop:full-support} to a stabilizer span containing $\ket{v}$, and set
\[
 A_v=\sqrt N\diag((F\ket{v})_x)_{x\in\F_2^n}.
\]
If $F\ket{v}=\sum_{j=1}^r a_jD_j\ket{+^n}$ is the representation supplied by
\eqref{eq:span-full-support}, comparison of diagonal entries gives
$A_v=\sum_{j=1}^r a_jD_j$. Hence $\rC(A_v)\le r\le2k-1$, and
$A_v\ket{+^n}=F\ket{v}$. Moreover,
\[
 \frac{\norm{A_v}_F^2}{N}
 =\frac1N\sum_{x\in\F_2^n}|\sqrt N(F\ket{v})_x|^2
 =\sum_x|(F\ket{v})_x|^2=\norm{\ket{v}}^2.
\]
Conversely, any Clifford decomposition $A_v=\sum_jc_jC_j$ gives
$\ket{v}=\sum_jc_jF^\dagger C_j\ket{+^n}$, so $\LS(\ket{v})\le\LC(A_v)$.
Apply the hypothesis, using both the Clifford-rank bound and the exact
Frobenius identity.
\end{proof}

\begin{remark}[Why the diagonal lift is needed]
Writing $\ket{v}=\sum_{i=1}^k c_iC_i\ket{0^n}$ does give an operator
$A=\sum_i c_iC_i$ with $\rC(A)\le k$, $A\ket{0^n}=\ket{v}$, and
$\LS(\ket{v})\le\LC(A)$. What is missing is control of
$\norm A_F$ by $\sqrt N\norm{\ket{v}}$: cancellation on one input need
not control the other columns. For example, for $0<\varepsilon<1$,
\[
 A_\varepsilon=\varepsilon^{-1}I
       -(1-\varepsilon)\varepsilon^{-1}Z
 =\diag\!\left(1,\frac{2-\varepsilon}{\varepsilon}\right)
\]
satisfies $A_\varepsilon\ket0=\ket0$, but
$\norm{A_\varepsilon}_F\to\infty$ as
$\varepsilon\to0$.

The common basis change and diagonal lift achieve low Clifford rank and
$\norm{A_v}_F=\sqrt N\norm{\ket{v}}$ simultaneously. The ordinary Choi
comparison alone does not provide this controlled lift. The operator
inequality must apply to nonunitary operators, since $A_v$ is generally
not unitary.
\end{remark}

\section{Technical details}\label{sec:technical-details}

\subsection{Uniform rank--norm bounds}\label{sec:app-uniform-bounds}
\begin{stateboundrestated}[Uniform stabilizer rank--extent bound]

Define $\Snd(0)=0$ and, for integers $k\ge1$, set
\[
 \Snd(k)=2^{12\sqrt{k\log_2(2k)}}.
\]
Then $\chi(\ket{v})\le k$ implies $\xi(\ket{v})\le\Snd(k)\norm{\ket{v}}^2$.
\end{stateboundrestated}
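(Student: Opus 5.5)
The plan is to derive the uniform bound from \cref{thm:state-finite-main} by choosing $r$ to balance the two factors, and to handle small $k$ with the single-elimination estimate $\xi(\ket v)\le(2^k-1)\norm{\ket v}^2$.

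Set $L=\log_2(2k)\ge1$. For $k\ge2$ and $2\le r\le k$, the base-two logarithm of the factor in \eqref{eq:state-finite-bound} is at most
\[
 4r\log_2(2k)+\floor{k/r}+1\le 4rL+\frac kr+1,
\]
using $2^{\floor{k/r}+1}-1<2^{k/r+1}$. The first term penalizes large $r$ and the second penalizes small $r$.

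I will take $r=\ceil{\sqrt{k/L}}$. Then $r\le\sqrt{k/L}+1$ and $k/r\le\sqrt{kL}$, so the exponent is at most
\[
 4\sqrt{kL}+4L+\sqrt{kL}+1=5\sqrt{kL}+4L+1.
\]
This choice needs $r\ge2$, that is, $k>L$. That holds for every $k\ge3$, since $\log_2(2k)<k$ there. Also $r\le k$ holds automatically.

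It remains to absorb $4L+1$ into $7\sqrt{kL}$, i.e.\ to check $4L+1\le7\sqrt{kL}$. Since $L\ge1$, we have $4L+1\le5L$. So it suffices that $5\sqrt L\le7\sqrt k$, i.e.\ $L\le(49/25)k$. This holds for all $k\ge1$, because $\log_2(2k)=1+\log_2k\le k$. The exponent is then at most $12\sqrt{kL}$.

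The small cases go through the bound $2^k-1$ instead. For $k=1$ we get $1\le2^{12}$, and for $k=2$ we get $3\le2^{12\sqrt{4}}$. More generally $2^k-1\le2^{12\sqrt{kL}}$ whenever $k\le144kL$, which is always true. So the second estimate covers every $k$, and in particular the cases where $r=\ceil{\sqrt{k/L}}$ would fall below $2$.

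The zero vector is trivial, so the convention $\Snd(0)=0$ needs no separate argument. For $\chi(\ket v)\le k$, \cref{thm:state-finite-main} applies with this $k$ directly, so no monotonicity of $\Snd$ is required.

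The only real obstacle is the bookkeeping: confirming that the chosen $r$ lies in $[2,k]$, and that the lower-order terms $4L+1$ fit within the slack the constant $12$ provides.
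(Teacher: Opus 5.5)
Your proof is correct and is essentially the paper's argument: choose $r\approx\sqrt{k/L}$ in \cref{thm:state-finite-main}, and fall back on the $2^k-1$ estimate for small $k$. The paper instead takes $r=\lceil\sqrt{k/(2L)}\rceil$ with the cutoff $k\le8L$, whereas your $r=\lceil\sqrt{k/L}\rceil$ needs the fallback only for $k\le2$.

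One slip to fix: $2^k\le2^{12\sqrt{kL}}$ is equivalent to $k\le144L$, not $k\le144kL$. So the single-elimination bound does \emph{not} cover every $k$; it fails once $k>144\log_2(2k)$, for example for all $k\ge1700$. This does not break the proof, because you only need that bound for $k=1,2$, and you check those cases directly.
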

\begin{proof}
The case $k=0$ is immediate. For $k\ge1$, put $L=\log_2(2k)$.
If $k\le8L$, the single-state bound has base-two
logarithm less than $k\le12\sqrt{kL}$. Otherwise take
$r=\ceil{\sqrt{k/(2L)}}$. The logarithm of the factor in
\eqref{eq:state-finite-bound} is at most
\[
 4rL+k/r+1
 \le3\sqrt2\sqrt{kL}+4L+1
 \le12\sqrt{kL},
\]
where the last inequality uses $k>8L$ and $L\ge1$.
\end{proof}

\begin{operatorboundrestated}[Uniform operator inequality]
Define $\Bnd(0)=0$ and, for integers $k\ge1$, set
\[
 \Bnd(k)=2^{12\sqrt{k\log_2(2k)}}.
\]
If $\rC(A)\le k$, then
\[
 \LC(A)^2\le\Bnd(k)2^{-n}\norm A_F^2.
\]
\end{operatorboundrestated}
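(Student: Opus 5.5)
The plan mirrors the proof of the restated state bound, using the finite operator estimates of \cref{thm:finite-main} in place of \cref{thm:state-finite-main}. The case $k=0$ is immediate, since then $A=0$ and both sides vanish. For $k\ge1$ put $L=\log_2(2k)\ge1$. The goal is to show that the logarithm of the factor multiplying $2^{-n}\norm A_F^2$ in one of the two finite bounds is at most $12\sqrt{kL}$.

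\emph{Small $k$.} If $k\le 8L$ (this covers $k=1$ and every case with $k<2$), use the single-elimination estimate \eqref{eq:single-bound}. Since $(4^k-1)/3<4^k$, its logarithm is less than
\[
 2k=2\sqrt k\sqrt k\le2\sqrt k\sqrt{8L}=4\sqrt2\sqrt{kL}\le12\sqrt{kL}.
\]
This case therefore only needs the constraint $k\le 8L$; no finer analysis is required.

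\emph{Large $k$.} If $k>8L$, then $k\ge2$. Take $r=\ceil{\sqrt{k/(2L)}}$. Because $k>8L$ we have $\sqrt{k/(2L)}>2$, so $r\ge2$. Also $r\le\sqrt{k/(2L)}+1\le k$, so $r$ is admissible in \eqref{eq:finite-bound}. Using
\[
 \frac{4^{\floor{k/r}+1}-1}{3}<\frac{4}{3}\,4^{k/r},
\]
the logarithm of the factor in \eqref{eq:finite-bound} is at most $4rL+2k/r+1$.

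Now bound the two $r$-dependent terms. First, $4rL\le4L\sqrt{k/(2L)}+4L=2\sqrt2\sqrt{kL}+4L$. Second, $r\ge\sqrt{k/(2L)}$ gives $2k/r\le2\sqrt{2kL}=2\sqrt2\sqrt{kL}$. The total is therefore at most
\[
 4\sqrt2\sqrt{kL}+4L+1.
\]

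\emph{Absorbing the remainder.} It remains to show $4L+1\le(12-4\sqrt2)\sqrt{kL}$. Since $k>8L$, we have $\sqrt{kL}>\sqrt8\,L$, so $(12-4\sqrt2)\sqrt{kL}>6.3\cdot2.8\,L>17L\ge4L+1$, using $L\ge1$.

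The only potential obstacle is the bookkeeping for the factor $4^{\floor{k/r}}$. Compared with the state case, this doubles the $k/r$ term, so $r$ and the threshold $8L$ must be checked to still fit under the constant $12$. The estimates above show they do, with room to spare.
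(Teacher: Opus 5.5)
Your proof is correct and follows the paper's argument essentially verbatim. Both split at $k\le 8L$, using \eqref{eq:single-bound} below that threshold, and otherwise take $r=\lceil\sqrt{k/(2L)}\rceil$ in \eqref{eq:finite-bound}, bounding the exponent by $4\sqrt2\sqrt{kL}+4L+O(1)\le 12\sqrt{kL}$. Your additive constant $1$ (from $\log_2(4/3)<1$) is slightly sharper than the paper's $2$, and you also write out the admissibility of $r$ and the absorption step, which the paper leaves implicit.
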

\begin{proof}
The case $k=0$ is immediate. For $k\ge1$, let $L=\log_2(2k)$.
If $k\le8L$, \eqref{eq:single-bound} has base-two
logarithm less than $2k\le12\sqrt{kL}$. Otherwise take
$r=\ceil{\sqrt{k/(2L)}}$, which lies in $[2,k]$. The logarithm of the
factor in \eqref{eq:finite-bound} is at most
\[
 4rL+2k/r+2
 \le4\sqrt2\sqrt{kL}+4L+2
 \le12\sqrt{kL}.
\]
The last inequality follows from $k>8L$ (and $L\ge1$).
\end{proof}

\subsection{Logarithmic inversion}\label{sec:app-inversion}
\begin{inversionrestated}[Logarithmic inversion]
For $k,L\ge1$,
\[
 L^2=O\!\bigl(k\log(1+k)\bigr)
 \quad\Longrightarrow\quad
 k=\Omega\!\left(\frac{L^2}{\log(1+L)}\right).
\]
\end{inversionrestated}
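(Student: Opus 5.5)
The plan is to unpack the hypothesis into an explicit inequality and then invert it by a two-case argument on the size of $k$ relative to $L$. By assumption there is a universal constant $c_0\ge1$ with $L^2\le c_0\,k\log_2(1+k)$ for all $k,L\ge1$. We must exhibit a universal $c>0$ such that $k\ge c\,L^2/\log_2(1+L)$. Since $\log_2(1+L)\ge1$ for $L\ge1$, we may freely replace $\log(1+L)$ by $\log_2(1+L)$ up to constants.

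First we observe that $k$ cannot be much larger than a polynomial in $L$ without the conclusion becoming trivial. Concretely, split into two cases. \emph{Case 1:} $k\ge L^2$. Then $k\ge L^2\ge L^2/\log_2(1+L)$ directly, so any $c\le1$ works. \emph{Case 2:} $k<L^2$. Then $\log_2(1+k)\le\log_2(1+L^2)\le 2\log_2(1+L)$, using $1+L^2\le(1+L)^2$. Substituting into the hypothesis gives
\[
 L^2\le c_0\,k\log_2(1+k)\le 2c_0\,k\log_2(1+L),
\]
hence $k\ge L^2/(2c_0\log_2(1+L))$.

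Combining the two cases, $k\ge \min\{1,1/(2c_0)\}\,L^2/\log_2(1+L)=L^2/(2c_0\log_2(1+L))$, which is the claimed $\Omega$ bound. The only point requiring care, and the one I would check first, is the use of $\log(1+k)\le 2\log(1+L)$ in Case 2. This is exactly where the a priori restriction $k<L^2$ is used, and it is what prevents the circularity of having $k$ on both sides. Everything else is bookkeeping of constants, together with the remark that the paper's interchangeable use of $\log(2k)$ and $\log(1+k)$ only changes $c_0$ by a factor of at most two, since $\log_2(1+k)\le\log_2(2k)\le2\log_2(1+k)$ for $k\ge1$.
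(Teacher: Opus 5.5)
Your proof is correct and matches the paper's argument exactly. Both split into the cases $k\ge L^2$ (trivial) and $k<L^2$, where $\log_2(1+k)\le\log_2(1+L^2)\le2\log_2(1+L)$ gives $k\ge L^2/(2c_0\log_2(1+L))$.
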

\begin{proof}
Write $L^2\le Ck\log_2(1+k)$. If $k\ge L^2$, the conclusion is immediate;
otherwise $\log_2(1+k)\le\log_2(1+L^2)\le2\log_2(1+L)$, giving
$k\ge L^2/(2C\log_2(1+L))$.
\end{proof}

\end{document}